\newtheorem{theorem}{Theorem}
\newtheorem{lemma}[theorem]{Lemma}
\newtheorem{definition}[theorem]{Definition}
\newtheorem{example}[theorem]{Example}
\newcommand{\gf}{{\mathrm{GF}}}
\newcommand{\wt}{{\mathtt{wt}}}
\newcommand{\CPRM}{{\mathtt{CPRM}}}
\newcommand{\F}{{\mathbb{F}}}
\newcommand{\Z}{\mathbb{{Z}}}
\newcommand{\m}{\mathbb{M}}
\newcommand{\C}{{\mathcal{C}}}
\newcommand{\ba}{{\mathbf{a}}}
\newcommand{\bb}{{\mathbf{b}}}
\newcommand{\bc}{{\mathbf{c}}}
\newcommand{\Rmnum}[1]{\expandafter\@slowromancap\romannumeral #1@}
\begin{document}

\title{Two classes of constacyclic codes with a square-root-like lower bound
\thanks{Z. Sun's research was supported by The National Natural Science Foundation of China under Grant No. U21A20428. H. Chen's research was supported by The National Natural Science Foundation of China under Grant No. 62032009. C. Ding's research was supported by The Hong Kong Research Grants Council, Proj. No. $16301522$ and in part by the UAEU-AUA joint research grant G00004614. }}

\author{
Tingfang Chen 
\and Zhonghua Sun 
\and 
Conghui Xie 
\and 
Hao Chen                    
\and Cunsheng Ding 
\thanks{
Tingfang Chen is with the Department of Computer Science and Engineering, The Hong Kong University of Science and Technology, Clear Water Bay, Kowloon, Hong Kong, China. (e-mail: tchenba@connect.ust.hk). \newline 
Zhonghua Sun is with the School of Mathematics, Hefei University of Technology, Hefei, Anhui, 230601, China (e-mail:  sunzhonghuas@163.com). \newline 
Conghui Xie and Hao Chen are with the College of Information Science and Technology/Cyber Security, Jinan University, Guangzhou, Guangdong, 510632, China (e-mail: conghui@stu2021.jnu.edu.cn; haochen@jnu.edu.cn).\newline
Cunsheng Ding is with the Department of Computer Science and Engineering, The Hong Kong University of Science and Technology, Clear Water Bay, Kowloon, Hong Kong, China. (e-mail: cding@ust.hk).
}
}

\maketitle

\begin{abstract}
Constacyclic codes over finite fields are an important class of linear codes as they contain distance-optimal codes and linear codes with best known parameters. They are interesting in theory and practice, as they have the constacyclic structure. In this paper, an infinite class of $q$-ary negacyclic codes of length $(q^m-1)/2$ and an infinite class of $q$-ary constacyclic codes of length $(q^m-1)/(q-1)$ are constructed and analyzed. As a by-product, two infinite classes of ternary negacyclic self-dual codes with a square-root-like lower bound on their minimum distances are presented.

\vspace*{.3cm}
\noindent
{\bf Keywords:} constacyclic code,  negacyclic code, self-dual code.

\end{abstract}

\newpage 
\section{Introduction}

Let $q$ be a prime power, $\gf(q)$ be the finite field with $q$ elements and $\gf(q)^*=\gf(q)\backslash \{0\}$. A $q$-ary $[n, k, d]$ linear code $\C$ is $k$-dimensional linear subspace of $\gf(q)^n$ with minimum distance $d$. Let $\lambda \in \gf(q)^*$. The linear code $\C$ is called a {\it $\lambda$-constacyclic} code if $(\lambda c_1,c_0,\ldots, c_{n-1})\in \C$ for any $(c_0,c_1,\ldots, c_{n-1})\in \C$. Define
\begin{align*}
\varphi:~\gf(q)^n&\rightarrow \gf(q)[x]/(x^n-\lambda)\notag	\\
(c_0,c_1,\ldots, c_{n-1})&\mapsto c_0+c_1x+\cdots+ c_{n-1}x^{n-1}.
\end{align*}
Then $\C$ is a $q$-ary $\lambda$-constacyclic code of length $n$ if and only if $\varphi(\C)$ is an ideal of the ring $\gf(q)[x]/(x^n-\lambda)$. Due to this, we will identify $\C$ with $\varphi(\C)$ for any $\lambda$-constacyclic code $\C$. It is known that $\gf(q)[x]/(x^n-\lambda)$ is a {\it principal ideal ring}. Let $\C=(g(x))$ be a $q$-ary $\lambda$-constacyclic code of length $n$, where $g(x)$ is monic and has the smallest degree among non-zero codewords of $\C$. Then $g(x)$ divides $x^n-\lambda$ and is called the {\it generator polynomial} of $\C$. The $h(x)=(x^n-\lambda)/g(x)$ is called the {\it check polynomial} of $\C$.

When $\lambda=1$, $\lambda$-constacyclic codes are classical cyclic codes. When $q$ is odd and $\lambda=-1$, $\lambda$-constacyclic codes are negacyclic codes. Constacyclic codes contain cyclic codes as a subclass. For certain lengths, constacyclic codes have better parameters than cyclic codes \cite{Geor82,SD22c}. Several infinite classes of distance-optimal constacyclic codes were constructed in \cite{DDR11,SHZ23,ZKZL}. Several infinite classes of MDS codes were constructed in \cite{KS90,WDLZ}. Several infinite classes of constacyclic codes with best known parameters were constructed in \cite{CW}. Several infinite classes of binary distance-optimal linear codes were constructed from quaternary constacyclic codes \cite{SHZ23}. Several infinite classes of negacyclic self-dual codes with a square-root-like lower bound were constructed in \cite{SD22c,XCDS}. Recently, Sun et al. constructed two classes of constacyclic codes with very good parameters from Dilix cyclic codes and punctured generalised Reed-Muller codes \cite{SDW}.

Binary Reed-Muller codes were introduced by Reed and Muller \cite{Muller54,Reed54}. If the Reed-Muller codes are punctured in a special coordinate position, the punctured Reed-Muller codes are cyclic. The binary Reed-Muller codes and their punctured codes were later generalised to the $q$-ary case \cite{DGM70,KLP,Lachaud,Sorensen,WW}. Recently, with the punctured binary Reed-Muller codes, Wu et al. constructed binary cyclic codes with best parameters known \cite{WSD}. Sun et al. constructed a class of constacyclic codes that are scalar-equivalent to the projective Reed-Muller codes \cite{SDW}. Inspired by these two works, we use projective Reed-Muller codes to construct constacyclic codes.

The objectives of this paper are the following:
\begin{enumerate}
\item Construct and analyze two classes of constacyclic codes that contain distance-optimal codes.	
\item Study the minimum distances of the duals of these constacyclic codes.
\end{enumerate}
The main contributions of this paper are the following:
\begin{enumerate}
\item Two infinite classes of constacyclic codes $\C$ such that the minimum distances $d(\C)$ and $d(\C^\perp)$ both have a very good lower bound are constructed.
\item Two infinite classes of ternary negacyclic self-dual codes with very good lower bounds on their minimum distances are obtained. 	
\end{enumerate}

The rest of this paper is organized as follows. In Section \ref{sec2}, we present some auxiliary results. In Section \ref{sec3}, we construct and study the first class of constacyclic codes. An infinite class of ternary negacyclic self-dual codes with a square-root-like lower bound is presented. In Section \ref{sec4}, we construct and study the second class of constacyclic codes. In Section \ref{sec5}, we study the subcodes of the second class of constacyclic codes. Another infinite class of ternary negacyclic self-dual codes with a square-root-like lower bound is constructed. In Section \ref{sec6}, we conclude this paper and make some concluding remarks.

\section{Preliminaries}\label{sec2}

Let $q>2$ be a prime power, $r\mid (q-1)$ and $m\geq 2$ be an integer. Let $n=(q^m-1)/r$ and $N=rn=q^m-1$. Let $\Z_{N}=\{0,1,\cdots, N-1\}$ be the ring of integers modulo $N$. For any integer $i$, let $i \bmod N$ denote the unique $h$ with $0\leq h\leq N-1$ such that $h\equiv i~({\rm mod}~N)$. For any $i\in \Z_N$, the {\it $q$-cyclotomic coset} of $i$ modulo $N$ is defined by
$$
C_i^{(q,N)}=\left\{ i q^{j} \bmod N: ~0\leq j\leq \ell_i-1 \right \},	
$$
where $\ell_i$ is the smallest positive integer such that $iq^{\ell_i}\equiv i~({\rm mod}~N)$. The smallest integer in $C_i^{(q,N)}$ is called the {\it coset leader} of $C_i^{(q,N)}$. Let $\Gamma_{(q, r, N)}$ be the set of all coset leaders,
$$\Gamma_{(q, r, N)}^{(1)}=\left \{ i\equiv 1~({\rm mod}~r): i\in \Gamma_{(q,r,N)} \right \}$$
 and $\Omega_{(r,n)}^{(1)}=\{1+ri:~0\leq i\leq n-1 \}$. Then
\begin{equation}\label{EQ::4}
\Omega_{(r, n)}^{(1)}=\bigcup_{i\in \Gamma_{(q, r, N)}^{(1)}}C_i^{(q, N)}.	
\end{equation}

Let $\beta$ be a primitive $N$-th root of unity and $\lambda=\beta^n$. Then $\lambda$ is an element of $\gf(q)$ with order $r$. For each $i\in \Z_N$, let $\m_{\beta^i}(x)$ denote the minimal polynomial of $\beta^i$ over $\gf(q)$. Then
\begin{equation}\label{EQ::6}
\m_{\beta^i}(x)=\prod_{j\in C_i^{(q,N)}} (x-\beta^j),	
\end{equation}
which is irreducible over $\gf(q)$. By Equations (\ref{EQ::4}) and (\ref{EQ::6}), we get that
\begin{equation}\label{EQ::7}
x^n-\lambda=\prod_{i\in \Gamma_{(q, r, N)}^{(1)}} \m_{\beta^i}(x),
\end{equation}
which is the factorization of $x^n-\lambda$ into irreducible factors over $\gf(q)$. For a $q$-ary $\lambda$-constacyclic code $\C$ of length $n$, the set
\begin{equation*}\label{EQ::8}
Z(\C)=\left \{i\in \Omega_{(r,n)}^{(1)}:~g(\beta^i)=0 \right \}	
\end{equation*}
is called the {\it defining set} of $\C$ with respect to the primitive $N$-th root of unity $\beta$. By Equation (\ref{EQ::7}), we get that $Z(\C)$ is the union of some $q$-cyclotomic cosets modulo $N$.

For any $q$-ary linear code $\C$, we use $\dim(\C)$ and $d(\C)$ to denote its dimension and minimum Hamming distance, respectively.
\begin{lemma} \label{lem::1}
\cite[Lemma 4]{KS90} [Constacyclic BCH bound]
 If there are integer $b$ with $b\equiv 1~({\rm mod}~r)$, integer $a$ with $\gcd(a, N)=r$, integer $h$ and integer $\delta$ with $2\leq \delta \leq n$ such that
$$
		\left \{(b+a i) \bmod N:~h\leq i \leq h+\delta-2 \right \}\subseteq Z(\C),
$$
	then $d(\C)\geq \delta$.
\end{lemma}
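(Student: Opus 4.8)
The plan is to adapt the classical Vandermonde proof of the BCH bound to the $\lambda$-constacyclic setting: the prescribed arithmetic progression inside $Z(\C)$ plays the role of a block of consecutive zeros of a cyclic code, and the hypothesis $\gcd(a,N)=r$ is exactly what makes the relevant Vandermonde nodes pairwise distinct, while $b\equiv 1\pmod r$ keeps the progression inside $\Omega_{(r,n)}^{(1)}$ so that the hypothesis is consistent.

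First I would record the standard fact that, by the factorization in Equation~(\ref{EQ::7}) and $\C=(g(x))$, a vector lies in $\C$ if and only if the associated polynomial of degree $<n$ vanishes at $\beta^{j}$ for every $j\in Z(\C)$. (Concretely, writing $c(x)=g(x)m(x)-(x^n-\lambda)e(x)$ and using $\beta^{jn}=\lambda^{j}=\lambda$ for $j\equiv 1\pmod r$ gives $c(\beta^{j})=0$.) Now suppose, towards a contradiction, that $\C$ has a nonzero codeword of Hamming weight $w\le\delta-1$; write it as $c(x)=\sum_{k=1}^{w}c_{i_k}x^{i_k}$ with $c_{i_k}\in\gf(q)^{*}$ and $0\le i_1<i_2<\cdots<i_w\le n-1$. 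Since $w-1\le\delta-2$, the indices $i=h,h+1,\ldots,h+w-1$ all lie in $[h,h+\delta-2]$, so $(b+ai)\bmod N\in Z(\C)$ for these $i$, and the vanishing fact yields the homogeneous system
\[
\sum_{k=1}^{w}c_{i_k}\,\beta^{(b+ai)i_k}=0,\qquad i=h,h+1,\ldots,h+w-1 .
\]

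Next I would examine the $w\times w$ coefficient matrix $M=\bigl(\beta^{(b+ai)i_k}\bigr)$ of this system. Factoring the nonzero scalar $\beta^{(b+ah)i_k}$ out of its $k$-th column gives $M=D\,V$, where $D=\mathrm{diag}\bigl(\beta^{(b+ah)i_1},\ldots,\beta^{(b+ah)i_w}\bigr)$ is invertible and $V=\bigl((\beta^{ai_k})^{\,i-h}\bigr)_{0\le i-h\le w-1,\ 1\le k\le w}$ is a Vandermonde matrix with nodes $\beta^{ai_1},\ldots,\beta^{ai_w}$. These nodes are pairwise distinct: if $\beta^{ai_k}=\beta^{ai_l}$ then $N\mid a(i_k-i_l)$, hence $n=N/r=N/\gcd(a,N)$ divides $i_k-i_l$, which is impossible since $0<|i_k-i_l|\le n-1$. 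Therefore $V$ is nonsingular, so $M$ is nonsingular, and the system forces $c_{i_1}=\cdots=c_{i_w}=0$, contradicting $c_{i_k}\in\gf(q)^{*}$. Hence $\C$ has no nonzero codeword of weight at most $\delta-1$, i.e.\ $d(\C)\ge\delta$.

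The only genuinely nontrivial step is the distinctness of the Vandermonde nodes $\beta^{ai_k}$, where the hypothesis $\gcd(a,N)=r$ (together with $N=rn$ and the bound $0\le i_k\le n-1$) is used essentially; everything else is bookkeeping plus the classical Vandermonde argument. A minor care point is the reduction $\beta^{(b+ai)\bmod N}=\beta^{\,b+ai}$ (legitimate because $\beta$ has order $N$), and checking that the subset of indices chosen for the square system indeed lies in the range provided by the hypothesis, which holds precisely because $w\le\delta-1$.
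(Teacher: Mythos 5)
Your argument is correct and is the standard Vandermonde-determinant proof of the constacyclic BCH bound; the paper itself offers no proof, simply citing \cite[Lemma 4]{KS90}, and your write-up is essentially the argument of that reference. The only cosmetic slip is the factorization $M=DV$: factoring the scalars $\beta^{(b+ah)i_k}$ out of the \emph{columns} gives $M=VD$ (column scaling is right-multiplication by the diagonal matrix), but this has no effect on the nonsingularity conclusion.
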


For a $q$-ary linear code $\C$ of length $n$, the {\it dual code} of $\C$ is defined by
$$
\C^\perp=\left \{\ba \in \gf(q)^n: \ba \bb^t=0~{\rm for~ all~}\bb \in \C \right \},	
$$
where $\bb^t$ denotes the transposition of the vector $\bb$. Let $f(x)=f_0+f_1x+\cdots+f_s x^s \in \gf(q)[x]$, where $f_0 f_s\neq 0$ and $s$ is a positive integer. The {\it reciprocal polynomial} of $f(x)$, denoted by $\widehat{f}(x)$, is defined by $\widehat{f}(x)=f_0^{-1}x^s f(x^{-1})$. According to \cite[Lemma 1]{KS90}, we have the following result.

\begin{lemma}\label{lem::2}
Let $\C$ be a $q$-ary $\lambda$-constacyclic code with length $n$ and check polynomial $h(x)$. Then $\C^\perp$ is a $q$-ary $\lambda^{-1}$-constacyclic code with length $n$ and generator polynomial $\widehat{h}(x)$.
\end{lemma}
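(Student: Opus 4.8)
The plan is to pin down $\C^\perp$ by verifying three elementary facts: that $\widehat{h}(x)$ divides $x^n-\lambda^{-1}$ (so that the $\lambda^{-1}$-constacyclic code $(\widehat{h}(x))$ is well defined and has dimension $n-\deg\widehat{h}$), that this code is contained in $\C^\perp$, and that the two have the same dimension. Once $\C^\perp$ is identified with an ideal of $\gf(q)[x]/(x^n-\lambda^{-1})$, the fact that it is $\lambda^{-1}$-constacyclic comes for free, so there is no need to argue that point separately.

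\emph{Step 1 (a reciprocal identity).} Write $x^n-\lambda=g(x)h(x)$ with $g(x)$ the monic generator polynomial of $\C$; then $h(x)$ is monic of degree $k:=\deg h$, $g(x)$ is monic of degree $n-k$, and $g(0)h(0)=-\lambda\neq 0$. Substituting $x\mapsto x^{-1}$ and multiplying through by $x^n=x^{n-k}x^{k}$ gives
\[
\bigl(x^{n-k}g(x^{-1})\bigr)\bigl(x^{k}h(x^{-1})\bigr)=x^n\bigl(x^{-n}-\lambda\bigr)=1-\lambda x^n=-\lambda\bigl(x^n-\lambda^{-1}\bigr).
\]
Dividing by $g(0)h(0)=-\lambda$ and invoking the definition of the reciprocal polynomial yields $\widehat{g}(x)\,\widehat{h}(x)=x^n-\lambda^{-1}$. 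In particular $\widehat{h}(x)\mid x^n-\lambda^{-1}$ and $\widehat{h}(x)$ is monic, so $(\widehat{h}(x))$ is a $\lambda^{-1}$-constacyclic code of length $n$ and dimension $n-\deg\widehat{h}(x)=n-\deg h(x)=\deg g(x)=\dim\C^\perp$.

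\emph{Step 2 (containment).} Let $c(x)=\sum_{i=0}^{n-1}c_ix^i$ be a codeword of $\C$, say $c(x)=a(x)g(x)$ with $\deg a<k$. Then $c(x)h(x)=a(x)(x^n-\lambda)\equiv 0$ in $\gf(q)[x]/(x^n-\lambda)$, so after reduction modulo $x^n-\lambda$ every coefficient of $c(x)h(x)$ vanishes; for $k\le l\le n-1$ the coefficient of $x^l$ is unaffected by the folding $x^n\equiv\lambda$, and it equals $\sum_{t=0}^{k}h_t c_{l-t}=0$. As $l$ runs over $\{k,k+1,\dots,n-1\}$, these are precisely the statements that the $n-k$ vectors corresponding to $\widehat{h}(x),x\widehat{h}(x),\dots,x^{n-k-1}\widehat{h}(x)$ — a scalar multiple of the reversed-coefficient vector of $h(x)$ together with its non-wrapping shifts — are each orthogonal to $(c_0,\dots,c_{n-1})$. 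Since the codeword was arbitrary and these $n-k$ vectors span $(\widehat{h}(x))$ inside $\gf(q)[x]/(x^n-\lambda^{-1})$, we obtain $(\widehat{h}(x))\subseteq\C^\perp$.

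\emph{Step 3 (conclusion and main obstacle).} Comparing the dimensions from Steps 1 and 2 forces $\C^\perp=(\widehat{h}(x))$; being an ideal of $\gf(q)[x]/(x^n-\lambda^{-1})$, it is $\lambda^{-1}$-constacyclic, and its monic generator polynomial is $\widehat{h}(x)$. The only delicate point is the index bookkeeping in Step 2: one must check that for $k\le l\le n-1$ the coefficient of $x^l$ in $c(x)h(x)$ really is $\sum_{t=0}^{k}h_t c_{l-t}$ with no contribution from $x^n\equiv\lambda$, and then recognise the resulting family of vectors as the non-wrapping shifts of the reciprocal of $h(x)$, carrying the twist $\lambda^{-1}$ rather than $\lambda$. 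Alternatively, Steps 2--3 can be bypassed by the adjoint argument — the $\lambda$-constacyclic and $\lambda^{-1}$-constacyclic shift maps $\sigma_\lambda,\sigma_{\lambda^{-1}}$ on $\gf(q)^n$ satisfy $\langle\sigma_\lambda\bu,\sigma_{\lambda^{-1}}\bv\rangle=\langle\bu,\bv\rangle$ for all $\bu,\bv$, so $\sigma_\lambda(\C)=\C$ gives $\sigma_{\lambda^{-1}}(\C^\perp)=\C^\perp$, after which the generator polynomial is identified by the same reciprocal computation of Step 1 — or simply by quoting \cite[Lemma 1]{KS90}.
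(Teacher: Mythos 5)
Your proof is correct. Note that the paper does not actually prove this lemma: it simply quotes it from \cite[Lemma 1]{KS90}, so there is no in-paper argument to compare against. What you have written is the standard self-contained derivation (the reciprocal identity $\widehat{g}(x)\widehat{h}(x)=x^n-\lambda^{-1}$, the orthogonality of a codeword to the non-wrapping shifts of the reversed check polynomial, and a dimension count), and all three steps check out: the coefficient of $x^l$ in $c(x)h(x)=a(x)(x^n-\lambda)$ for $k\le l\le n-1$ is indeed zero and indeed equals $\sum_{t=0}^{k}h_tc_{l-t}$ with no wrap-around, and the $n-k$ vectors you exhibit do span $(\widehat{h}(x))$. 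The adjoint remark at the end ($\langle\sigma_\lambda\bu,\sigma_{\lambda^{-1}}\bv\rangle=\langle\bu,\bv\rangle$) is also a valid shortcut for the constacyclicity of $\C^\perp$. So your write-up supplies a complete proof where the paper relies on a citation; either is acceptable, but yours has the advantage of being self-contained.
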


Let $\C$ be a $q$-ary $\lambda$-constacyclic code with length $n$ and generator polynomial $g(x)$. We define the following two related codes of the code $\C$.
\begin{enumerate}
\item The {\it complement code}, denoted by $\C^c$, of $\C$ is the $\lambda$-constacyclic code with length $n$ and generator polynomial $h(x)$, where $h(x)=(x^n-\lambda)/g(x)$.
\item The {\it reverse code}, denoted by $\C^{-}$, of $\C$ is the $\lambda^{-1}$-constacyclic code with length $n$ and generator polynomial $\widehat{g}(x)$.
\end{enumerate}
It is easily verified that the following hold.
\begin{enumerate}
\item The codes $\C$ and $\C^{-}$ have the same parameters.	
\item $\C^\perp=(\C^c)^-$.
\item The codes $\C^\perp$ and $\C^c$ have the same parameters.
\end{enumerate}

 According to the constacyclic BCH bound, we have the following result.

\begin{lemma} \label{lem::3}
	If there are integer $b$ with $b\equiv 1~({\rm mod}~r)$, integer $a$ with $\gcd(a, N)=r$, integer $h$ and integer $\delta$ with $2\leq \delta \leq n$ such that
	\begin{equation*}\label{EQ::11}
		\left \{(b+a i) \bmod N:~h\leq i \leq h+\delta-2 \right \}\subseteq (\Omega_{(r, n)}^{(1)} \backslash Z(\C)),
	\end{equation*}
	then $d(\C^\perp)\geq \delta$.
\end{lemma}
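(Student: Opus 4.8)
The plan is to reduce the statement to the ordinary constacyclic BCH bound of Lemma \ref{lem::1}, applied not to $\C$ but to its complement code $\C^c$, exploiting the three elementary facts recorded just above the statement: $\C^\perp=(\C^c)^-$, the codes $\C$ and $\C^-$ always share the same parameters, and hence $\C^\perp$ and $\C^c$ have identical parameters. In particular $d(\C^\perp)=d(\C^c)$, so it suffices to show $d(\C^c)\geq\delta$.

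First I would pin down the defining set of $\C^c$. By definition $\C^c$ is the $\lambda$-constacyclic code of length $n$ with generator polynomial $h(x)=(x^n-\lambda)/g(x)$. Using the factorization (\ref{EQ::7}) of $x^n-\lambda$ into the distinct irreducible factors $\m_{\beta^i}(x)$, $i\in\Gamma_{(q,r,N)}^{(1)}$, together with (\ref{EQ::6}), the roots of $h(x)$ among the powers $\beta^j$ with $j\in\Omega_{(r,n)}^{(1)}$ are exactly those $j$ for which $\beta^j$ is not a root of $g(x)$. Hence $Z(\C^c)=\Omega_{(r,n)}^{(1)}\backslash Z(\C)$.

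Next, the hypothesis of the lemma says precisely that the arithmetic progression $\{(b+ai)\bmod N:~h\leq i\leq h+\delta-2\}$, with $b\equiv 1~({\rm mod}~r)$, $\gcd(a,N)=r$ and $2\leq\delta\leq n$, is contained in $Z(\C^c)$. This is verbatim the hypothesis of the constacyclic BCH bound (Lemma \ref{lem::1}) for the code $\C^c$, so that lemma yields $d(\C^c)\geq\delta$. Combining with $d(\C^\perp)=d(\C^c)$ from the first paragraph gives $d(\C^\perp)\geq\delta$, as desired.

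There is no serious obstacle here; the only point requiring a little care is the bookkeeping in the second paragraph, namely checking that passing from the generator polynomial $g(x)$ to the cofactor $h(x)$ complements the defining set within $\Omega_{(r,n)}^{(1)}$ rather than within all of $\Z_N$. This is automatic because $b\equiv 1~({\rm mod}~r)$ and $r\mid a$ force every term of the progression into the residue class $1$ modulo $r$, i.e. into $\Omega_{(r,n)}^{(1)}$ reduced mod $N$. An alternative route would apply Lemma \ref{lem::2} directly to $\C$ and track how taking reciprocals sends a root $\beta^j$ of $h(x)$ to the root $\beta^{-j}$ of $\widehat{h}(x)$, rewriting the progression accordingly; but going through $\C^c$ avoids this inversion and is cleaner.
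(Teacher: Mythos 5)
Your proposal is correct and follows exactly the paper's route: the paper's one-line proof likewise applies the constacyclic BCH bound to the complement code $\C^c$, whose defining set is $\Omega_{(r,n)}^{(1)}\backslash Z(\C)$, and then invokes the fact that $\C^\perp$ and $\C^c$ have the same parameters. Your write-up merely spells out the bookkeeping (identifying $Z(\C^c)$ via the factorization of $x^n-\lambda$) that the paper leaves implicit.
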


\begin{proof}
The desired result follows from the fact that $\C$ and $\C^c$ have the same parameters.	
\end{proof}

For any $0\leq i\leq N=q^m-1$, the $q$-adic expansion of $i$ is defined by
$$
i_0+i_1q +\cdots+i_{m-1}q^{m-1},	
$$
where $i_0, i_1, \cdots, i_{m-1}\in \{0,1,\cdots, q-1\}$. There are two types of weight definitions as follows.
\begin{enumerate}
\item The {\it $q$-weight} of $i$, denoted by $\wt_q(i)$, is defined by $i_0+i_1+\cdots+i_{m-1}$.
\item  The {\it Hamming weight} of $i$, denoted by $\wt(i)$, is defined by the Hamming weight of the vector $(i_0,i_1, \ldots, i_{m-1})$. 	
\end{enumerate}
It is easily verified that $\wt_q(i)$ and $\wt(i)$ are constants on each $q$-cyclotomic coset $C_b^{(q,N)}$, respectively.

When $r=q-1$ and $n=(q^m-1)/(q-1)$, for any $0\leq \ell\leq m-2$, the constacyclic projective Reed-Muller code, denoted by $\CPRM(q, m, m-2-\ell)$, is the $q$-ary $\lambda$-constacyclic code with length $n=(q^m-1)/(q-1)$ and defining set
\begin{equation}\label{definingset1}
D_{(q, m,\ell)}=\bigcup_{i\in \Gamma_{(q, r, N)}^{(1)} \atop \wt_q(i)<1+(q-1)(\ell+1) }C_i^{(q, N)}
\end{equation}
with respect to the primitive $N$-th root of unity $\beta$. In the following we will use constacyclic projective Reed-Muller codes to construct new constacyclic codes with very good parameters. We need the following lemma.

\begin{lemma}\cite[Corollary 48]{SDW}\label{SDW}
Let $q>2$ be a prime power, $m\geq 2$ be an integer and $0\leq \ell \leq m-2$. Then $d(\CPRM(q, m, m-2-\ell))=3  \cdot q^{\ell}$.
\end{lemma}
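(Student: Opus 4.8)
The plan is to reduce to the classical projective Reed--Muller codes, whose minimum distance is given by S\o rensen's theorem; the ``corollary'' status of the statement in \cite{SDW} points to exactly this.

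\emph{Step 1 (reduce to a classical $\mathrm{PRM}$ code).} Since a $\lambda$-constacyclic code and its reverse code have the same parameters, it suffices to compute $d\bigl(\CPRM(q,m,m-2-\ell)^{-}\bigr)$. Passing to the reverse code applies the reciprocal map $i\mapsto N-i$ to the defining set; using $\wt_q(i)+\wt_q(N-i)=(q-1)m$ (for $i\not\equiv0\pmod N$) together with $\wt_q(i)\equiv i\equiv1\pmod{q-1}$, the defining set \eqref{definingset1} is carried to
\[
\bigl\{\,j:\ 0\le j<N,\ j\equiv -1\pmod{q-1},\ \wt_q(j)>(q-1)(m-1-\ell)-1\,\bigr\}.
\]
Setting $\nu:=(q-1)(m-1-\ell)-1$, this is precisely the set of frequencies at which the generator polynomial of the $\lambda^{-1}$-constacyclic realization of $\mathrm{PRM}_q(\nu,m-1)$ over $\PG(m-1,q)$ vanishes — the point being the classical fact (proved in \cite{SDW}, and the theorem of which Corollary~48 is the corollary) that evaluating the degree-$\nu$ monomials in $m$ variables at fixed representatives $\beta^{j}$ of the points of $\PG(m-1,q)$ yields constacyclic codewords whose nonzero spectrum lies on the frequencies of $q$-weight $\le\nu$ (via the trace expansions of the coordinate functions), so that the zero set is $\{\wt_q>\nu\}$. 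Hence $\CPRM(q,m,m-2-\ell)$ has the same parameters as $\mathrm{PRM}_q(\nu,m-1)$; a quick consistency check is $\dim=n-|D_{(q,m,\ell)}|$. The places that genuinely need care are the boundary of the weight inequality and the fact that the homogeneous vanishing ideal of $\PG(m-1,q)$ is generated only in degree $q+1$ — this is exactly where the hypotheses $q>2$ and $0\le\ell\le m-2$ enter.

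\emph{Step 2 (S\o rensen's formula and the arithmetic).} For $1\le\nu\le(m-1)(q-1)$, writing $\nu-1=a(q-1)+b$ with $0\le b\le q-2$, one has $d\bigl(\mathrm{PRM}_q(\nu,m-1)\bigr)=(q-b)\,q^{\,m-2-a}$. Here $\nu-1=(q-1)(m-1-\ell)-2=(q-1)(m-2-\ell)+(q-3)$, and since $q>2$ gives $0\le q-3\le q-2$, the Euclidean division yields $a=m-2-\ell$ and $b=q-3$; therefore
\[
d\bigl(\CPRM(q,m,m-2-\ell)\bigr)=d\bigl(\mathrm{PRM}_q(\nu,m-1)\bigr)=\bigl(q-(q-3)\bigr)q^{\,m-2-(m-2-\ell)}=3q^{\ell}.
\]
The hypothesis of S\o rensen's theorem holds, since $0\le\ell\le m-2$ and $q\ge3$ force $q-2\le\nu\le(m-1)(q-1)-1$, i.e.\ $1\le\nu<(m-1)(q-1)$.

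\emph{The main obstacle, and a self-contained alternative.} All the content is in Step~1; Step~2 is bookkeeping. If one prefers to stay inside the machinery of the present paper, one proves the two inequalities separately. For $d\ge3q^{\ell}$ the constacyclic BCH bound (Lemma~\ref{lem::1}) already does it when $q=3$: the integers $i\equiv1\pmod2$ with $\wt_3(i)\le2\ell+1$ are exactly $1,3,5,\dots,2\cdot3^{\ell+1}-3$, an arithmetic progression of step $q-1=2$ and length $3^{\ell+1}-1=3q^{\ell}-1$ (the next integer, $2\cdot3^{\ell+1}-1$, has base-$3$ expansion $1\underbrace{2\cdots2}_{\ell+1}$, of weight $2\ell+3$), so Lemma~\ref{lem::1} applies with $b=1$, $a=q-1$, $h=0$, $\delta=3q^{\ell}$; for $q>3$ a single BCH interval is too short, and one needs the Hartmann--Tzeng bound or, equivalently via Step~1, Serre's bound on the number of $\gf(q)$-rational points of a degree-$\nu$ hypersurface in $\PG(m-1,q)$. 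For the matching bound $d\le3q^{\ell}$ one exhibits a single low-weight codeword: in the projective picture, the evaluation of a suitably chosen product of $\nu$ linear forms whose common zero set in $\PG(m-1,q)$ has exactly $n-3q^{\ell}$ points.
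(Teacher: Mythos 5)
The paper does not prove this statement at all: it is imported verbatim as Corollary~48 of \cite{SDW}, so there is no internal proof to compare yours against. Your reconstruction is the standard route and, where it is concrete, it checks out: the reversal map $i\mapsto N-i$ does carry the defining set $D_{(q,m,\ell)}$ onto $\{j\equiv -1 \pmod{q-1}:\ \wt_q(j)>\nu\}$ with $\nu=(q-1)(m-1-\ell)-1$, and the division $\nu-1=(q-1)(m-2-\ell)+(q-3)$ in S\o rensen's formula does give $(q-(q-3))\,q^{\ell}=3q^{\ell}$, with $q>2$ exactly what keeps the remainder $q-3$ in the admissible range and $0\le\ell\le m-2$ what keeps $\nu$ in the range of the formula. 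The problem is that all of the mathematical content sits in your Step~1 --- the identification of $\CPRM(q,m,m-2-\ell)$ with $\mathrm{PRM}_q(\nu,m-1)$ up to scalar equivalence --- and there you prove nothing: you appeal to ``the classical fact (proved in \cite{SDW})'', i.e.\ to the very theorem of which the statement being proved is the corollary. That is almost certainly a faithful account of how \cite{SDW} obtains Corollary~48, but as a proof it is a citation wrapped in a change of variables, and the delicate points you yourself flag (the boundary of the weight inequality, the structure of the vanishing ideal of $\PG(m-1,q)$) are precisely the ones left unaddressed.

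Two smaller points. In your ``self-contained alternative'' the claim that the odd $i$ with $\wt_3(i)\le 2\ell+1$ are \emph{exactly} $1,3,\dots,2\cdot3^{\ell+1}-3$ is false (e.g.\ $i=3^{m-1}$ has weight $1$ and lies far outside that interval); what is true, and all that Lemma~\ref{lem::1} needs, is that this arithmetic progression is \emph{contained} in the defining set, which your digit count does establish. Even so, that alternative only delivers the lower bound $d\ge 3q^{\ell}$ for $q=3$; for $q>3$, and for the matching upper bound via a product of $\nu$ linear forms, you give only pointers, so the lemma is not actually reproved there either.
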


\section{The first class of constacyclic codes} \label{sec3}

We follow the notation in the previous sections, but fix $r=2$, let $q$ be odd, $n=(q^m-1)/2$ and $N=rn=q^m-1$ throughout this section. In the following we present a split of
$$\Omega_{(r, n)}^{(1)}:=\{1+2i:0\leq i\leq n-1\}$$
based on the Hamming weight. Subsequently, negacyclic codes with very good parameters are constructed.

\begin{lemma} \label{lem:1}
For $i \in \{0,1\}$, define
\begin{eqnarray*}\label{dsun:2}
T_{(i, n)}=\left\{ h:~\wt(h) \equiv i~({\rm mod}~2), ~h\in \Omega_{(r, n)}^{(1)} \right \}.
\end{eqnarray*}
Then $T_{(i, n)}$ is the union of some $q$-cyclotomic cosets, and $\Omega_{(r, n)}^{(1)}=T_{(0,n)} \cup T_{(1, n)}$.
\end{lemma}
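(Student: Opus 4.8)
The plan is to derive both parts of the lemma from the single observation, already recorded just before the statement, that the Hamming weight $\wt(\cdot)$ is constant on every $q$-cyclotomic coset modulo $N$, combined with the coset decomposition~\eqref{EQ::4} of $\Omega_{(r,n)}^{(1)}$.

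First I would dispose of the easy part. By definition $T_{(0,n)}$ and $T_{(1,n)}$ are both subsets of $\Omega_{(r,n)}^{(1)}$, and conversely any $h\in\Omega_{(r,n)}^{(1)}$ satisfies $\wt(h)\equiv 0$ or $\wt(h)\equiv 1~({\rm mod}~2)$, so $h$ belongs to $T_{(0,n)}$ or $T_{(1,n)}$. Hence $\Omega_{(r,n)}^{(1)}=T_{(0,n)}\cup T_{(1,n)}$ (in fact a disjoint union).

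Next I would recall why $\wt$ is constant on a coset $C_b^{(q,N)}$: writing the $q$-adic expansion of $h$ as $(h_0,h_1,\dots,h_{m-1})$, the relation $q\cdot q^{m-1}=q^m\equiv 1~({\rm mod}~N)$ shows that $hq\bmod N$ has $q$-adic expansion $(h_{m-1},h_0,\dots,h_{m-2})$, the cyclic shift of that of $h$; iterating, every element of $C_b^{(q,N)}$ has a $q$-adic digit string that is a cyclic rotation of that of $b$, and a cyclic rotation preserves the number of nonzero digits. This is exactly the remark stated in Section~\ref{sec2}, so one may alternatively cite it as a black box.

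Finally I would conclude. By~\eqref{EQ::4}, the cosets $C_i^{(q,N)}$ with $i\in\Gamma_{(q,r,N)}^{(1)}$ partition $\Omega_{(r,n)}^{(1)}$. Fix one such coset $C$. By the previous paragraph $\wt(h)\bmod 2$ takes a common value $i_C\in\{0,1\}$ over all $h\in C$, so $C\subseteq T_{(i_C,n)}$ while $C\cap T_{(1-i_C,n)}=\emptyset$. Therefore
\[
T_{(i,n)}=\bigcup_{\substack{i'\in\Gamma_{(q,r,N)}^{(1)}\\ \wt(i')\equiv i~({\rm mod}~2)}} C_{i'}^{(q,N)},
\]
which displays $T_{(i,n)}$ as a union of $q$-cyclotomic cosets modulo $N$, as required. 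There is essentially no obstacle in this argument; the only point that deserves care is to phrase everything in terms of cyclotomic cosets modulo $N=q^m-1$, so that multiplication by $q$ is literally a cyclic shift of $q$-adic digits — which is precisely the setting supplied by~\eqref{EQ::4}, and is also why the parity of $\wt$ (rather than, say, $\wt$ itself) is the natural invariant to split on.
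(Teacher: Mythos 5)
Your proposal is correct and follows essentially the same route as the paper: the paper's proof also rests entirely on the observation that $\wt(h)$ is constant on each $q$-cyclotomic coset $C_h^{(q,N)}$ (which it states as ``easily verified'') and then notes the second claim is obvious. You merely supply the details the paper omits, namely the cyclic-shift-of-$q$-adic-digits justification and the explicit display of $T_{(i,n)}$ as a union of cosets indexed by coset leaders.
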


\begin{proof}
It is easily verified that $\wt(h)$	is a constant on each $q$-cyclotomic coset $C_h^{(q, N)}$. Then, for each $i\in \{0,1\}$, $T_{(i, n)}$ is the union of some $q$-cyclotomic cosets. The second desired result is obvious.
\end{proof}

For $i\in \{0,1\}$, let $\C_{(i, n)}$ be the $q$-ary negacyclic code of length $n$ with defining set $T_{(i, n)}$, and let $\C_{(i, n)}^{\perp}$ be the dual of $\C_{(i, n)}$. Since $\Omega_{(r, n)}^{(1)}=T_{(0,n)} \cup T_{(1, n)}$, we can deduce that $\C_{(i, n)}^{\perp}$ and $\C_{(i\oplus 1, n)}$ have the same parameters, where $\oplus$ denotes the modulo-2 addition.
According to the Magma experimental data in Table \ref{DS:1}, the codes $\C_{(1, n)}$ and $\C_{(1,n)}^{\perp}$ have very good parameters. Although $\C_{(1,n)}^\perp$ or $\C_{(1, n)}$ does not achieve the best known minimum distance compared with linear codes of the same length and dimension sometimes, it has the negacyclic structure (this is an advantage).  Hence, the two codes $\C_{(1,n)}$ and $\C_{(1,n)}^\perp$ are interesting. This is the motivation for constructing and studying these negacyclic codes.

\begin{table}
\begin{center}
\caption{Examples of codes $\C_{(1,n)}$ and $\C_{(1,n)}^\perp$}
\label{DS:1}
\begin{tabular}{|c|c|c|c|c|c|}
\hline   $q$ & $m$ & $\C_{(1,n)}$ & Optimality & $\C_{(1,n)}^\perp$ & Optimality\\
\hline  $3$  & $2$ & $[4,2,3]$  & optimal linear code & $[4,2,3]$ & optimal linear code   \\
\hline  $3$  & $3$ & $[13,6,6]$  & optimal linear code & $[13,7,5]$ &  optimal linear code  \\
\hline  $3$  & $4$ & $[40,20,9]$  & $d_{\rm best}=12$ & $[40,20,9]$ & $d_{\rm best}=12$   \\
\hline  $5$  & $2$ & $[12,8,4]$  & optimal linear code & $[12,4,6]$ &  $d_{\rm optimal}=8$ \\
\hline  $5$  & $3$ & $[62,24,22]$  & $d_{\rm best}=23$ & $[62,38,12]$ & $d_{\rm best}=14$  \\
\hline  $7$  & $2$ & $[24,18,5]$  & best linear code known & $[24,6,14]$ & $d_{\rm best}=16$  \\
\hline  $9$  & $2$ & $[40,32,6]$  & best linear code known & $[40,8,20]$ & $d_{\rm best}=27$   \\
\hline
\end{tabular}
\end{center}
\end{table}

Next, we investigate the parameters of these two classes of negacyclic codes.
It follows from $\Omega_{(r, n)}^{(1)}=T_{(0,n)} \cup T_{(1, n)}$ and $T_{(0,n)} \cap  T_{(1, n)}=\emptyset$ that
\begin{equation}\label{ds::3}
	\dim(\C_{(i, n)})=|T_{(i\oplus 1,n)}|
\end{equation}
 for each $i\in \{0,1\}$. To determine $\dim(\C_{(i, n)})$, it suffices to determine $|T_{(i, n)}|$. We will need the following lemma.

\begin{lemma}\cite{SDW} \label{lem:2}
Let $t$ be a positive integer. Then the number of solutions $(x_1,x_2,\ldots, x_t )$ with $1\leq x_i\leq q-1$ to the congruence $x_1+x_2+\cdots+x_t\equiv 1~({\rm mod}~2)$ is equal to $(q-1)^t/2$.	
\end{lemma}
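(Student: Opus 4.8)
The plan is to prove the statement by a parity-flipping involution on the set of all admissible tuples, which shows directly that the odd-sum tuples and the even-sum tuples are equinumerous. Throughout I use that $q$ is odd; this is forced by the statement itself, since the right-hand side $(q-1)^t/2$ must be an integer, and it is also the standing hypothesis of the section in which the lemma is applied (where $r=2$ requires $q$ odd).

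First I would set $S=\{1,2,\ldots,q-1\}^t$, so that $|S|=(q-1)^t$, and partition $S$ into the subset $S_1$ of tuples with $x_1+\cdots+x_t$ odd and the subset $S_0$ of tuples with even sum. The goal then becomes to show $|S_1|=(q-1)^t/2$, which is equivalent to $|S_0|=|S_1|$.

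Next I would define the map $\sigma\colon S\to S$ by $\sigma(x_1,x_2,\ldots,x_t)=(q-x_1,x_2,\ldots,x_t)$. Since $1\leq x_1\leq q-1$ implies $1\leq q-x_1\leq q-1$, the map $\sigma$ sends $S$ into $S$, and it is clearly an involution (applying it twice returns $x_1$), hence a bijection of $S$. The key point is that $\sigma$ reverses the parity of the coordinate sum: because $q$ is odd, $q-x_1\equiv x_1+1\pmod 2$, so $(q-x_1)+x_2+\cdots+x_t\equiv(x_1+\cdots+x_t)+1\pmod 2$. Therefore $\sigma$ maps $S_0$ bijectively onto $S_1$ and vice versa, giving $|S_0|=|S_1|$. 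Combined with $|S_0|+|S_1|=|S|=(q-1)^t$, this yields $|S_1|=(q-1)^t/2$, as claimed.

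There is really no hard step here: the only thing that must be checked carefully is that $\sigma$ flips the parity of the sum, and this is precisely where the oddness of $q$ enters (for even $q$ the asserted count $(q-1)^t/2$ is not even an integer). As an alternative route, one could argue via the sign identity $|S_0|-|S_1|=\prod_{i=1}^t\left(\sum_{x=1}^{q-1}(-1)^x\right)$; since $q-1$ is even, each inner sum vanishes, so $|S_0|-|S_1|=0$ and the conclusion follows as before. A third option is a one-line induction on $t$, using that $\{1,\ldots,q-1\}$ contains exactly $(q-1)/2$ odd and $(q-1)/2$ even integers, so that the odd-sum count doubles correctly at each step.
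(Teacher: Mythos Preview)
Your argument is correct. The involution $\sigma(x_1,\ldots,x_t)=(q-x_1,x_2,\ldots,x_t)$ is well-defined on $\{1,\ldots,q-1\}^t$, and since $q$ is odd it flips the parity of the coordinate sum, giving $|S_0|=|S_1|=(q-1)^t/2$ immediately. The alternative sign-sum and induction arguments you sketch are also valid.

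As for comparison with the paper: there is nothing to compare against. The paper does not supply a proof of this lemma; it is quoted from \cite{SDW} and used as a black box in the proof of Lemma~\ref{lem:3}. Your involution proof is a clean self-contained justification, and your observation that the hypothesis ``$q$ odd'' is both necessary (for $(q-1)^t/2$ to be an integer) and consistent with the standing assumptions of Section~\ref{sec3} is exactly right.
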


Let $0\leq i\leq q^m-1$ and $i=i_0+i_1q+\cdots+i_{m-1}q^{m-1}$, where $0\leq i_j\leq q-1$. Define $${\rm zs}_\epsilon(i)=|\{h: i_h=\epsilon,~0\leq h\leq m-1 \}|,$$
where $0\leq \epsilon\leq q-1$. By definition, $\wt(i)=m-{\rm zs}_0(i)$ and $\wt(q^m-1-i)=m-{\rm zs}_{q-1}(i)$. Furthermore, we have the following results.

\begin{lemma}\label{lem:3}
Let notation be the same as before. The following hold.
\begin{enumerate}
\item $|T_{(1,n)}|=[q^m-(2-q)^m]/4$ and $|T_{(0,n)}|=[q^m+(2-q)^m-2]/4$.
\item Let $q=3$. Then $T_{(1, n)}=\{q^m-1-i: i\in T_{(0,n)} \}$ if $m$ is even, and for any $i\in T_{(h, n)}$, $q^m-1-i\in T_{(h, n)}$ if $m$ is odd.
\end{enumerate}
\end{lemma}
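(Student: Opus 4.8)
The plan is to compute $|T_{(1,n)}|$ by a generating-function / counting argument over $q$-adic digit patterns, and then to obtain $|T_{(0,n)}|$ by complementation within $\Omega_{(r,n)}^{(1)}$. Recall that every element of $\Omega_{(r,n)}^{(1)}=\{1+2i:0\le i\le n-1\}$ is exactly an integer $h$ with $0\le h\le q^m-2$ and $h\equiv 1\pmod 2$. Working with the $q$-adic expansion $h=h_0+h_1q+\cdots+h_{m-1}q^{m-1}$, the condition $h\equiv 1\pmod 2$ translates (since $q$ is odd, so $q^j\equiv 1\pmod 2$ for all $j$) into $h_0+h_1+\cdots+h_{m-1}\equiv 1\pmod 2$, i.e. $\wt_q(h)$ is odd. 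So $\Omega_{(r,n)}^{(1)}$ is the set of $h\in\{0,\dots,q^m-1\}$ with odd $q$-digit-sum, and $h=q^m-1$ is excluded because its digit sum is $m(q-1)$, which has the same parity as... one must check: $q-1$ is even, so $\wt_q(q^m-1)=m(q-1)$ is even, hence $q^m-1\notin\Omega_{(r,n)}^{(1)}$ automatically, and no separate subtraction is needed there. Meanwhile $0$ has digit sum $0$ (even), also automatically excluded. Thus $|\Omega_{(r,n)}^{(1)}|=\#\{h: \wt_q(h)\text{ odd}\}$, which by a standard parity count equals $(q^m-1)/2=n$, consistent with the definition.

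Next I would split $T_{(1,n)}$ according to how many digits of $h$ are zero. Fix the subset $S\subseteq\{0,1,\dots,m-1\}$ of positions where $h_j\ne 0$; if $|S|=t$ then $\wt(h)=t$, so $h\in T_{(1,n)}$ forces $t$ odd. On those $t$ positions each digit ranges over $\{1,\dots,q-1\}$, and the constraint $\wt_q(h)\equiv 1\pmod 2$ becomes $\sum_{j\in S}h_j\equiv 1\pmod 2$; by Lemma~\ref{lem:2} the number of such digit tuples is $(q-1)^t/2$ (for $t\ge 1$; the case $t=0$ does not arise since $0$ is even). Summing over all choices of $S$ with $|S|=t$ odd,
\begin{equation*}
|T_{(1,n)}|=\sum_{t\text{ odd}}\binom{m}{t}\frac{(q-1)^t}{2}
=\frac{1}{2}\cdot\frac{(1+(q-1))^m-(1-(q-1))^m}{2}
=\frac{q^m-(2-q)^m}{4}.
\end{equation*}
For $|T_{(0,n)}|$ I would use $|T_{(0,n)}|=|\Omega_{(r,n)}^{(1)}|-|T_{(1,n)}|=n-|T_{(1,n)}|=\tfrac{q^m-1}{2}-\tfrac{q^m-(2-q)^m}{4}=\tfrac{q^m+(2-q)^m-2}{4}$, which gives Part (1).

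For Part (2), specialize $q=3$, so digits lie in $\{0,1,2\}$ and $q^m-1-i$ is the digitwise complement $i_j\mapsto 2-i_j$. This complementation fixes the set of nonzero positions when... no: a digit $i_j=1$ stays nonzero, $i_j=2$ stays nonzero, $i_j=0$ becomes $2$. So $\wt(q^m-1-i)=m-{\rm zs}_2(i)$, as already noted in the excerpt, whereas $\wt(i)=m-{\rm zs}_0(i)$. I would track the parity change: passing from $i$ to $q^m-1-i$ toggles exactly the zero-digits and $2$-digits (a $0$ becomes a $2$ and vice versa) while fixing every $1$-digit, so the number of nonzero digits changes by ${\rm zs}_0(i)-{\rm zs}_2(i)$; hence $\wt(q^m-1-i)-\wt(i)\equiv {\rm zs}_0(i)+{\rm zs}_2(i)\pmod 2$. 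Now ${\rm zs}_0(i)+{\rm zs}_2(i)=m-{\rm zs}_1(i)$, and modulo $2$ the $q$-digit sum satisfies $\wt_q(i)={\rm zs}_1(i)+2\,{\rm zs}_2(i)\equiv {\rm zs}_1(i)\pmod 2$; since $i\in\Omega_{(r,n)}^{(1)}$ means $\wt_q(i)$ is odd, ${\rm zs}_1(i)$ is odd, so ${\rm zs}_0(i)+{\rm zs}_2(i)=m-{\rm zs}_1(i)\equiv m-1\pmod 2$. Therefore $\wt(q^m-1-i)\equiv \wt(i)+m-1\pmod 2$: if $m$ is odd this is $\equiv\wt(i)$, so the map $i\mapsto q^m-1-i$ preserves each $T_{(h,n)}$; if $m$ is even it flips the parity, so it sends $T_{(0,n)}$ bijectively onto $T_{(1,n)}$, giving $T_{(1,n)}=\{q^m-1-i:i\in T_{(0,n)}\}$. (One should note $i\mapsto q^m-1-i$ maps $\Omega_{(r,n)}^{(1)}$ to itself, since $\wt_q$-parity of $q^m-1-i$ is $m(q-1)-\wt_q(i)\equiv\wt_q(i)\pmod2$ as $q-1$ is even; and it maps $q$-cyclotomic cosets to $q$-cyclotomic cosets, so it is well-defined on the index set.)

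The routine parts are the binomial identity and the digit bookkeeping. The one genuinely delicate point, and the place I would be most careful, is the parity accounting in Part (2): keeping straight which digits are toggled by complementation, and correctly reducing ${\rm zs}_0(i)+{\rm zs}_2(i)$ modulo $2$ using the membership constraint $\wt_q(i)$ odd (equivalently ${\rm zs}_1(i)$ odd). Everything else follows mechanically from Lemma~\ref{lem:2} and the observation that $q$ odd makes $q$-adic parity coincide with ordinary parity.
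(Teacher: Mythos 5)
Your proposal is correct and follows essentially the same route as the paper: Part (1) is the same digit-position count via Lemma \ref{lem:2} and the binomial theorem (you merely parametrize by the number of nonzero digits $t=m-a$, which lets you avoid the paper's case split on the parity of $m$), and Part (2) is the same parity bookkeeping with ${\rm zs}_0$, ${\rm zs}_1$, ${\rm zs}_2$ under the digitwise complement $i\mapsto 3^m-1-i$. No gaps.
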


\begin{proof}
1. By definition,
\begin{align*}
|T_{(1,n)}|&=|\{ 1\leq i\leq q^m-1: {\rm zs}_0(i)\equiv m+1~({\rm mod}~2),~i\equiv 1~({\rm mod}~2)\} |\\
&=\begin{cases}
	\sum_{0\leq a\leq m~{\rm is~even}} \binom{m}{a} \frac{(q-1)^{m-a}}2 & ~{\rm if~}m~{\rm is~odd},\\
	\sum_{1\leq a\leq m~{\rm is~odd}} \binom{m}{a} \frac{(q-1)^{m-a}}{2} & ~{\rm if~}m~{\rm is~even}
\end{cases}\\
&=[q^m-(2-q)^m]/4.
\end{align*}

2. Let $q=3$ and $i\in \Omega_{(2,n)}^{(1)}$. Then ${\rm zs}_1(i)\equiv i\equiv 1~({\rm mod}~2)$ and  ${\rm zs}_0(i)+{\rm zs}_2(i)\equiv m-1~({\rm mod}~2)$. By defintion,
\begin{align*}
\wt(3^m-1-i)=m-{\rm zs}_{2}(i)\equiv 1+{\rm zs}_0(i) ~({\rm mod}~2)\equiv \begin{cases}
 	1+\wt(i) ~({\rm mod}~2) ~&{\rm if}~m~{\rm is~even},\\
 	\wt(i)~({\rm mod}~2) ~&{\rm if}~m~{\rm is~odd}.\\
 \end{cases}	
\end{align*}
The desire results then follow. This completes the proof. 	
\end{proof}

The result documented in the following lemma is well known.

\begin{lemma}
Let $h$ be a positive integer and let $m/\gcd(h, m)$ be odd. Then
$$\gcd(q^h+1, q^m-1)=2.$$	
\end{lemma}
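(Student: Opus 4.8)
In this section $q$ is odd, so both $q^{h}+1$ and $q^{m}-1$ are even; hence $2$ divides $g:=\gcd(q^{h}+1,\,q^{m}-1)$, and it suffices to prove $g\mid 2$. The plan is first to bound $g$ by a gcd of two numbers of the form $q^{a}-1$. Since $q^{h}+1$ divides $q^{2h}-1$, we have $g\mid\gcd(q^{2h}-1,\,q^{m}-1)$, and the standard identity $\gcd(q^{a}-1,q^{b}-1)=q^{\gcd(a,b)}-1$ then gives $g\mid q^{\gcd(2h,m)}-1$.

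The crux of the argument is the identity $\gcd(2h,m)=\gcd(h,m)$, and this is the one place where the hypothesis is used. Write $d=\gcd(h,m)$, $h=dh_{1}$, $m=dm_{1}$, so that $\gcd(h_{1},m_{1})=1$ and, by assumption, $m_{1}=m/\gcd(h,m)$ is odd. Then $\gcd(2h,m)=d\cdot\gcd(2h_{1},m_{1})$; since $\gcd(h_{1},m_{1})=1$ any common divisor of $2h_{1}$ and $m_{1}$ divides $2$, so $\gcd(2h_{1},m_{1})=\gcd(2,m_{1})$, which is $1$ because $m_{1}$ is odd. Hence $\gcd(2h,m)=d$. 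I expect this small piece of divisibility bookkeeping to be the only real obstacle; once it is in place the rest is immediate.

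Combining the above, $g\mid q^{d}-1$. Since $d\mid h$, we have $q^{d}-1\mid q^{h}-1$, so $g\mid q^{h}-1$ as well; together with $g\mid q^{h}+1$ this forces $g\mid(q^{h}+1)-(q^{h}-1)=2$. With $2\mid g$ already established, we conclude $g=2$, which proves the lemma.
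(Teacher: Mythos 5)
Your proof is correct. The paper itself gives no argument for this lemma (it is simply labelled ``well known''), and what you have written is the standard proof: reduce to $\gcd(q^{2h}-1,q^m-1)=q^{\gcd(2h,m)}-1$, show $\gcd(2h,m)=\gcd(h,m)$ from the hypothesis that $m/\gcd(h,m)$ is odd, and conclude $g\mid 2$ from $g\mid q^h-1$ and $g\mid q^h+1$. You were also right to note explicitly that the statement needs $q$ odd (otherwise both numbers are odd and the gcd is $1$); the lemma omits this hypothesis, but it is in force throughout Section~3 where the lemma is used.
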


When $m\geq 3$ is odd, the parameters of the code $\C_{(1,n)}$ and its dual are documented in the following theorem.

\begin{theorem}\label{thm::6}
Let $m\geq 3$ be odd. Then $\C_{(1,n)}$ has parameters
$$\left[(q^m-1)/2, \, [q^m+(2-q)^m-2]/4, \, d \geq q^{(m-1)/2}+1+\epsilon  \right],$$
where $\epsilon =0$ if $q=3$, and $\epsilon=q-2$ if $q\geq 5$. The dual code $\C_{(1,n)}^\perp$ has parameters	
$$\left[(q^m-1)/2, \, [q^m-(2-q)^m]/4, \, d^\perp \geq (q-1)q^{(m-3)/2}+1 \right].$$
\end{theorem}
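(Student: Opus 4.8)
The plan is to establish the three pieces of data separately: the length is immediate, the dimension follows from the counting already done, and the two distance bounds come from applying the constacyclic BCH bound (Lemma~\ref{lem::1}) and Lemma~\ref{lem::3} respectively, once we locate a long enough arithmetic progression inside the relevant defining set.

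First, the length $(q^m-1)/2$ is the value $n$ fixed throughout the section. For the dimension of $\C_{(1,n)}$: by \eqref{ds::3} we have $\dim(\C_{(1,n)})=|T_{(0,n)}|$, and Lemma~\ref{lem:3}(1) gives $|T_{(0,n)}|=[q^m+(2-q)^m-2]/4$. Since $m$ is odd, $(2-q)^m=-(q-2)^m$, so this reads $[q^m-(q-2)^m-2]/4$, a nonnegative integer; this is the stated dimension. Dually, $\dim(\C_{(1,n)}^\perp)=n-\dim(\C_{(1,n)})=(q^m-1)/2-[q^m-(q-2)^m-2]/4=[q^m+(q-2)^m]/4=[q^m-(2-q)^m]/4$, again using that $m$ is odd. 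So the dimensions are settled.

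For the lower bound on $d(\C_{(1,n)})$, the key point is that the defining set is $T_{(1,n)}=\{h\in\Omega_{(r,n)}^{(1)}:\wt(h)\text{ odd}\}$, and we must exhibit a run of $\delta-1$ consecutive elements of an arithmetic progression $\{(b+ai)\bmod N\}$ (with $b\equiv 1\pmod 2$, $\gcd(a,N)=2$) all having odd Hamming weight. The natural choice is to take $a=2$ and look at the progression $b, b+2, b+4,\dots$ for a well-chosen starting point $b$; equivalently one works with the half-integers $i=(h-1)/2$ and the positions $h=1,3,5,\dots$. A standard device here (cf.\ the BCH-type arguments for punctured Reed--Muller and projective Reed--Muller codes) is to choose $b$ so that the block of consecutive odd integers $b, b+2,\dots$ corresponds, in the $q$-adic expansion, to varying only the least significant digit while keeping the higher digits fixed at a pattern of prescribed Hamming weight parity; then one argues that among $q^{(m-1)/2}+1+\epsilon$ consecutive such integers the Hamming weight stays odd, or one uses the reverse-code trick together with Lemma~\ref{lem:3}(2) when $q=3$ to flip parities as needed. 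Concretely I expect the proof to split: for $q=3$, use the symmetry $i\mapsto 3^m-1-i$ preserving $T_{(h,n)}$ (Lemma~\ref{lem:3}(2), $m$ odd) to fold a progression onto itself and get the clean bound $q^{(m-1)/2}+1$; for $q\ge 5$, the extra $\epsilon=q-2$ comes from the fact that incrementing the bottom digit through $q-1$ values of a digit that was forced to a particular residue does not change $\wt$ parity until a carry occurs, yielding a slightly longer guaranteed run.

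For $d(\C_{(1,n)}^\perp)$, by Lemma~\ref{lem::3} it suffices to find $\delta-1=(q-1)q^{(m-3)/2}$ consecutive terms of a suitable arithmetic progression lying in $\Omega_{(r,n)}^{(1)}\setminus T_{(1,n)}=T_{(0,n)}$, i.e.\ with even Hamming weight; the construction of $b$ and $a$ mirrors the previous paragraph, now selecting the fixed higher digits to have even weight and counting how long the bottom coordinate(s) can be varied. The main obstacle, and the place where care is needed, is precisely this digit-pattern bookkeeping: one must verify that the chosen progression actually stays inside $\Omega_{(r,n)}^{(1)}$ (the congruence $h\equiv 1\pmod 2$) for the entire run, that reductions modulo $N=q^m-1$ do not wrap around and spoil the $q$-adic picture, and that the parity of $\wt$ is genuinely constant (or can be made constant via the $q=3$ symmetry) across all $\delta-1$ terms. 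Once the right $b$ and $a$ are pinned down, Lemmas~\ref{lem::1} and~\ref{lem::3} deliver the bounds mechanically. I would also double-check the edge behaviour when $(m-3)/2=0$, i.e.\ $m=3$, where the dual bound degenerates to $d^\perp\ge q$, against the $q=5$ row of Table~\ref{DS:1} ($[62,38,12]$, and $q=5$ gives $d^\perp\ge 5$, consistent).
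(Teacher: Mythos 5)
Your treatment of the length and the two dimensions is correct and matches the paper (Equation (\ref{ds::3}) plus Lemma \ref{lem:3}(1), with the sign simplification $(2-q)^m=-(q-2)^m$ for odd $m$). But the distance bounds are where all the work lies, and there your proposal has a genuine gap: you propose to apply the BCH bound to a progression with common difference $a=2$, i.e.\ to a block of consecutive elements of $\Omega_{(r,n)}^{(1)}$ obtained by ``varying only the least significant digit.'' That cannot yield a run of length $q^{(m-1)/2}$: among that many consecutive odd integers the carries propagate through roughly half of the $q$-adic digits, and the Hamming-weight parity is emphatically not constant (already for $q=3$, $m=3$ the weights of $1,3,5,7,\dots$ alternate irregularly). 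The assertion that ``among $q^{(m-1)/2}+1+\epsilon$ consecutive such integers the Hamming weight stays odd'' is false for $a=2$, and the $q=3$ symmetry of Lemma \ref{lem:3}(2) does not repair it. You explicitly defer the ``digit-pattern bookkeeping,'' but that bookkeeping is the entire content of the proof.

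The idea you are missing is the choice of common difference. The paper takes $h=(m-1)/2$ and $a=q^h+1$, which is admissible in Lemma \ref{lem::1} because $\gcd(q^h+1,q^m-1)=2$ (the lemma preceding the theorem). The point is the identity $(q^h+1)i=iq^h+i$: for $0\le i\le q^h-1$ the two digit blocks of $(q^h+1)i$ do not interact, so $\wt\bigl((q-2)q^{2h}+(q^h+1)i\bigr)=2\wt(i)+1$ is \emph{automatically} odd for all $q^h$ values of $i$, with $\epsilon=q-2$ extra terms for $q\ge 5$ handled by a separate digit computation giving weight $2h+1$. This places a progression of length $q^h+\epsilon$ inside $T_{(1,n)}$ and gives $d\ge q^h+1+\epsilon$. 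The dual bound is obtained the same way from the starting point $2q^{2h-1}+q^{h-1}$, where $\wt\bigl(2q^{2h-1}+q^{h-1}+(q^h+1)i\bigr)=2+2\wt(i_0)$ is even for $0\le i\le (q-1)q^{h-1}-1$, giving a run in $T_{(0,n)}$ and $d^\perp\ge(q-1)q^{(m-3)/2}+1$ via $d(\C_{(1,n)}^\perp)=d(\C_{(0,n)})$. Without the doubling trick coming from $a=q^h+1$, your argument does not close.
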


\begin{proof}
The dimensions of $\C_{(1, n)}$ and $\C_{(1,n)}^\perp$ follow from Equation (\ref{ds::3}) and Lemma \ref{lem:3}. We now lower bound the minimum distances of $\C_{(1, n)}$ and $\C_{(1,n)}^\perp$.

Let $h=(m-1)/2$. Then $\gcd(q^h+1, q^m-1)=2$. Define
$$H=\{(q-2)q^{2h}+(q^h+1)i:- \epsilon \leq i\leq q^h-1\}.$$
Below we will prove that $H\subseteq T_{(1, n)}$ in the following two cases.
\begin{itemize}
\item  Let $0\leq i\leq q^h-1$. 	Clearly,
$$\wt((q-2)q^{2h}+(q^h+1)i)=2\wt(i)+1\equiv 1~({\rm mod}~2).$$
\item Let $q\geq 5$ and $-(q-2)\leq i\leq -1$. In this case,
$$(q-2)q^{2h}+(q^h+1)i=(q-3)q^{2h}+(q-1)\sum_{j=1, j\neq h}^{2h-1}q^j+(q-1+i)q^h+q+i.$$
Then $\wt((q-2)q^{2h}+(q^h+1)i)=2h+1\equiv 1~({\rm mod}~2)$.
\end{itemize}
Therefore, $H\subseteq T_{(1, n)}$. By the constacyclic BCH bound, $d(\C_{(1, n)})\geq q^h+1+\epsilon$.

Define $H^\perp=\{2q^{2h-1}+q^{h-1}+(q^h+1)i: 0\leq i\leq (q-1)q^{h-1}-1 \}$. Suppose $i=i_1 q^{h-1}+i_0$, where $0\leq i_0\leq q^{h-1}-1$ and $0\leq i_1\leq q-2$. Then
\begin{align*}
	2q^{2h-1}+q^{h-1}+(q^h+1)i=(2+i_1)q^{2h-1}+i_0q^h+(i_1+1)q^{h-1}+i_0.
\end{align*}
It follows that
$$\wt(2q^{2h-1}+q^{h-1}+(q^h+1)i)=2+2\wt(i_0)\equiv 0~({\rm mod}~2).$$
Therefore, $H^\perp \subseteq T_{(0, n)}$. By the constacyclic BCH bound, $d(\C_{(0, n)})\geq (q-1)q^h+1$. Notice that $d(\C_{(1, n)}^\perp )=d(\C_{(0, n)})$. This completes the proof.	
\end{proof}

When $q\geq 5$ and $m$ is not a power of $2$, the parameters of the code $\C_{(1,n)}$ and its dual are documented in the following theorem.

\begin{theorem}\label{thm::7}
Let $q\geq 5$ be an odd prime power and $m=2^e \ell$, where $\ell\geq 3$ is odd  and $e$ is a positive integer. Then $\C_{(1,n)}$ has parameters
$$\left[(q^m-1)/2, \, [q^m+(2-q)^m-2]/4, \, d \geq q^{(m-2^e)/2}+q  \right],$$
and the dual code $\C_{(1,n)}^\perp$ has parameters	
$$\left[(q^m-1)/2, \, [q^m-(2-q)^m]/4, \, d^\perp \geq  q^{(m-2^e)/2}+q \right].$$
\end{theorem}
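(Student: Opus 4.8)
The dimensions will follow at once from what is already available: by~\eqref{ds::3} and part~1 of Lemma~\ref{lem:3} — using that $(2-q)^m=(q-2)^m$ because $m=2^e\ell$ is even — one has $\dim(\C_{(1,n)})=|T_{(0,n)}|=[q^m+(2-q)^m-2]/4$, and then $\dim(\C_{(1,n)}^\perp)=n-\dim(\C_{(1,n)})=[q^m-(2-q)^m]/4$; the latter also equals $|T_{(1,n)}|$, as it must since $\C_{(1,n)}^\perp$ has the same parameters as $\C_{(0,n)}$. So the whole content of the theorem is the two minimum-distance bounds, and the plan is to derive both from the constacyclic BCH bound (Lemmas~\ref{lem::1} and~\ref{lem::3}).

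First fix the step of the progression. Set $h=(m-2^e)/2=2^{e-1}(\ell-1)$. Since $\ell$ is odd, $\gcd(\ell-1,\ell)=1$; and since $\ell-1$ is even, this forces $\gcd(h,m)=2^e$, hence $m/\gcd(h,m)=\ell$ is odd. By the lemma on $\gcd(q^h+1,q^m-1)$ recalled above, $\gcd(q^h+1,q^m-1)=2$, so $a=q^h+1$ satisfies $\gcd(a,N)=r=2$ and is an admissible common difference in Lemmas~\ref{lem::1} and~\ref{lem::3}. Note also $2h=m-2^e$, so $m-2h=2^e\ge 2$: in the length-$m$ $q$-adic expansion there are at least two free positions above position $2h-1$, one more than in the odd-$m$ situation of Theorem~\ref{thm::6}.

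For $d(\C_{(1,n)})$ the plan is to produce an arithmetic progression with step $q^h+1$ and $q^h+q-1$ terms all lying in the defining set $T_{(1,n)}$, and then to invoke Lemma~\ref{lem::1}. Following the proof of Theorem~\ref{thm::6}, one considers terms $(q-2)q^{2h}+(q^h+1)i$: for $0\le i\le q^h-1$ the $q$-adic expansion consists of the digit $q-2$ in position $2h$ followed by two verbatim copies of the base-$q$ digits of $i$, so $\wt(\cdot)=1+2\wt(i)$ is odd; and for a bounded range of negative $i$ the borrows fill positions $0,\dots,2h-1$ with an explicit digit string (all $q-1$'s apart from positions $0$ and $h$), whose Hamming weight is again odd. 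Every such term is $\equiv 1\pmod 2$ since $q-2$ is odd, hence it lies in $\Omega_{(r,n)}^{(1)}$ and so in $T_{(1,n)}$. To reach the full length $q^h+q-1$ — one step more than the direct analogue of Theorem~\ref{thm::6} — one uses one of the extra free positions above $2h$ to absorb a carry/borrow at the endpoint so that the Hamming weight stays odd through that last step. The dual bound is handled in the same spirit: by the remark following Lemma~\ref{lem:1}, $d(\C_{(1,n)}^\perp)=d(\C_{(0,n)})$, so it suffices to find a $q^h+q-1$ term progression inside $T_{(0,n)}$, that is, of even Hamming weight and odd $q$-weight; the same doubled-seed construction works with a leading block of two nonzero digits whose $q$-weights have opposite parity (for instance $q-1$ and $q-2$), and Lemma~\ref{lem::1} applied to $\C_{(0,n)}$ (equivalently Lemma~\ref{lem::3} applied to $\C_{(1,n)}$) then gives $d^\perp\ge q^h+q$.

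I expect the main obstacle to be pinning down the exact length of these progressions. The borrow/carry bookkeeping at the ``interface'' position $h$ is delicate: the digit there passes through $0$ at the extreme value of $i$, which flips the parity of the Hamming weight, so the leading coefficient(s) and the precise interval of $i$ must be chosen so that this flip is harmless — and the improvement from $q^h+q-1$ (the immediate analogue of Theorem~\ref{thm::6}) to $q^h+q$ is exactly where the additional $q$-adic room provided by $m=2^e\ell$ has to be exploited. One must also keep the leading block simultaneously odd as an integer and of the prescribed nonzero-digit parity, and check the smallest cases ($e=1$, $\ell\in\{3,5\}$) directly. By contrast the dimension formulas and the reduction to the BCH bound are routine; the endpoint digit count is the crux.
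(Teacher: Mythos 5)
Your dimension computation and the reduction to the constacyclic BCH bound are fine, and your choice $h=(m-2^e)/2$ does satisfy $\gcd(q^h+1,q^m-1)=2$. The construction you actually carry out --- seed $(q-2)q^{2h}$, range $-(q-2)\le i\le q^h-1$ --- does land in $T_{(1,n)}$ (the negative-$i$ terms have Hamming weight $2h+1=m-2^e+1$, which is odd since $m$ and $2^e$ are both even), but it has only $q^{(m-2^e)/2}+q-2$ terms and therefore proves $d\ge q^{(m-2^e)/2}+q-1$, one short of the theorem. The missing last term is not a bookkeeping detail: with your step $q^{(m-2^e)/2}+1$ both natural extensions fail for parity reasons. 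Extending at the top to $i=q^h$ gives $(q-1)q^{2h}+q^h$, of Hamming weight $2$; extending at the bottom to $i=-(q-1)$ kills the digit at position $h$ and gives weight $2h$ --- both even. Raising the seed to position $2h+1$ repairs the top endpoint (weight $3$ at $i=q^h$) but flips every negative-$i$ term to weight $2h+2$. So no seed supported above position $2h-1$ yields $q^h+q-1$ consecutive terms of odd weight, and the appeal to ``extra free positions above $2h$'' does not resolve this. The same obstruction hits your proposed $H^\perp$ for the dual bound.

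The paper's device is different and is exactly what buys the extra term: it takes the step to be $q^{(m+2^e)/2}+1$, with exponent \emph{larger} than $m/2$ (still admissible, since $m/\gcd((m+2^e)/2,\,m)=\ell$ is odd, so the gcd with $q^m-1$ is $2$), seed $q^{(m-2^e)/2+1}$, and range $-(q-2)\le i\le q^{(m-2^e)/2}$. The progression then wraps around modulo $N=q^m-1$: at the top endpoint the term reduces to $q^{(m-2^e)/2+1}+q^{(m-2^e)/2}+1$ (weight $3$), and for negative $i$ the reduction by adding $q^m-1$ produces weight $m-2^e+1$; all odd. For the dual it adds $(q-1)q^{(m-2^e)/2}$ to the seed so that all weights shift by one and become even. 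You would need to import this wrap-around mechanism (or find a genuinely new one) to close the gap from $q-1$ to $q$; as written, your argument proves a strictly weaker bound.
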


\begin{proof}
The dimensions of $\C_{(1, n)}$ and $\C_{(1,n)}^\perp$ follow from Equation (\ref{ds::3}) and Lemma \ref{lem:3}. We now lower bound the minimum distances of $\C_{(1, n)}$ and $\C_{(1,n)}^\perp$.

Let $h=(m+2^e)/2$. Then $\gcd(q^h+1,q^m-1)=2$. Define
$$H=\{[q^{h-2^e+1}+(q^h+1)i] \bmod N :-(q-2)\leq i\leq q^{h-2^{e}} \}.$$
Below we will prove that $H\subseteq T_{(1, n)}$ in the following three cases.
\begin{itemize}
\item Let $i=q^{h-2^e}$. In this case, $[q^{h-2^e+1}+(q^h+1)i] \bmod N =q^{h-2^e+1}+q^{h-2^e}+1$. Then
$$\wt([q^{h-2^e+1}+(q^h+1)i] \bmod N)=3\equiv 1~({\rm mod}~2).$$
\item Let $0\leq i\leq q^{h-2^e}-1$. In this case, $[q^{h-2^e+1}+(q^h+1)i] \bmod N =iq^h+q^{h-2^e+1}+i$. Then  	
$$\wt([q^{h-2^e+1}+(q^h+1)i] \bmod N)=2\wt(i)+1\equiv 1~({\rm mod}~2).$$
\item Let $-(q-2)\leq i\leq -1$. In this case,
$$
	[q^{h-2^e+1}+(q^h+1)i] \bmod N =q^m-1+q^{h-2^e+1}+(q^h+1)i.
$$
It follows that
$$\wt([q^{h-2^e+1}+(q^h+1)i] \bmod N)=m-2^e+1\equiv 1~({\rm mod}~2).$$
\end{itemize}
Therefore, $H\subseteq T_{(1, n)}$. By the constacyclic BCH bound, $d(\C_{(1, n)})\geq q^{(m-2^e)/2}+q$.

Define
$$H^\perp=\{[q^{h-2^e+1}+(q-1)q^{h-2^e}+(q^h+1)i] \bmod N :-(q-2)\leq i\leq q^{h-2^{e}} \}.$$
Below we will prove that $H^\perp \subseteq T_{(0, n)}$ in the following three cases.
\begin{itemize}
\item Let $i=q^{h-2^e}$. In this case, $$[q^{h-2^e+1}+(q-1)q^{h-2^e}+(q^h+1)i] \bmod N =2q^{h-2^e+1}+1.$$
Then
$$\wt([q^{h-2^e+1}+(q-1)q^{h-2^e}+(q^h+1)i] \bmod N )=2\equiv 0~({\rm mod}~2).$$
\item Let $0\leq i\leq q^{h-2^e}-1$. In this case,
\begin{align*}
[q^{h-2^e+1}+(q-1)q^{h-2^e}+(q^h+1)i] \bmod N =iq^h+q^{h-2^e+1}+(q-1)q^{h-2^e}+i.	
\end{align*}
Then  	
$$\wt([q^{h-2^e+1}+(q-1)q^{h-2^e}+(q^h+1)i] \bmod N)=2\wt(i)+2\equiv 0~({\rm mod}~2).$$
\item Let $-(q-2)\leq i\leq -1$. In this case,
\begin{align*}
	[q^{h-2^e+1}+(q-1)q^{h-2^e}+(q^h+1)i] \bmod =~q^m-1+q^{h-2^e+1}+(q-1)q^{h-2^e}+(q^h+1)i.
\end{align*}
It follows that
$$\wt([q^{h-2^e+1}+(q-1)q^{h-2^e}+(q^h+1)i] \bmod N)=m-2^e+2\equiv 0~({\rm mod}~2).$$
\end{itemize}
Therefore, $H^\perp \subseteq T_{(0, n)}$. By the constacyclic BCH bound, $d(\C_{(0, n)})\geq q^{(m-2^e)/2}+q$. Notice that $d(\C_{(1, n)}^\perp )=d(\C_{(0, n)})$. This completes the proof.
\end{proof}

When $q=3$ and $m$ is even, the parameters of the code $\C_{(1,n)}$ and its dual are documented in the following theorem.

\begin{theorem}\label{thm:8}
Let $q=3$ and $m\geq 4$ be even. Then $\C_{(1,n)}$ is a ternary negacyclic self-dual code and has parameters $[(3^m-1)/2, \,  (3^m-1)/4, \, d\geq d_m]$, 	where
$$d_m=\begin{cases}
3^{(m-2)/2}+3~&{\rm if}~m\equiv 2~({\rm mod}~4),\\
[3^{(m-2)/2}+15]/2~ &{\rm if}~	m\equiv 0~({\rm mod}~4).
\end{cases}
$$
\end{theorem}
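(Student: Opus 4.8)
The plan is to establish, in order, self-duality, the stated dimension, and then the distance bound, treating $m\equiv 2\pmod 4$ and $m\equiv 0\pmod 4$ separately. For self-duality: by Lemma~\ref{lem::2}, $\C_{(1,n)}^{\perp}$ is the negacyclic code of length $n$ with generator polynomial $\widehat{h}(x)$, where $h(x)$ is the check polynomial of $\C_{(1,n)}$; since the roots of $h(x)$ are $\{\beta^i:i\in \Omega_{(r,n)}^{(1)}\setminus T_{(1,n)}\}=\{\beta^i:i\in T_{(0,n)}\}$, the roots of $\widehat h(x)$ are their inverses, so the defining set of $\C_{(1,n)}^{\perp}$ with respect to $\beta$ is $\{(-i)\bmod N:i\in T_{(0,n)}\}$ (one checks $(-i)\bmod N\in \Omega_{(r,n)}^{(1)}$ because $q^m$ is odd). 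Part~2 of Lemma~\ref{lem:3} for $q=3$ and $m$ even gives exactly $T_{(1,n)}=\{q^m-1-i:i\in T_{(0,n)}\}=\{(-i)\bmod N:i\in T_{(0,n)}\}$, hence $Z(\C_{(1,n)}^{\perp})=T_{(1,n)}=Z(\C_{(1,n)})$ and $\C_{(1,n)}$ is self-dual; in particular $\dim\C_{(1,n)}=n/2=(3^m-1)/4$, consistent with $\dim\C_{(1,n)}=|T_{(0,n)}|=[3^m+(-1)^m-2]/4=(3^m-1)/4$ from Equation~(\ref{ds::3}) and Lemma~\ref{lem:3}(1).

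For the distance bound when $m\equiv 2\pmod 4$, write $m=2\ell$ with $\ell\ge 3$ odd and put $h=\ell+1$. Then $\gcd(h,m)=2$, so $m/\gcd(h,m)=\ell$ is odd and $\gcd(q^h+1,N)=2$, making $a=q^h+1$ an admissible step in Lemma~\ref{lem::1}. I would repeat the argument in the proof of Theorem~\ref{thm::7} for $e=1$ with $q=3$, namely show
$$H=\bigl\{\,[\,3^{\ell}+(3^{\ell+1}+1)i\,]\bmod N:\ -1\le i\le 3^{\ell-1}\,\bigr\}\subseteq T_{(1,n)}$$
by splitting into $i=3^{\ell-1}$, $0\le i\le 3^{\ell-1}-1$, and $i=-1$ and checking that each listed integer has an odd number of nonzero base-$3$ digits; the only place Theorem~\ref{thm::7} uses $q\ge5$ is the range $-(q-2)\le i\le -1$, which for $q=3$ is the single value $i=-1$, handled directly. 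Since $|H|=3^{\ell-1}+2$, Lemma~\ref{lem::1} gives $d(\C_{(1,n)})\ge 3^{\ell-1}+3=3^{(m-2)/2}+3=d_m$.

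For $m\equiv 0\pmod 4$, a step $q^h+1$ cannot reach $d_m$: an admissible such step forces $v_2(h)\ge v_2(m)$, and the corresponding Theorem~\ref{thm::7}-type progression gives only about $q^{(m-2^e)/2}$ (with $2^e$ the exact $2$-power dividing $m$), which is below $d_m$. Instead I would use a halved step, exploiting $q+1=4=2\cdot2$ for $q=3$: choosing $h$ odd with $\gcd(h,m)=1$ (for instance $h=m/2+1$, which is odd and coprime to $m$ whenever $m\equiv 0\pmod 4$) makes $\gcd(q^h+1,N)=q+1=4$, so $a=(q^h+1)/2$ has $\gcd(a,N)=2$ and is admissible in Lemma~\ref{lem::1}. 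A length-$L$ progression with step $a$ interleaves two length-$\lceil L/2\rceil$ progressions with step $q^h+1$, so the task is to choose an odd base point $b$ for which the entire progression $\{(b+ai)\bmod N:0\le i\le L-1\}$ with $L=(3^{(m-2)/2}+13)/2$ lies in $T_{(1,n)}$; I would verify this by a case analysis on $i$ that tracks the carries in the base-$3$ expansion of $b+ai$. Lemma~\ref{lem::1} then gives $d(\C_{(1,n)})\ge L+1=d_m$. (When $m=4$ this is the length-$8$ progression with step $a=(3^3+1)/2=14$, reproducing the entry $[40,20,\ge 9]$ of Table~\ref{DS:1}.)

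I expect the case $m\equiv 0\pmod 4$ to be the main obstacle. Self-duality, the dimension, and the case $m\equiv 2\pmod 4$ are short adaptations of material already in the paper, but with the halved step $a=(q^h+1)/2$ the base-$3$ digits of $b+ai$ evolve through genuine carries rather than a clean shift, so forcing the number of nonzero digits to stay odd over a progression of length roughly $\tfrac12 q^{(m-2)/2}$ needs a careful multi-range argument; the precise endpoints of the progression — equivalently the slightly awkward additive constant in $d_m$ for $m\equiv 0\pmod 4$ — must be chosen so that no carry flips the parity of the weight.
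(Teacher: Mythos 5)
Your treatment of self-duality, the dimension, and the case $m\equiv 2\pmod 4$ is correct and coincides with the paper: the paper also derives self-duality from Part 2 of Lemma \ref{lem:3}, and its Case 1 uses exactly your progression $\{[3^{h-1}+(3^h+1)i]\bmod N:\ -1\le i\le 3^{h-2}\}$ with $h=(m+2)/2$, i.e.\ the $q=3$ specialization of the Theorem \ref{thm::7} argument that you describe.

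The case $m\equiv 0\pmod 4$, however, is a genuine gap, and you have correctly identified it as the crux: you never specify the base point $b$, never fix the progression, and never carry out the digit/parity verification, so the bound $d\ge[3^{(m-2)/2}+15]/2$ is not established. Your proposed step $a=(3^{h'}+1)/2$ with $h'=m/2+1$ is arithmetically admissible (indeed $\gcd(3^{h'}+1,3^m-1)=4$ and $(3^{h'}+1)/2\equiv 2\pmod 4$, so $\gcd(a,N)=2$), but it is not what the paper does, and there is no evidence in your outline that a suitable $b$ exists for it. The paper instead takes $h=(m-2)/2$ (which is odd and coprime to $m$), $b=3^{m-1}-4(3^h-1)$ and $v=n+3^h-1$, checks $\gcd(v/2,n)=1$ via $\gcd(3^h-1,3^m-1)=2$, and proves $\{(b+vi)\bmod N:\ 1\le i\le (3^h+13)/2\}\subseteq T_{(1,n)}$ by splitting on the parity of $i$: for $i$ even one gets $3^{m-1}+(3^h-1)(i-4)$ and for $i$ odd one gets $n+3^{m-1}+(3^h-1)(i-4)$, after which the Hamming-weight parity is controlled by counting the digits $0$ and $2$ in the base-$3$ expansion of $i-5$ (resp.\ of $j=(3^h-1)/2+i-4$) and using $A+B\equiv i-5+1\pmod 2$. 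This reduction of a step-$v$ progression to two interleaved step-$(3^h-1)$ progressions, one shifted by $n$, together with the specific choice of $b$ that makes the small values $i\in\{2,4\}$ and the endpoint $i=(3^h+11)/2$ come out with odd weight, is precisely the content your proposal defers; without it the stated constant $15$ (and hence $d_m$ for $m\equiv 0\pmod 4$) is unjustified.
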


\begin{proof}
By Result 2 of Lemma \ref{lem:3}, $T_{(1, n)}=\{q^m-1-i: i\in T_{(0,n)} \}$. It follows that $\C_{(1,n)}$ is a ternary self-dual code. We now lower bound the minimum distance of $\C_{(1, n)}$.

{\it Case 1}: Let $m\equiv 2\pmod{4}$. Let $h=(m+2)/2$. Then $\gcd( 3^h+1, 3^m-1)=2$. Define
$$H=\{[3^{h-1}+(3^h+1)i] \bmod N: -1\leq i\leq 3^{h-2}   \}.$$
Below we will prove that $H\subseteq T_{(1, n)}$ in the following three subcases.
\begin{itemize}
\item Let $i=3^{h-2}$. In this case, $[3^{h-1}+(3^h+1)i] \bmod N=3^{h-1}+3^{h-2}+1$. Then $$\wt([3^{h-1}+(3^h+1)i] \bmod N)=3\equiv 1~({\rm mod}~2).$$	
\item Let $0\leq i\leq 3^{h-2}-1$. In this case, $[3^{h-1}+(3^h+1)i] \bmod N=i\cdot 3^{h}+3^{h-1}+i$. Then
$$\wt([3^{h-1}+(3^h+1)i] \bmod N)=2\wt(i)+1\equiv 1~({\rm mod}~2).$$	
\item Let $i=-1$. In this case, $[3^{h-1}+(3^h+1)i] \bmod N=3^m-3^{h}+3^{h-1}-2$. Then
 $$\wt([3^{h-1}+(3^h+1)i] \bmod N)=m-1\equiv 1~({\rm mod}~2).$$
\end{itemize}
Therefore, $H\subseteq T_{(1, n)}$. By the constacyclic BCH bound, $d(\C_{(1, n)})\geq 3^{h-2}+3$.

{\it Case 2}: Let $m\equiv 0\pmod{4}$. Let $h=(m-2)/2$. Then $h$ is odd. Let $b=3^{m-1}-4(3^{h}-1)$ and $v=n+3^{h}-1$, then $b$ is odd and $v\equiv 2\pmod{4}$. Notice that $\gcd(v, n)=(3^h-1, n)$ and
$$\gcd(3^h-1, 3^m-1)=3^{\gcd(h,m)}-1=2,$$
then $\gcd(v/2, n)=1$. Define
$$H=\{(b+vi) \bmod N: 1\leq i\leq  (3^h+13)/2\}.$$
If $m=4$, it is easily checked that $H\subseteq T_{(1, n)}$. If $m\geq 8$ and $m\equiv 0\pmod{4}$, we will prove that $H\subseteq T_{(1, n)}$ in the following two cases.
\begin{itemize}
\item Let $i$ be even. In this case,
       $$(b+vi) \bmod N= 3^{m-1}+(3^h-1)(i-4).$$
 It follows that the following hold.
\begin{itemize}
\item If $i=2$, $\wt((b+vi) \bmod N)=(m+2)/2 \equiv 1~({\rm mod}~2)$.
\item If $i=4$, $\wt((b+vi) \bmod N)=1$.
\item If $6\leq i\leq (3^h+13)/2$, $1\leq i-5<3^h-1$ and
$$(b+vi) \bmod N= 3^{m-1}+(i-5)3^h+3^h-1-(i-5).$$
Suppose $i-5=i_0+i_1 3+\cdots+i_{h-1} 3^{h-1}$, where $i_j\in \{0,1,2\}$. Then
$$\wt((b+vi) \bmod N )=1+(h-A)+(h-B),$$ where
 \begin{align*}
 A&= |\{0\leq \ell \leq h-1: i_\ell=0 \}|,\\
 B&=|\{0\leq \ell \leq h-1: i_\ell=2 \}|.	
 \end{align*}
 Clearly,
 \begin{align*}
 i-5&\equiv i_0+i_1+\cdots+i_{h-1} ~({\rm mod}~2)\\
 	&\equiv 0\cdot A+(h-A-B)+2\cdot B ~({\rm mod}~2)\\
 	&\equiv 1-A-B ~({\rm mod}~2).
 \end{align*}
 Therefore, $A+B\equiv 0~({\rm mod}~2)$. Consequently,
 $$\wt((b+vi) \bmod N )\equiv 1~({\rm mod}~2).$$
\end{itemize}

\item Let $i$ be odd. In this case,
 $$(b+vi) \bmod N= n+3^{m-1}+(3^h-1)(i-4).$$
 It follows that the following hold.
 \begin{itemize}
\item If $1\leq i\leq (3^h+7)/2$,
$$(b+vi) \bmod N= 2\cdot 3^{m-1}+3^{m-2}+\left[ \frac{3^h-1}2+i-4  \right]3^h+\frac{3^h-1}2-(i-4).$$
Let $(3^h-1)/2+i-4=j$. Then $(3^h-7)/2\leq j\leq 3^h-1$ is even. Consequently,
  $$(b+vi) \bmod N= 2\cdot 3^{m-1}+3^{m-2}+ j 3^h+(3^h-1-j).$$
Suppose $j=j_0+j_1 3+\cdots+j_{h-1} 3^{h-1}$, where $j_l\in \{0,1,2\}$. Then
$$\wt((b+vi) \bmod N )=2+(h-A')+(h-B'),$$ where
 \begin{align*}
 A'&= |\{0\leq \ell \leq h-1: j_\ell=2 \}|,\\
 B'&=|\{0\leq \ell \leq h-1: j_\ell=0 \}|.	
 \end{align*}
 Clearly, $j\equiv 1-A'-B' ~({\rm mod}~2)$. Therefore, $A'+B' \equiv 1 ~({\rm mod}~2)$. Consequently,
 $$\wt((b+vi) \bmod N )\equiv 1~({\rm mod}~2).$$
 \item If $i=(3^h+11)/2$, $(b+vi) \bmod N= 2\cdot 3^{m-1}+2\cdot 3^{m-2}+3^h-2$. Then
 $$\wt((b+vi) \bmod N )= h+2 \equiv 1 ~({\rm mod}~2).$$
\end{itemize}
Therefore, $H\subseteq T_{(1, n)}$. By the constacyclic BCH bound, $d(\C_{(1, n)})\geq (3^{h}+15)/2$.
\end{itemize}
This completes the proof. 	
\end{proof}

The ternary negacyclic self-dual codes $\C_{(1, n)}$ have very good parameters. The example codes below justify this claim.

\begin{example}
Let $m=2$ and $\beta$ be a primitive element of $\F_{3^2}$ with $\beta^2+2\beta+2=0$. Then $\C_{(1,n)}$ is a ternary $[4,2,3]$ self-dual code and has weight enumerator $1+8z^3$. This code is MDS and optimal.
\end{example}

\begin{example}\label{exam-4m}
Let $m=4$ and $\beta$ be a primitive element of $\F_{3^4}$ with $\beta^4+2\beta^3+2=0$. Then $\C_{(1, 40)}$ is a ternary $[40,20,9]$ self-dual code and has weight enumerator $1+1040z^9+18720z^{12}+1100736z^{15}+ 25761840z^{18}+236377440z^{21}+908079120z^{24}+1388750720z^{27}+783679104z^{30}+137535840z^{33}+5468320z^{36}+11520z^{39}$.  The best ternary self-dual code known of length $40$ and dimension $20$ has minimum distance $12$ \cite{self-dual}, but it is not known to be negacyclic. The advantage of this ternary $[40,20,9]$ self-dual code $\C_{(1, 40)}$ over
the best ternary self-dual $[40,20,12]$ is its negacyclic structure. Notice that  ternary cyclic self-dual codes do not exist \cite{jlx}.
\end{example}

\section{The second class of constacyclic codes}\label{sec4}

We follow the notation in the previous sections, but fix the following notation, unless it is stated otherwise:
\begin{enumerate}
\item $q>2$ is a prime power.
\item $m\geq 2$ is an integer.
\item $r=q-1$.
\item $n=(q^m-1)/r$.
\item $N=rn=q^m-1$.
\item $\lambda$ is a primitive element of $\gf(q)$.
\item $\beta$ is a primitive element of $\gf(q^m)$ such that $\beta^n=\lambda$.	
\end{enumerate}

\begin{definition}\label{Def::1}
For any $0\leq \ell \leq m-1$, define
\begin{equation}\label{definingset2}
T_{(q, m, \ell)}=\{ i\in \Z_N:~\wt_q(i)=1+(q-1)\ell \}	
\end{equation}	
and
$$
	g_{(q, m, \ell)}(x)=\prod_{i \in T_{(q, m, \ell)}}(x-\beta^i).
$$
\end{definition}

Since $\wt_q(i)$ is a constant on each $q$-cyclotomic coset $C_b^{(q,N)}$, we get that $T_{(q, m, \ell)}$ is the union of some $q$-cyclotomic cosets modulo $N$. Consequently, $g_{(q, m, \ell)}(x)\in \gf(q)[x]$. Furthermore,
 \begin{equation*}\label{EQ::15}
 x^n-\lambda=\prod_{\ell=0}^{m-1}g_{(q, m, \ell)}(x)	.
 \end{equation*}

Let $\C_{(q, m, \ell)}$ be a $q$-ary $\lambda$-constacyclic code with length $n$ and generator polynomial $g_{(q, m, \ell)}(x)$. This class of constacyclic codes are very interesting due to the following facts:
\begin{enumerate}
\item The code $\C_{(q, m , 0)}$ is the $q$-ary Hamming code and $\C_{(q, m, 0)}^\perp$ is the $q$-ary Simple code.	Both $\C_{(q, m , 0)}$ and $\C_{(q, m, 0)}^\perp$ are optimal.
\item When $q\in \{3,4,5,7\}$ and $m\in \{2,3\}$, both $\C_{(q, m,\ell)}$ and $\C_{(q, m, \ell)}^\perp$ are optimal or have best known parameters, except for seven examples (see Table \ref{Tab-code1}).
\end{enumerate}

For any $0\leq i\leq N=q^m-1$, it is clear that $\wt_q(N-i)=(q-1)m-\wt_q(i)$. It then follows that $\wt_q(N-i)=q-2+(q-1)(m-1-\ell)$ for any $i\in T_{(q, m, \ell)}$. Then we have the following results.

\begin{theorem}\label{thm6}
Let $q=3$, $N=(q^m-1)/(q-1)$ and $m\geq 2$. Let $0\leq \ell \leq m-1$. Then the reverse code $\C_{(q, m, \ell)}^{-}=\C_{(q, m, m-1-\ell)}$. Furthermore, $\C_{(q, m, \ell)}$ and $\C_{(q, m, m-1-\ell)}$ have the same parameters.  	
\end{theorem}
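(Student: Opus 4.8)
The plan is to show that the reverse code $\C_{(q,m,\ell)}^{-}$ has generator polynomial equal to $g_{(q,m,m-1-\ell)}(x)$, and then invoke the general fact recorded earlier that a code and its reverse code always share the same parameters. By Definition \ref{Def::1}, the generator polynomial of $\C_{(q,m,\ell)}$ is $g_{(q,m,\ell)}(x)=\prod_{i\in T_{(q,m,\ell)}}(x-\beta^i)$, with $T_{(q,m,\ell)}=\{i\in\Z_N:\wt_q(i)=1+(q-1)\ell\}$. The reverse code $\C_{(q,m,\ell)}^{-}$ is, by definition, the $\lambda^{-1}$-constacyclic code with generator polynomial $\widehat{g}_{(q,m,\ell)}(x)$, the reciprocal polynomial. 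So the heart of the matter is to compute $\widehat{g}_{(q,m,\ell)}(x)$ and identify it with $g_{(q,m,m-1-\ell)}(x)$, paying attention to the constant $q=3$ hypothesis (which forces $\lambda=-1=\lambda^{-1}$, so that $\C_{(q,m,\ell)}^{-}$ is genuinely a $\lambda$-constacyclic code of the same type — this is where $q=3$ is used).

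First I would recall that taking reciprocals sends a root $\beta^i$ to $\beta^{-i}$; concretely, if $g(x)=\prod_i(x-\beta^i)$ with all $\beta^i$ among the $N$-th roots of unity, then up to a nonzero scalar $\widehat{g}(x)=\prod_i(x-\beta^{-i})=\prod_i(x-\beta^{N-i})$, since $\beta^{-1}=\beta^{N-1}$ and $\beta$ has order $N$. Hence the root set of $\widehat{g}_{(q,m,\ell)}(x)$ is $\{N-i \bmod N : i\in T_{(q,m,\ell)}\}$. Next I would use the elementary identity, already noted in the paragraph preceding the theorem, that $\wt_q(N-i)=(q-1)m-\wt_q(i)$ for $0\le i\le N$ (valid because $N=q^m-1$ has all $q$-adic digits equal to $q-1$, so subtracting $i$ digitwise causes no borrows). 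Therefore $i\in T_{(q,m,\ell)}$, i.e. $\wt_q(i)=1+(q-1)\ell$, is equivalent to
\[
\wt_q(N-i)=(q-1)m-1-(q-1)\ell=(q-1)(m-1-\ell)+(q-2).
\]
When $q=3$ this last expression equals $1+(q-1)(m-1-\ell)$, which is exactly the defining condition for membership in $T_{(q,m,m-1-\ell)}$. Thus $\{N-i:i\in T_{(q,m,\ell)}\}=T_{(q,m,m-1-\ell)}$, so $\widehat{g}_{(q,m,\ell)}(x)$ and $g_{(q,m,m-1-\ell)}(x)$ have the same root set, and since both are monic (the reciprocal of a monic polynomial is monic after normalizing by the constant term, which is $\pm1$ here) they are equal. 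Since $q=3$ gives $\lambda=\lambda^{-1}$, this shows $\C_{(q,m,\ell)}^{-}=\C_{(q,m,m-1-\ell)}$ as $\lambda$-constacyclic codes.

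The "furthermore" is then immediate: the excerpt lists among the easily verified facts that "the codes $\C$ and $\C^{-}$ have the same parameters," so $\C_{(q,m,\ell)}$ and $\C_{(q,m,\ell)}^{-}=\C_{(q,m,m-1-\ell)}$ have identical length, dimension, and minimum distance. The only real subtlety — and the one step I would be careful about — is the digitwise/no-borrow claim that $\wt_q(N-i)=(q-1)m-\wt_q(i)$; everything else is a matter of tracking definitions and the coincidence $q-2=1$ at $q=3$. I would also remark explicitly that for general $q$ the argument instead yields a relationship between $\C_{(q,m,\ell)}$ and the reverse of a code whose defining set sits at $q$-weight $1+(q-1)(m-1-\ell)+(q-3)$, which is not one of the codes $\C_{(q,m,j)}$ unless $q=3$, explaining why the theorem is stated only in the ternary case.
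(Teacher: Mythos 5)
Your proposal is correct and follows essentially the same route as the paper, which simply records the identity $\wt_q(N-i)=(q-1)m-\wt_q(i)$ (hence $\wt_q(N-i)=q-2+(q-1)(m-1-\ell)$ for $i\in T_{(q,m,\ell)}$) in the paragraph preceding the theorem and lets the conclusion follow from the coincidence $q-2=1$ at $q=3$ together with the listed facts about reverse codes. Your additional remarks — that $\lambda=\lambda^{-1}$ when $q=3$ and that the argument breaks for $q>3$ — correctly explain the hypothesis and match the counterexample the paper gives immediately afterwards.
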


When $q>3$, the conclusion of Theorem \ref{thm6} does not hold. We have the following example to illustrate.

\begin{example}
Let $q=5$ and $m=3$. Then $n=(q^m-1)/(q-1)=31$. It is easily verified that
$T_{(5, 3, 0)}=\{1, 5, 25\}$ and $T_{(4, 3, 2)}=\{49, 69, 73, 89, 93, 97, 109, 113, 117, 121 \}$. Let $\beta$ be a primitive element of $\gf(5^3)$ with $\beta^3+3\beta+3=0$. Then $\C_{(5, 3, 0)}$ is a $5$-ary $2$-constacyclic code with length $31$ and generator polynomial $x^3+3x+3$; $\C_{(5, 3, 2)}$ is a $5$-ary $2$-constacyclic code with length $31$ and generator polynomial $x^{10} + 3x^9 + 3x^8 + x^7 + 3x^6 + 2x^5 + 2 x^4 + 4 x^3 + x^2 + 2 x + 4$. By Magma, $\C_{(5, 3,0)}$ is a $5$-ary $[31, 28,3]$ distance-optimal code and $\C_{(5, 3, 2)}$ is a $5$-ary $[31, 21, 5]$ linear code.
\end{example}

Theorem \ref{thm6} shows that the ternary negacyclic codes $\C_{(3, m, \ell)}$ are very special, and we will apply them to construct ternary self-dual codes in the Section \ref{sec5}.

Below we study the parameters of $q$-ary constacyclic codes $\C_{(q, m, \ell)}$. By Equations (\ref{definingset1}) and (\ref{definingset2}), we deduce that the defining set $D_{(q, m, \ell)}$ in (\ref{definingset1}) of the constacyclic projective Reed-Muller codes is given by 
\begin{equation}\label{EEQQ0}
D_{(q, m, \ell)}=\bigcup_{i=0}^{\ell} T_{(q, m, i)}.	
\end{equation}

\begin{lemma}\label{LL::6}
Let $\C_i$ be a $q$-ary $\lambda$-constacyclic code with length $n$ and defining set $T_i$ with respect to the primitive $N$-th root of unity $\beta$, where $i=\{0, 1\}$. Then the following hold.
\begin{enumerate}
\item $\C_1\cap \C_2$ is a $q$-ary $\lambda$-constacyclic code with length $n$ and defining set $T_1 \cup T_2$ with respect to the primitive $N$-th root of unity $\beta$.
\item $\C_1+ \C_2$ is a $q$-ary $\lambda$-constacyclic code with length $n$ and defining set $T_1 \cap  T_2$ with respect to the primitive $N$-th root of unity $\beta$, where $\C_1+\C_2=\{\bc_1+\bc_2:~\bc_1\in \C_1, ~\bc_2\in \C_2\}$.
\item $\C_1 \subseteq \C_2$ if and only if $T_2\subseteq T_1$.
\end{enumerate}
\end{lemma}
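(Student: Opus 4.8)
The plan is to translate everything into the polynomial picture and work with generator polynomials and defining sets, which is the standard way these intersection/sum identities are proved for constacyclic codes. Recall that $\C_i$, being a $\lambda$-constacyclic code with defining set $T_i$, has generator polynomial $g_i(x)=\prod_{j\in T_i}(x-\beta^j)$, and $T_i$ is a union of $q$-cyclotomic cosets modulo $N$, so $g_i(x)\in\gf(q)[x]$ and $g_i(x)\mid (x^n-\lambda)$. Throughout I will use repeatedly the basic dictionary: for $\lambda$-constacyclic codes of length $n$, containment reverses under passing from codes to generator polynomials, i.e.\ $\C\subseteq\C'$ iff $g'(x)\mid g(x)$, and since each such generator polynomial is a squarefree divisor of $x^n-\lambda$ (whose roots are distinct because $\gcd(n,q)=1$), a code is completely determined by the set of roots of its generator polynomial, that is, by its defining set.

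I would prove the three parts in the order (3), (1), (2), since (3) is the cleanest and is used implicitly in verifying the other two. For part (3): $\C_1\subseteq\C_2$ iff $g_2(x)\mid g_1(x)$; since both polynomials are squarefree divisors of $x^n-\lambda$, divisibility is equivalent to the inclusion of root sets, i.e.\ $\{\beta^j:j\in T_2\}\subseteq\{\beta^j:j\in T_1\}$, which (as $\beta$ has order $N$ and $T_1,T_2\subseteq\Omega_{(r,n)}^{(1)}$) is equivalent to $T_2\subseteq T_1$. For part (1): $\C_1\cap\C_2$ is an ideal of $\gf(q)[x]/(x^n-\lambda)$, hence a $\lambda$-constacyclic code; its generator polynomial is the monic polynomial of least degree in the ideal, which is exactly $\lcm(g_1(x),g_2(x))=\prod_{j\in T_1\cup T_2}(x-\beta^j)$ because the $g_i$ are squarefree. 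Hence the defining set of $\C_1\cap\C_2$ is $T_1\cup T_2$. For part (2): $\C_1+\C_2$ is again an ideal, and its generator polynomial is $\gcd(g_1(x),g_2(x))=\prod_{j\in T_1\cap T_2}(x-\beta^j)$, again using squarefreeness; so its defining set is $T_1\cap T_2$. One should also note $\C_1+\C_2$ is closed and is indeed the smallest $\lambda$-constacyclic code containing both $\C_1$ and $\C_2$, but this is immediate from the ideal structure.

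The only point requiring a little care — and the place I would be most careful — is justifying that $\lcm$ of the generator polynomials really is the generator polynomial of the intersection, and $\gcd$ that of the sum, rather than merely a divisor/multiple of it. For the intersection: any codeword of $\C_1\cap\C_2$ is divisible (mod $x^n-\lambda$) by both $g_1$ and $g_2$, hence by $\lcm(g_1,g_2)$; conversely $\lcm(g_1,g_2)$ itself divides $x^n-\lambda$ (as $g_1,g_2$ do and $x^n-\lambda$ is squarefree), so it generates a constacyclic code contained in both $\C_1$ and $\C_2$, hence in the intersection; comparing, the generator polynomial of $\C_1\cap\C_2$ both divides and is divisible by $\lcm(g_1,g_2)$, and both are monic, so they coincide. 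The argument for the sum is dual, using $\gcd$. Everything else is routine bookkeeping with cyclotomic cosets, and the hypothesis $\gcd(n,q)=1$ (which holds since $N=q^m-1$ and $n\mid N$) guarantees $x^n-\lambda$ is squarefree over $\gf(q^m)$, which is what makes the squarefreeness arguments go through.
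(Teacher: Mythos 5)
Your proof is correct and is essentially the argument the paper has in mind: the paper omits the proof entirely, remarking only that it is analogous to the standard cyclic-code lemma in Huffman and Pless, and your generator-polynomial argument (lcm for intersection, gcd for sum, divisibility for containment, with squarefreeness of $x^n-\lambda$ guaranteed by $\gcd(n,q)=1$) is exactly that standard argument transported to the constacyclic setting. No discrepancy to report.
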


\begin{proof}
The proof of this lemma is similar to the proof of a similar lemma for cyclic codes \cite{HP2003}, and we omit it.	
\end{proof}

By Lemma \ref{SDW}, Lemma \ref{LL::6} and Equation (\ref{EEQQ0}), we can directly obtain the following results.

\begin{theorem}
Let $1\leq \ell \leq m-2$. Then 	
$$\CPRM(q, m, m-2-\ell)=\CPRM(q, m, m-1-\ell) \cap \C_{(q, m, \ell)}$$
and $d(\C_{(q, m, \ell)})\leq 3\cdot q^{\ell}$.
\end{theorem}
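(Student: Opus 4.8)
The plan is to derive everything from the identification of the defining set of the constacyclic projective Reed-Muller code given in Equation (\ref{EEQQ0}), together with the intersection formula in Lemma \ref{LL::6}. First I would recall from (\ref{EEQQ0}) that $D_{(q,m,m-1-\ell)} = \bigcup_{i=0}^{m-1-\ell} T_{(q,m,i)}$ and $D_{(q,m,m-2-\ell)} = \bigcup_{i=0}^{m-2-\ell} T_{(q,m,i)}$, while the generator polynomial of $\C_{(q,m,\ell)}$ has zero set exactly $T_{(q,m,\ell)}$. Since $1 \le \ell \le m-2$, the index $\ell$ satisfies $m-1-\ell \ge 1$ and $m-2-\ell \ge -1$; in fact $\ell$ lies strictly between $m-2-\ell$ and $m-1-\ell$ only when... no — rather, the point is simply the set-theoretic identity
\begin{equation*}
\left(\bigcup_{i=0}^{m-2-\ell} T_{(q,m,i)}\right) \;=\; \left(\bigcup_{i=0}^{m-1-\ell} T_{(q,m,i)}\right) \;\cup\; T_{(q,m,\ell)},
\end{equation*}
which holds precisely because $\ell \le m-1-\ell$ is \emph{false} in general — so I must be careful here. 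Let me instead observe that $\{0,1,\dots,m-2-\ell\}$ is NOT generally $\{0,\dots,m-1-\ell\}\cup\{\ell\}$. The correct reading: for the union over $\C_{(q,m,m-1-\ell)}\cap\C_{(q,m,\ell)}$, Part 1 of Lemma \ref{LL::6} says its defining set is $D_{(q,m,m-1-\ell)} \cup T_{(q,m,\ell)} = \bigcup_{i=0}^{m-1-\ell}T_{(q,m,i)} \cup T_{(q,m,\ell)}$. For this to equal $D_{(q,m,m-2-\ell)} = \bigcup_{i=0}^{m-2-\ell}T_{(q,m,i)}$ we would need $m-1-\ell = m-2-\ell$ plus the extra term $T_{(q,m,\ell)}$, i.e. we need the index sets to match: $\{0,\dots,m-1-\ell\}\cup\{\ell\} = \{0,\dots,m-2-\ell\}$.

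Hmm, so the genuinely correct approach replaces $m-1-\ell$ and $m-2-\ell$ by what the theorem literally states; since the $T_{(q,m,i)}$ for distinct $i$ are pairwise disjoint (the $q$-weight $1+(q-1)i$ is strictly increasing in $i$), the first step is to verify the index identity $\{0,1,\dots,m-2-\ell\} \cup \{\ell\} = \{0,1,\dots,m-1-\ell\}$ — wait, that also fails. The resolution is that I should read the theorem as using $\CPRM(q,m,m-2-\ell) \cap \C_{(q,m,\ell-1)}$-style bookkeeping; concretely, the clean statement is that $\CPRM(q,m,m-1-\ell)$ has defining set $\bigcup_{i=0}^{\ell-1}T_{(q,m,i)}$ and $\CPRM(q,m,m-2-\ell)$ has defining set $\bigcup_{i=0}^{\ell}T_{(q,m,i)}$ — matching (\ref{definingset1}) with the substitution $\ell \mapsto m-1-\ell$ and $\ell \mapsto m-2-\ell$ turning the bound $\wt_q(i)<1+(q-1)(\ell+1)$ into $\wt_q(i)<1+(q-1)(m-\ell)$ and $\wt_q(i)<1+(q-1)(m-1-\ell)$ respectively, hence into unions $\bigcup_{i: 1+(q-1)i < 1+(q-1)(m-\ell)}$, i.e. $\bigcup_{i=0}^{m-\ell-1}$ and $\bigcup_{i=0}^{m-\ell-2}$. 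So I would carefully track these two index ranges, note they differ exactly by the single set $T_{(q,m,m-\ell-1)}$, and then apply Lemma \ref{LL::6}(1) to get $\CPRM(q,m,m-2-\ell) = \CPRM(q,m,m-1-\ell)\cap \C_{(q,m,m-\ell-1)}$; a relabeling $\ell \leftrightarrow m-1-\ell$ then yields the stated identity. The main obstacle is purely this index bookkeeping — getting the off-by-one ranges right from (\ref{definingset1}) and (\ref{EEQQ0}), and confirming disjointness so that the union decomposes as claimed.

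For the distance bound, the argument is immediate once the intersection identity is in hand: $\CPRM(q,m,m-2-\ell) = \CPRM(q,m,m-1-\ell) \cap \C_{(q,m,\ell)} \subseteq \C_{(q,m,\ell)}$, so every codeword of $\CPRM(q,m,m-2-\ell)$ lies in $\C_{(q,m,\ell)}$, hence $d(\C_{(q,m,\ell)}) \le d(\CPRM(q,m,m-2-\ell))$. By Lemma \ref{SDW}, $d(\CPRM(q,m,m-2-\ell)) = 3\cdot q^{\ell}$, giving $d(\C_{(q,m,\ell)}) \le 3\cdot q^{\ell}$. The constraint $1 \le \ell \le m-2$ guarantees $0 \le m-2-\ell \le m-2$ so that Lemma \ref{SDW} applies, and $0 \le \ell \le m-1$ so that $\C_{(q,m,\ell)}$ is defined; both hold. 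I would also remark that containment of a code in another forces the minimum distance of the larger code to be at most that of the subcode, which is the elementary fact being used. The whole proof is a few lines once the defining-set arithmetic is set up, so I would present the index computation explicitly and then invoke Lemmas \ref{SDW} and \ref{LL::6} to conclude.
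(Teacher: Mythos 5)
Your overall strategy --- identify the defining sets via Equation (\ref{EEQQ0}), apply Lemma \ref{LL::6}(1), and then deduce the distance bound from the resulting containment together with Lemma \ref{SDW} --- is exactly the paper's intended route, and your argument for $d(\C_{(q,m,\ell)})\le 3q^{\ell}$ is correct as written. The problem is the defining-set bookkeeping you finally commit to. The source of the confusion is the notation in (\ref{definingset1}): the code $\CPRM(q,m,m-2-\ell)$ carries the defining set $D_{(q,m,\ell)}$ --- the third index of $D$ is $\ell$, not $m-2-\ell$. Hence by (\ref{EEQQ0}) the defining set of $\CPRM(q,m,m-2-\ell)$ is $\bigcup_{i=0}^{\ell}T_{(q,m,i)}$, and (replacing $\ell$ by $\ell-1$) the defining set of $\CPRM(q,m,m-1-\ell)$ is $\bigcup_{i=0}^{\ell-1}T_{(q,m,i)}$. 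You do state exactly this at one point (``the clean statement is that \dots''), and there you are already done: the two sets differ by precisely $T_{(q,m,\ell)}$, the defining set of $\C_{(q,m,\ell)}$, so Lemma \ref{LL::6}(1) gives the theorem's identity verbatim, with no relabeling whatsoever.

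Unfortunately you then overwrite this with a second computation that substitutes $\ell\mapsto m-1-\ell$ and $\ell\mapsto m-2-\ell$ into the wrong slot and arrives at the ranges $\bigcup_{i=0}^{m-\ell-1}$ and $\bigcup_{i=0}^{m-\ell-2}$; those are the defining sets of $\CPRM(q,m,\ell-1)$ and $\CPRM(q,m,\ell)$, not of $\CPRM(q,m,m-1-\ell)$ and $\CPRM(q,m,m-2-\ell)$. The identity you then assert, $\CPRM(q,m,m-2-\ell)=\CPRM(q,m,m-1-\ell)\cap\C_{(q,m,m-\ell-1)}$, is false in general: since the sets $T_{(q,m,i)}$ are pairwise disjoint (the $q$-weight $1+(q-1)i$ is constant on each and distinct for distinct $i$), its defining-set version would require $\bigl(\bigcup_{i=0}^{\ell-1}T_{(q,m,i)}\bigr)\cup T_{(q,m,m-1-\ell)}=\bigcup_{i=0}^{\ell}T_{(q,m,i)}$, which forces $m-1-\ell=\ell$. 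The concluding ``relabeling $\ell\leftrightarrow m-1-\ell$'' does not repair this; applied to your identity it yields $\CPRM(q,m,\ell-1)=\CPRM(q,m,\ell)\cap\C_{(q,m,\ell)}$, which is a different (and again generally false) statement. Delete the second computation, keep the first, and the proof is complete; the distance bound then follows as you say from $\CPRM(q,m,m-2-\ell)\subseteq\C_{(q,m,\ell)}$ and $d(\CPRM(q,m,m-2-\ell))=3q^{\ell}$.
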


Below, we study the parameters of this class of constacyclic codes $\C_{(q, m, \ell)}$. To determine the dimension of $\C_{(q, m, \ell)}$, we need the following lemma, which directly follows from \cite[Lemma 5]{Sorensen}.

\begin{lemma} \label{lem::5}
Let $0\leq \ell \leq m-1$. Then
$$
|T_{(q, m, \ell)}|=\sum_{h=0}^{m}(-1)^h \binom{m}{h}\binom{(q-1)\ell-h q+m}{1+(q-1)\ell-h q}.	
$$
\end{lemma}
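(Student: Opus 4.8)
The statement to be proved is Lemma~\ref{lem::5}, giving a closed formula for $|T_{(q,m,\ell)}|$, the number of integers $0\le i\le q^m-1$ whose $q$-adic digit sum equals $1+(q-1)\ell$. Since the statement explicitly says it "directly follows from \cite[Lemma 5]{Sorensen}", the plan is to reduce $|T_{(q,m,\ell)}|$ to the quantity counted in that reference, which is the number of lattice points $(i_0,\dots,i_{m-1})$ with $0\le i_j\le q-1$ and $i_0+\dots+i_{m-1}=w$ for a prescribed weight $w$. Here we simply set $w=1+(q-1)\ell$. First I would recall that the map $i\mapsto (i_0,\dots,i_{m-1})$ sending an integer to its $q$-adic expansion is a bijection between $\{0,1,\dots,q^m-1\}$ and $\{0,1,\dots,q-1\}^m$, and that under this bijection $\wt_q(i)=i_0+\dots+i_{m-1}$; hence $|T_{(q,m,\ell)}|$ equals the number of solutions in $\{0,\dots,q-1\}^m$ to $\sum_j i_j = 1+(q-1)\ell$.

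The core step is the standard bounded-composition count via inclusion–exclusion. The number of nonnegative integer solutions of $x_0+\dots+x_{m-1}=w$ with no upper bound is $\binom{w+m-1}{m-1}$; to impose $x_j\le q-1$ for all $j$, one subtracts, for each subset $S$ of the coordinates, the solutions in which $x_j\ge q$ for $j\in S$, with sign $(-1)^{|S|}$. Substituting $x_j=x_j'+q$ for $j\in S$ turns this into an unrestricted count with right-hand side $w-|S|q$, giving $\binom{w-|S|q+m-1}{m-1}$, and summing over the $\binom{m}{h}$ subsets of size $h$ yields
$$
|T_{(q,m,\ell)}|=\sum_{h=0}^{m}(-1)^h\binom{m}{h}\binom{w-hq+m-1}{m-1},
$$
where $w=1+(q-1)\ell$. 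Finally I would check that $w-hq+m-1 = (q-1)\ell - hq + m$ and $m-1 = 1+(q-1)\ell - hq$ only when... actually the second binomial bottom in the claimed formula is $1+(q-1)\ell-hq$, so I must verify the identity $\binom{w-hq+m-1}{m-1}=\binom{(q-1)\ell-hq+m}{1+(q-1)\ell-hq}$; this is exactly the symmetry $\binom{n}{k}=\binom{n}{n-k}$ with $n=(q-1)\ell-hq+m$ and $k=m-1$, since $n-k=(q-1)\ell-hq+m-(m-1)=1+(q-1)\ell-hq$. So the claimed closed form is just the symmetric rewriting of the inclusion–exclusion sum, and the proof is complete once this reindexing is recorded.

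The only genuine subtlety, and the step I expect to need the most care, is the convention on binomial coefficients with possibly negative or "out of range" arguments: when $hq>w$ the summand should vanish, which is consistent with the usual convention $\binom{n}{k}=0$ for $k<0$ (here $1+(q-1)\ell-hq<0$) — I would state this convention explicitly so the formula is unambiguous, and note that for $h$ large enough that $1+(q-1)\ell-hq$ becomes negative the terms drop out, matching the finite support of the underlying count. No further obstacle arises; the rest is the routine inclusion–exclusion over digit overflows.
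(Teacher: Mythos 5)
Your proof is correct and matches the paper's intent: the paper simply cites S\o rensen's Lemma 5, which is precisely the bounded-composition count you reduce to, and your inclusion--exclusion over digit overflows together with the symmetry $\binom{n}{m-1}=\binom{n}{n-m+1}$ and the convention $\binom{n}{k}=0$ for $k<0$ is the standard proof of that cited fact. The only (harmless) nit is that $T_{(q,m,\ell)}$ ranges over $\Z_N=\{0,\dots,q^m-2\}$ rather than $\{0,\dots,q^m-1\}$, but $i=q^m-1$ has $q$-weight $(q-1)m>1+(q-1)\ell$ for $q>2$ and $\ell\le m-1$, so it is never counted.
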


The dimensions of $\C_{(q, m ,\ell)}$ and $\C_{(q, m ,\ell)}^\perp$ are documented in the following lemma.

\begin{lemma}\label{lem::6}
Let $0\leq \ell \leq m-1$. Then
$$
\dim( \C_{(q, m, \ell)})=n-\sum_{h=0}^{m}(-1)^h \binom{m}{h}\binom{(q-1)\ell-h q+m}{1+(q-1)\ell-h q}	
$$
and
$$
\dim( \C_{(q, m, \ell)}^\perp)=\sum_{h=0}^{m}(-1)^h \binom{m}{h}\binom{(q-1)\ell-h q+m}{1+(q-1)\ell-h q}.	
$$
\end{lemma}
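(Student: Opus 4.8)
The plan is to reduce the dimension computation to the cardinality $|T_{(q, m, \ell)}|$ already evaluated in Lemma \ref{lem::5}. First I would record the elementary observation that every $i$ with $\wt_q(i)=1+(q-1)\ell$ satisfies $i\equiv \wt_q(i)\equiv 1\pmod{q-1}$, since $q\equiv 1\pmod{q-1}$; hence $T_{(q, m, \ell)}\subseteq \Omega_{(q-1, n)}^{(1)}$, so that each $\beta^i$ with $i\in T_{(q, m, \ell)}$ is a genuine root of $x^n-\lambda$, and these roots are pairwise distinct because $\beta$ has order $N=q^m-1$. Combined with the remark following Definition \ref{Def::1} that $T_{(q, m, \ell)}$ is a union of $q$-cyclotomic cosets modulo $N$, this shows $g_{(q, m, \ell)}(x)\in \gf(q)[x]$ is monic, divides $x^n-\lambda$, and is by construction the generator polynomial of $\C_{(q, m, \ell)}$, of degree exactly $|T_{(q, m, \ell)}|$.

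Next I would invoke the standard fact that a $q$-ary $\lambda$-constacyclic code of length $n$ with generator polynomial $g(x)\mid x^n-\lambda$ has dimension $n-\deg g(x)$ (a basis being $g(x), xg(x),\ldots, x^{\,n-1-\deg g}g(x)$). Applied to $g=g_{(q, m, \ell)}$ this gives
\[
\dim(\C_{(q, m, \ell)})=n-|T_{(q, m, \ell)}|,
\]
and substituting the closed form for $|T_{(q, m, \ell)}|$ from Lemma \ref{lem::5} yields the first displayed formula.

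For the dual I would use either of two equivalent routes: by Lemma \ref{lem::2}, $\C_{(q, m, \ell)}^\perp$ has generator polynomial $\widehat{h}(x)$ with $h(x)=(x^n-\lambda)/g_{(q, m, \ell)}(x)$, and $\deg\widehat{h}=\deg h=n-|T_{(q, m, \ell)}|$, whence $\dim(\C_{(q, m, \ell)}^\perp)=n-(n-|T_{(q, m, \ell)}|)=|T_{(q, m, \ell)}|$; or, even more directly, $\dim\C+\dim\C^\perp=n$ for any linear code. A final appeal to Lemma \ref{lem::5} gives the second displayed formula.

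I do not expect any real obstacle: essentially all the content sits in Lemma \ref{lem::5} (equivalently \cite[Lemma 5]{Sorensen}), which is quoted. The only point worth a line of care is checking $T_{(q, m, \ell)}\subseteq \Omega_{(q-1, n)}^{(1)}$, so that $g_{(q, m, \ell)}(x)$ genuinely divides $x^n-\lambda$ and is the honest generator polynomial; the rest is the standard dictionary between a constacyclic code, its generator and check polynomials, and its dual.
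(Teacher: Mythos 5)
Your proposal is correct and follows essentially the same route as the paper: the authors simply note that $\dim(\C_{(q,m,\ell)})=n-|T_{(q,m,\ell)}|$ and $\dim(\C_{(q,m,\ell)}^\perp)=|T_{(q,m,\ell)}|$ and then quote Lemma \ref{lem::5}. Your extra verification that $T_{(q,m,\ell)}\subseteq \Omega_{(q-1,n)}^{(1)}$ (so that $g_{(q,m,\ell)}(x)$ genuinely divides $x^n-\lambda$) is a correct and worthwhile detail that the paper leaves implicit.
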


\begin{proof}
  By definition, $\dim(\C_{(q, m, \ell)})=n-|T_{(q, m, \ell)}|$ and $\dim(\C_{(q, m, \ell)}^\perp)=|T_{(q, m, \ell)}|$. The desired results follow from Lemma \ref{lem::5}.
\end{proof}

To lower bound the minimum distance of the $\lambda$-constacyclic code $\C_{(q, m, \ell)}$, we need the following lemmas.

\begin{lemma}\label{lem::7}
Let $m\geq 3$ be odd. Then the following hold.
\begin{enumerate}
\item For any $0\leq \ell \leq (m-3)/2$,
$$
\{2q^\ell-1+(q^{(m+1)/2}-1)i:~0\leq i\leq 2q^\ell-1 \}	\subseteq T_{(q, m, \ell)}.
$$
\item If $\ell=(m-1)/2$,
$$
\{q^{m-1}+(q^{(m-1)/2}-1)i:~1\leq i\leq q^\ell \}	\subseteq T_{(q, m, \ell)}.
$$
\item For any $(m+1)/2 \leq \ell \leq m-1$,
$$
\{q^m-(q-1)q^{m-1-\ell}-(q^{(m+1)/2}-1)i:~0\leq i\leq (q-1)q^{m-1-\ell}-1 \}	\subseteq T_{(q, m, \ell)}.
$$
\end{enumerate}
\end{lemma}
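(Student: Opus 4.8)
The statement to prove is Lemma \ref{lem::7}, which asserts three explicit sets of integers (arithmetic progressions, possibly taken modulo $N$) are contained in $T_{(q,m,\ell)} = \{i\in\Z_N: \wt_q(i) = 1+(q-1)\ell\}$.

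\textbf{Approach.} In each of the three items, the plan is the same: take a generic element of the claimed set, find its $q$-adic expansion explicitly (reducing modulo $N=q^m-1$ when necessary), and then sum the $q$-adic digits to verify the $q$-weight equals $1+(q-1)\ell$. The key structural observation is that multiplying an index $i$ by $q^{(m+1)/2}-1$ (or $q^{(m-1)/2}-1$) and adding a suitable constant produces, for $i$ in the stated range, a number whose base-$q$ digits split into a "low block" carrying the digits of $i$ and a "high block" carrying the complementary digits $q-1-($digit of $i)$, plus a fixed correction term. That kind of complementation is exactly what forces the digit sum to be constant: if a digit $i_j$ appears in the low block and $q-1-i_j$ in the high block, the pair contributes $q-1$ regardless of $i_j$.

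\textbf{Step-by-step.} First, for item (1) with $0\le\ell\le(m-3)/2$: set $h=(m+1)/2$ and write $j=2q^\ell-1$, which has $q$-adic expansion with $\ell$ digits equal to $q-1$, one digit equal to $1$, and the rest zero, so $\wt_q(j)=1+(q-1)\ell$. For $0\le i\le 2q^\ell-1$ the product $(q^h-1)i = iq^h - i$; since $i< q^{h}$ and the low part $j$ also has degree $<h$, I expect $2q^\ell-1 + (q^h-1)i$ to have the digits of $i$ in positions $\ge h$ (after accounting for the $-i$ borrow against $j$, which needs care when $j<i$) and the digits of $j-i$ or $j+q^h-i$ in the low positions. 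The cleanest route is: write the element as $iq^h + (j-i)$ when $i\le j$, and as $(i-1)q^h + (q^h+j-i)$ when $i>j$; in both cases one checks the two blocks have digit sums adding to $1+(q-1)\ell$. For item (2), $\ell=(m-1)/2$: here $q^{m-1}$ contributes one digit $1$ in position $m-1$, and $(q^{(m-1)/2}-1)i$ for $1\le i\le q^\ell$ contributes the complementation pattern in positions $0,\dots,m-2$; the digit sum should come out to $1 + (q-1)\cdot\frac{m-1}{2}$. For item (3), $(m+1)/2\le\ell\le m-1$: the element $q^m - (q-1)q^{m-1-\ell} - (q^h-1)i$ is naturally $\le N$, and subtracting from $q^m-1$ (i.e.\ taking the $(q-1)$-complement of each digit) converts the claim about $\wt_q$ of this element into the already-established pattern — indeed $\wt_q(N-x) = (q-1)m - \wt_q(x)$, so item (3) should follow from a computation dual to items (1)–(2), and one should check the arithmetic $(q-1)m - (q-2+(q-1)(m-1-\ell)) = 1+(q-1)\ell$ matches, consistent with the complement remark preceding the lemma.

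\textbf{Main obstacle.} The delicate part is the bookkeeping of borrows and carries when reducing $2q^\ell-1+(q^h-1)i$ or the analogous expressions: the range of $i$ is chosen precisely so that no carry escapes past position $m-1$ (which would wrap around mod $N$ and change the weight), and the boundary cases $i=2q^\ell-1$, $i=q^\ell$, etc., sit exactly at the edge where the digit pattern changes. I would handle this by splitting each item into the two or three sub-cases according to the size of $i$ relative to the low block (as the authors do elsewhere in Section \ref{sec3}), writing down the explicit expansion in each sub-case, and summing digits; the "complementary pair contributes $q-1$" principle then makes each sum a one-line check. No deep idea is needed beyond this careful base-$q$ digit analysis, so I do not anticipate a genuine conceptual difficulty — only the need to be meticulous with the edge cases and the mod-$N$ reduction.
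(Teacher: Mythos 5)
Your proposal is correct and follows essentially the same route as the paper: write each element as a high block $iq^{h}$ plus a low block obtained by digit complementation (the paper makes this uniform by writing $i=i_1q^\ell+i_0$ rather than case-splitting, and in item~(1) your worried case $i>j$ never occurs since $i\leq 2q^\ell-1=j$), then for item~(3) apply $\wt_q(N-x)=(q-1)m-\wt_q(x)$ exactly as you describe. The only work left is the routine digit bookkeeping you already outline, so there is no gap.
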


\begin{proof}
1. For any $0\leq i\leq 2q^{\ell}-1$,
\begin{equation}\label{EQ::22}
2q^{\ell}-1+(q^{(m+1)/2}-1)i=i q^{(m+1)/2}+2q^{\ell}-1-i.	
\end{equation}
Suppose $i=i_1 q^\ell+i_0$, where $i_1\in \{0,1\}$ and $i_0\in \{0,1,\cdots, q^\ell-1\}$. By Equation (\ref{EQ::22}), we obtain that
\begin{equation}\label{EQ::23}
2q^{\ell}-1+(q^{(m+1)/2}-1)i=(i_1q^{\ell}+i_0) q^{(m+1)/2}+(1-i_1)q^{\ell}+q^\ell-1-i_0.	
\end{equation}
By Equation (\ref{EQ::23}), we get that
    \begin{align*}
		&\wt_q(2q^{\ell}-1+(q^{(m+1)/2}-1)i)\notag \\
		=~&i_1+\wt_q(i_0)+1-i_1+\wt_q( q^{\ell}-1-i_0) \notag \\
		=~&1+\wt_q(i_0)+(q-1)\ell-\wt_q(i_0) \notag \\
		=~& 1+(q-1)\ell.
	\end{align*}
The desired result follows.

2. For any $1\leq i\leq q^{(m-1)/2}$,
\begin{align}\label{EQ::25}
q^{m-1}+(q^{(m-1)/2}-1)i =q^{m-1}+(i-1)q^{(m-1)/2}+q^{(m-1)/2}-1-(i-1).
\end{align}
By Equation (\ref{EQ::25}), we have that
\begin{align*}
&\wt_q( q^{m-1}+(q^{(m-1)/2}-1)i  )\notag \\
=~&1+\wt_q(i-1)+(q-1)((m-1)/2)-\wt_q(i-1)\notag \\
=~&1+(q-1)[(m-1)/2].
\end{align*}
The desired result follows.

3. Since $(m+1)/2\leq \ell \leq m-1$, $0\leq m-1-\ell\leq (m-3)/2$. Suppose $i=i_1 q^{m-1-\ell}+i_0$, where $0\leq i_1\leq q-2$ and $0\leq i_0\leq q^{m-1-\ell}-1$. Then
\begin{align}\label{EQ::27}
&(q-1)q^{m-1-\ell}-1+(q^{(m+1)/2}-1)i\notag\\
=~&i q^{(m+1)/2}+(q-1)q^{m-1-\ell}-1-i\notag \\
=~&(i_1 q^{m-1-\ell}+i_0)q^{(m+1)/2}+(q-2-i_1) q^{m-1-\ell}+ q^{m-1-\ell}-1-i_0.
\end{align}
By Equation (\ref{EQ::27}), we get that
\begin{align}\label{EQ::28}
&~\wt_q((q-1)q^{m-1-\ell}-1+(q^{(m+1)/2}-1)i )\notag \\
=&~	i_1+\wt_q(i_0)+q-2-i_1+\wt_q(q^{m-1-\ell}-1-i_0)\notag \\
=&~q-2+(q-1)(m-1-\ell).
\end{align}
 By Equation (\ref{EQ::28}),
 \begin{align*}
 &~\wt_q( q^m-(q-1)q^{m-1-\ell}-(q^{(m+1)/2}-1)i) \notag\\	
 =&~(q-1)m-\wt_q((q-1)q^{m-1-\ell}-1+(q^{(m+1)/2}-1)i )\notag \\
=&~1+(q-1)\ell.
 \end{align*}
The desired result follows.	
\end{proof}

Similar to Lemma \ref{lem::7}, we can prove the following lemmas.

\begin{lemma}\label{lem::8}
Let $m\geq 4$ and $m\equiv 0\pmod{4}$. Then the following hold.
\begin{enumerate}
\item For any $0\leq \ell \leq (m-4)/2$,
$$
\{2q^\ell-1+(q^{(m+2)/2}-1)i:~0\leq i\leq 2q^\ell-1 \}	\subseteq T_{(q, m, \ell)}.
$$
\item If $\ell=(m-2)/2$,
$$
\{q^{m-1}+(q^{(m-2)/2}-1)i:~1\leq i\leq q^\ell \}	\subseteq T_{(q, m, \ell)}.
$$

\item If $\ell=m/2$,
$$
\{2q^{m-1}-1-(q^{(m-2)/2}-1)i:~1\leq i\leq q^{(m-2)/2} \}	\subseteq T_{(q, m, \ell)}.
$$

\item For any $(m+2)/2 \leq \ell \leq m-1$,
$$
\{q^m-(q-1)q^{m-1-\ell}-(q^{(m+2)/2}-1)i:~0\leq i\leq (q-1)q^{m-1-\ell}-1 \}	\subseteq T_{(q, m, \ell)}.
$$
\end{enumerate}
\end{lemma}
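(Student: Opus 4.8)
The plan is to imitate the proof of Lemma \ref{lem::7} verbatim, since the statement of Lemma \ref{lem::8} is the $m \equiv 0 \pmod 4$ analogue of it, with $h = (m+1)/2$ replaced by $h = (m+2)/2$ (an integer now, because $m$ is even) and with one extra ``middle'' case. For each of the four items I would exhibit an explicit arithmetic-progression-like set, write each element's base-$q$ expansion cleanly (splitting off a carry-free top digit when needed), and compute its $q$-weight $\wt_q$, checking it equals $1 + (q-1)\ell$. The key device throughout is the identity $\wt_q(a) + \wt_q(q^t - 1 - a) = (q-1)t$ whenever $0 \le a \le q^t - 1$, which lets the $i$-dependent contributions cancel.

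\begin{proof}
Write $h=(m+2)/2$, which is an integer since $m$ is even; note $h$ is odd because $m\equiv 0\pmod 4$.

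\textbf{Item 1.} Let $0\le \ell\le (m-4)/2$, so $\ell+1\le h-1$. For $0\le i\le 2q^\ell-1$ write $i=i_1q^\ell+i_0$ with $i_1\in\{0,1\}$ and $0\le i_0\le q^\ell-1$. Exactly as in Equation (\ref{EQ::23}),
\begin{align*}
2q^\ell-1+(q^h-1)i
&=(i_1q^\ell+i_0)q^h+(1-i_1)q^\ell+(q^\ell-1-i_0),
\end{align*}
and since the three blocks occupy disjoint digit ranges (here one uses $\ell+1\le h$),
\begin{align*}
\wt_q\!\big(2q^\ell-1+(q^h-1)i\big)
&=i_1+\wt_q(i_0)+(1-i_1)+\wt_q(q^\ell-1-i_0)\\
&=1+(q-1)\ell.
\end{align*}
Hence the set lies in $T_{(q,m,\ell)}$.

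\textbf{Item 2.} Let $\ell=(m-2)/2=h-2$. For $1\le i\le q^{h-2}$,
\begin{align*}
q^{m-1}+(q^{h-2}-1)i
&=q^{m-1}+(i-1)q^{h-2}+(q^{h-2}-1-(i-1)),
\end{align*}
and, $m-1$ being far above the block at position $h-2$, we get
\begin{align*}
\wt_q\!\big(q^{m-1}+(q^{h-2}-1)i\big)
&=1+\wt_q(i-1)+\wt_q(q^{h-2}-1-(i-1))
=1+(q-1)(h-2),
\end{align*}
which is $1+(q-1)\ell$.

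\textbf{Item 3.} Let $\ell=m/2=h-1$. For $1\le i\le q^{h-2}$, write
\begin{align*}
2q^{m-1}-1-(q^{h-2}-1)i
&=q^{m-1}+(q^{m-1}-1-(q^{h-2}-1)i)\\
&=q^{m-1}+(q^{h-2}-i)q^{h-2}+\big(q^{h-2}-1-(q^{h-2}-i)\big)+\text{(top block)},
\end{align*}
after expanding $q^{m-1}-1$ in base $q$; more precisely one checks directly that
\begin{align*}
2q^{m-1}-1-(q^{h-2}-1)i
=\;&q^{m-1}+(q-1)\!\!\sum_{j=h-2}^{m-2}q^{j}\;-\;(q^{h-2}-1)i,
\end{align*}
and grouping the low $h-1$ digits as in Item 1 gives
\begin{align*}
\wt_q\!\big(2q^{m-1}-1-(q^{h-2}-1)i\big)
=\;&1+(q-1)\big((m-1)-(h-2)\big)=1+(q-1)(h-1)=1+(q-1)\ell.
\end{align*}

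\textbf{Item 4.} Let $(m+2)/2\le \ell\le m-1$, so $0\le m-1-\ell\le (m-4)/2$ and hence $m-1-\ell+1\le h-1$. For $0\le i\le (q-1)q^{m-1-\ell}-1$ write $i=i_1q^{m-1-\ell}+i_0$ with $0\le i_1\le q-2$ and $0\le i_0\le q^{m-1-\ell}-1$. As in Equation (\ref{EQ::27}),
\begin{align*}
(q-1)q^{m-1-\ell}-1+(q^h-1)i
&=(i_1q^{m-1-\ell}+i_0)q^h+(q-2-i_1)q^{m-1-\ell}+(q^{m-1-\ell}-1-i_0),
\end{align*}
so, by Equation (\ref{EQ::28}),
\begin{align*}
\wt_q\!\big((q-1)q^{m-1-\ell}-1+(q^h-1)i\big)
&=i_1+\wt_q(i_0)+(q-2-i_1)+\wt_q(q^{m-1-\ell}-1-i_0)\\
&=q-2+(q-1)(m-1-\ell).
\end{align*}
Since $\wt_q(q^m-1-x)=(q-1)m-\wt_q(x)$ for all $0\le x\le q^m-1$,
\begin{align*}
\wt_q\!\big(q^m-(q-1)q^{m-1-\ell}-(q^h-1)i\big)
&=(q-1)m-\big(q-2+(q-1)(m-1-\ell)\big)=1+(q-1)\ell.
\end{align*}
Thus the set lies in $T_{(q,m,\ell)}$, completing the proof.
\end{proof}

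The routine obstacle is purely bookkeeping: one must verify in each item that the digit blocks occupy pairwise disjoint positions so that the $q$-weights add (this is where the hypotheses $\ell\le(m-4)/2$, $\ell=(m-2)/2$, $\ell=m/2$, $\ell\ge(m+2)/2$ are used, together with $h=(m+2)/2$), and handle the mod-$N$ reduction implicit in Items 3--4 exactly as Theorem \ref{thm::7} does. The only genuinely new case relative to Lemma \ref{lem::7} is Item 3 (the $\ell=m/2$ ``middle'' slice), where the progression runs \emph{downward} from $2q^{m-1}-1$; I expect the sign handling in expanding $q^{m-1}-1$ and re-grouping the low digits to be the one place where a little care is needed, but no new idea is required.
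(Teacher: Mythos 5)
Your proposal is correct and is exactly the argument the paper intends: the paper gives no proof of this lemma, stating only that it is ``similar to Lemma \ref{lem::7}'', and your adaptation with $h=(m+2)/2$ (checking in each item that the digit blocks are disjoint, which is precisely where the ranges of $\ell$ enter) is that similar proof. One slip in Item 3: the displayed identity $2q^{m-1}-1-(q^{h-2}-1)i=q^{m-1}+(q-1)\sum_{j=h-2}^{m-2}q^{j}-(q^{h-2}-1)i$ is off by $q^{h-2}-1$ (the right-hand side equals $2q^{m-1}-q^{h-2}-(q^{h-2}-1)i$); the clean decomposition is $2q^{m-1}-1-(q^{h-2}-1)i=q^{m-1}+(q^{h-1}-i)q^{h-2}+(i-1)$, valid since $m-1=2h-3$, which immediately gives $\wt_q=1+(q-1)(h-1)-\wt_q(i-1)+\wt_q(i-1)=1+(q-1)\ell$ as you conclude.
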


\begin{lemma}\label{lem::9}
Let $m\geq 6$ and $m\equiv 2\pmod{4}$. Then the following hold.
\begin{enumerate}
\item For any $0\leq \ell \leq (m-6)/2$,
$$
\{2q^\ell-1+(q^{(m+4)/2}-1)i:~0\leq i\leq 2q^\ell-1 \}	\subseteq T_{(q, m, \ell)}.
$$
\item If $\ell=(m-4)/2$,
$$
\{q^{m-1}+(q^{(m-4)/2}-1)i:~1\leq i\leq q^\ell \}	\subseteq T_{(q, m, \ell)}.
$$

\item If $\ell=(m-2)/2$,
$$
\{q^{m-1}+(q-1)q^{m-2}+(q^{(m-4)/2}-1)i:~1\leq i\leq q^{(m-4)/2} \}	\subseteq T_{(q, m, \ell)}.
$$

\item If $\ell=m/2$,
$$
\{q^{m-1}+(q-1)q^{m-2}+(q-1)q^{m-3}+(q^{(m-4)/2}-1)i:~1\leq i\leq q^{(m-4)/2} \}	\subseteq T_{(q, m, \ell)}.
$$

\item If $\ell=(m+2)/2$,
$$
\{2q^{m-1}-1-(q^{(m-4)/2}-1)i:~1\leq i\leq q^{(m-4)/2} \}	\subseteq T_{(q, m, \ell)}.
$$

\item For any $(m+4)/2 \leq \ell \leq m-1$,
$$
\{q^m-(q-1)q^{m-1-\ell}-(q^{(m+4)/2}-1)i:~0\leq i\leq (q-1)q^{m-1-\ell}-1 \}	\subseteq T_{(q, m, \ell)}.
$$
\end{enumerate}
\end{lemma}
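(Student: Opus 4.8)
The plan is to mimic the proof of Lemma \ref{lem::7} almost verbatim, since Lemma \ref{lem::9} is the analogue for $m\equiv 2\pmod 4$ with the ``middle'' block of exponents widened to accommodate the extra factor of $q^2$ coming from $\gcd$-considerations at the centre. Throughout, the strategy for each containment is the same: take a typical element of the claimed set, write down its $q$-adic expansion explicitly (using the identity $(q^k-1)i = iq^k - i$ together with a suitable split $i = i_1 q^{\cdot} + i_0$ so that the subtraction does not cause borrowing), and then read off $\wt_q$ by summing the $q$-weights of the digit blocks. The key arithmetic fact used repeatedly is that for $0 \le i_0 \le q^t-1$ one has $\wt_q(i_0) + \wt_q(q^t-1-i_0) = (q-1)t$, because $(q^t-1-i_0)$ is the digitwise complement of $i_0$ in $t$ digits.

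First I would handle the two ``generic'' ranges, items 1 and 6, exactly as in parts 1 and 3 of Lemma \ref{lem::7}: for item 1, with $0\le\ell\le(m-6)/2$ write $i = i_1 q^\ell + i_0$ with $i_1\in\{0,1\}$, expand $2q^\ell-1+(q^{(m+4)/2}-1)i = iq^{(m+4)/2} + (1-i_1)q^\ell + (q^\ell-1-i_0)$, and compute $\wt_q = i_1 + \wt_q(i_0) + (1-i_1) + (q-1)\ell - \wt_q(i_0) = 1+(q-1)\ell$; item 6 is the complement version, using $\wt_q(q^m-x) = (q-1)m - \wt_q(x)$ and the fact that the inner set for $\ell$ there maps to weight $q-2+(q-1)(m-1-\ell)$, whose complement weight is $1+(q-1)\ell$. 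Then I would treat the five ``boundary'' values $\ell = (m-4)/2,\,(m-2)/2,\,m/2,\,(m+2)/2$ (items 2--5) one at a time; each is a short direct digit-count. For instance in item 2, $q^{m-1}+(q^{(m-4)/2}-1)i = q^{m-1} + (i-1)q^{(m-4)/2} + (q^{(m-4)/2}-1-(i-1))$ for $1\le i\le q^{(m-4)/2}$, giving $\wt_q = 1 + \wt_q(i-1) + (q-1)\tfrac{m-4}{2} - \wt_q(i-1) = 1+(q-1)\tfrac{m-4}{2}$, which is $1+(q-1)\ell$; items 3 and 4 add one or two full ``$(q-1)q^{j}$'' digits at known positions that are disjoint from the active block, each contributing exactly $q-1$ to the weight, so one checks the positions $m-2$, $m-3$ do not collide with the range of $iq^{(m-4)/2}$ and with $q^{m-1}$; item 5 is a subtraction $2q^{m-1}-1-(q^{(m-4)/2}-1)i$, best rewritten as in Lemma \ref{lem::8}(3) by pulling out the borrow so no digit goes negative.

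The only real content beyond bookkeeping is checking that in each of items 2--5 the stated range of $i$ is small enough that the three (or four) digit blocks — the fixed high digit(s) near position $m-1$, the block carrying $i$ near position $(m-4)/2$, and the complemented low block — occupy pairwise disjoint sets of $q$-adic coordinates, so that no carries propagate and the weights simply add. This is where the choice $(m+4)/2$ (rather than $(m+2)/2$ as in the odd case or $(m+2)/2$ in the $m\equiv0$ case) matters: the middle region is now two digits wider, which is exactly what forces the five separate boundary cases instead of the three in Lemma \ref{lem::7}. I expect the main obstacle to be purely organizational — getting the index ranges and the digit positions to line up without off-by-one errors in items 3, 4, and 5, where extra fixed digits are inserted — rather than anything conceptually new. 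Having verified the disjointness in each case, the conclusion ``$\wt_q = 1+(q-1)\ell$, hence the element lies in $T_{(q,m,\ell)}$'' follows immediately from Definition \ref{Def::1}, and the proof closes with the remark that the computation is ``similar to the proof of Lemma \ref{lem::7}'', as the statement preceding the lemma already advertises.
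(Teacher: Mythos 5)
Your proposal is correct and follows exactly the route the paper intends: the paper gives no explicit proof of this lemma, stating only that it is proved "similar to Lemma \ref{lem::7}," and your digit-block decompositions (including the borrow handling in item 5 and the complementation argument in item 6) are precisely the computations that similarity entails. The only nit is a harmless slip in your motivational remark, where the shift in the odd case is $(m+1)/2$, not $(m+2)/2$; this does not affect any of the actual verifications.
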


To lower bound the minimum distance of the $\lambda^{-1}$-constacyclic code $\C_{(q, m, \ell)}^\perp$, we need the following lemmas.

\begin{lemma}\label{lem::10}
Let $m\geq 3$ and $0\leq \ell \leq m-1$. Then
$$
\{[1+(q-1)i] \bmod N:\  -(q^{m-1-\ell}-1)\leq i\leq 2(q^\ell-1)/(q-1)-1 \}	\subseteq \Omega_{(r, n)}^{(1)} \backslash T_{(q, m , \ell)}.
$$
\end{lemma}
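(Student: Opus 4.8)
The plan is to exhibit, for each $0 \le \ell \le m-1$, an explicit arithmetic progression of the stated length inside $\Omega_{(r,n)}^{(1)} \setminus T_{(q,m,\ell)}$, and then verify two things for each term: that it lies in $\Omega_{(r,n)}^{(1)}$ (i.e., is congruent to $1$ modulo $r = q-1$), and that its $q$-weight is \emph{not} equal to $1 + (q-1)\ell$, so it lies outside $T_{(q,m,\ell)}$. The progression here has common difference $q-1$ and base point $1$; since every term is $1 + (q-1)i$ for integer $i$, membership in $\Omega_{(r,n)}^{(1)}$ is automatic by construction (each term is $\equiv 1 \pmod{q-1}$), provided we check the terms are distinct modulo $N$ and lie in the right range. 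So the whole content is the $q$-weight computation.

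First I would split the index range at $i = 0$, treating the nonnegative part $0 \le i \le 2(q^\ell-1)/(q-1) - 1$ and the negative part $-(q^{m-1-\ell}-1) \le i \le -1$ separately, because for negative $i$ the value $[1+(q-1)i] \bmod N$ equals $q^m - 1 + 1 + (q-1)i = q^m + (q-1)i$ (after reduction), which has a different $q$-adic shape than the positive case. For the nonnegative part, I would write $1 + (q-1)i = qi - (i-1) = qi - i + 1$ and, upon substituting the $q$-adic expansion of $i$ (using that $i < 2q^{\ell}/(q-1) \cdot (q-1) = \ldots$, i.e. $i$ occupies roughly the bottom $\ell$ digits with a small carry), track how the subtraction of $i$ propagates; the upshot should be that $\wt_q(1+(q-1)i)$ takes values that are small — on the order of $\wt_q(i) + O(1)$ — and in particular strictly less than $1+(q-1)\ell$ for $\ell \ge 1$, while for $\ell = 0$ the range degenerates and the claim is a finite check. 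For the negative part, $q^m + (q-1)i = q^m - (q-1)|i|$ with $|i| \le q^{m-1-\ell}-1$; here I would compute the $q$-adic expansion of $q^m - 1 - ((q-1)|i| - 1)$ and show its $q$-weight is close to $(q-1)m$ minus something of size $(q-1)(m-1-\ell)$, i.e. around $1 + (q-1)\ell$ but provably \emph{unequal} to it — most cleanly by showing it is strictly larger (or producing a specific digit obstruction). This mirrors exactly the style of Lemma~\ref{lem::7} and its companions, where $\wt_q$ of an explicit expression is computed digit-by-digit.

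The cleanest route is probably to reduce everything to the identity $\wt_q(a) + \wt_q(q^k - 1 - a) = (q-1)k$ for $0 \le a \le q^k - 1$ (the same identity used implicitly throughout Lemma~\ref{lem::7}), apply it with suitable $k$ to the "low" part of each term, and then just add the contribution of the top digits. I would organize the argument so that the positive-$i$ terms have $\wt_q < 1+(q-1)\ell$ and the negative-$i$ terms have $\wt_q > 1+(q-1)\ell$; this not only proves exclusion from $T_{(q,m,\ell)}$ but makes the two halves visibly disjoint, so the progression has the full claimed length with no collisions modulo $N$.

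The main obstacle I anticipate is bookkeeping the carry when forming $1 + (q-1)i = qi - (i-1)$: for $i$ near the top of its range the digit pattern of $i$ can be complicated, and one must confirm no "wrap-around" past $q^m$ occurs (so that the reduction $\bmod\, N$ is trivial on the positive side) and that the borrow in $qi - (i-1)$ behaves uniformly. I would handle this by substituting $i = i_1 q^{\ell} + i_0$ (or a similar two-block split matched to $\ell$ versus $m-1-\ell$), exactly as in the proof of Lemma~\ref{lem::7}, so that the subtraction only touches the bottom $\ell$ digits plus one carry digit, after which $\wt_q$ is a sum of two independent pieces that telescopes via the identity above. Everything else is routine digit arithmetic of the kind already carried out in Lemmas~\ref{lem::7}--\ref{lem::9}.
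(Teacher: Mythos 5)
Your plan is correct and follows essentially the same route as the paper: split the progression at $i=0$, note that membership in $\Omega_{(r,n)}^{(1)}$ is automatic, show the nonnegative terms have $q$-weight strictly below $1+(q-1)\ell$ (the paper does this most cleanly from the size bound $1\leq 1+(q-1)i\leq 2q^{\ell}-q$ rather than via $qi-(i-1)$), and handle the negative terms through the complement identity $\wt_q([1+(q-1)i]\bmod N)=(q-1)m-\wt_q((q-1)(-i)-1)$, forcing the weight strictly above $1+(q-1)\ell$. The one ingredient you leave implicit is how to get the strict upper bound $\wt_q((q-1)(-i)-1)\leq (q-1)(m-1-\ell)-1$; the paper's ``digit obstruction'' is exactly the congruence $\wt_q((q-1)(-i)-1)\equiv -1 \pmod{q-1}$ combined with a top-digit case analysis, which fits your outline.
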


\begin{proof}
If $0\leq i\leq 2(q^\ell-1)/(q-1)-1$, then $1\leq 1+(q-1)i\leq 2q^\ell-q$. It follows that
$$
\wt_q(1+(q-1)i)<(q-1)\ell+1
$$
for each $0\leq i\leq 2(q^\ell-1)/(q-1)-1$. It then follows that
$$
1+(q-1)i\in \Omega_{(r, n)}^{(1)} \backslash T_{(q, m , \ell)}.	
$$
If $-(q^{m-1-\ell}-1)\leq i \leq -1$, it is easily checked that
$$
\wt_q([1+(q-1)i] \bmod{N})=(q-1)m-\wt_q((q-1)(-i) -1).	
$$
Therefore, to prove the desired result, we only need to prove
$$
\wt_q((q-1)i -1 )\leq (q-1)(m-1-\ell)-1
$$
for each $1\leq i\leq q^{m-1-\ell}-1$. Suppose
$$
	(q-1)i-1=i_{m-1-\ell} q^{m-1-\ell}+\cdots+i_1q+i_0,
$$
where $i_j\in \{0,1,\cdots, q-1\}$. Notice that
$$
	(q-1)(q^{m-1-\ell}-1)-1=(q-2)q^{m-1-\ell}+(q-1)q^{m-2-\ell}+\cdots+(q-1)q
$$
and $\wt_q((q-1)(q^{m-1-\ell}-1)-1 )=(q-1)(m-1-\ell)-1$. If $i_{m-1-\ell}=q-2$ and $i<q^{m-1-\ell}-1$, then there is $v\in \{1, 2,\cdots, m-2-\ell\}$ such that $0\leq i_v\leq q-2$. Consequently,
\begin{align}\label{EQ::48}
\wt_q((q-1)i-1)	&\leq 2(q-2)+(q-1)(m-2-\ell) \notag \\
&= q-3+(q-1)(m-1-\ell).
\end{align}
If $0\leq i_{m-1-\ell}\leq q-3$, we get that
\begin{equation}\label{EQ::49}
\wt_q((q-1)i-1)\leq (q-3)+(q-1)(m-1-\ell).	
\end{equation}
Notice that $\wt_q((q-1)i-1)\equiv -1\pmod{q-1}$, by Equations (\ref{EQ::48}) and (\ref{EQ::49}), we obtain that
$$
\wt_q((q-1)i-1)\leq (q-1)(m-1-\ell)-1.
$$
The desired result follows. This completes the proof.
\end{proof}

Collecting the lemmas above, we arrive at the following theorem.

\begin{theorem}\label{Thm::11}
Let $q>2$ be a prime power. Let $m\geq 2$ and $0\leq \ell \leq m-1$. Then the $\lambda$-constacyclic code $\C_{(q, m, \ell)}$ has parameters $[(q^m-1)/(q-1), k, d\geq d_{(q, m,\ell)}]$, where
$$
k=(q^m-1)/(q-1)-\sum_{h=0}^{m}(-1)^h \binom{m}{h}\binom{(q-1)\ell-h q+m}{1+(q-1)\ell-h q}
$$
and
\begin{align*}
d_{(q, m,\ell)}=\begin{cases}
2q^\ell+1~&{\rm if~}m~{\rm is~odd~ and~}0\leq \ell \leq (m-3)/2,\\
~&{\rm or~} m \equiv 0 ~({\rm mod}~4)~{\rm and}~0\leq \ell\leq (m-4)/2,\\
 &{\rm or~}m \equiv 2 ~({\rm mod}~4)~{\rm and}~0\leq \ell\leq (m-6)/2,\\
q^{(m-4)/2}+1~&{\rm if~}m \equiv 2 ~({\rm mod}~4)~and~(m-4)/2\leq \ell \leq (m+2)/2,\\
q^{(m-2)/2}+1~&{\rm if~}m \equiv 0 ~({\rm mod}~4)~{\rm and}~\ell \in \{ (m-2)/2,m/2\},\\
 q^{(m-1)/2}+1~&{\rm if}~m~{\rm is~odd~and~}\ell=(m-1)/2,\\
 (q-1)q^{m-1-\ell}+1~&{\rm otherwise}.
\end{cases}
\end{align*}
The $\lambda^{-1}$-constacyclic code $\C_{(q, m, \ell)}^\perp$ has parameters $[(q^m-1)/(q-1), k^\perp, d^\perp \geq d^\perp_{(q, m,\ell)}]$, where
$$
k^\perp=\sum_{h=0}^{m}(-1)^h \binom{m}{h}\binom{(q-1)\ell-h q+m}{1+(q-1)\ell-h q}
$$
and $d^\perp_{(q, m,\ell)}=q^{m-1-\ell}+2(q^\ell-1)/(q-1)$.
\end{theorem}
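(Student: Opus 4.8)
The plan is to assemble Theorem~\ref{Thm::11} from the machinery already built up in Section~\ref{sec4}. The dimension statements are immediate: the formula for $k$ and $k^\perp$ is exactly the content of Lemma~\ref{lem::6}, since $\dim(\C_{(q,m,\ell)}) = n - |T_{(q,m,\ell)}|$ and $\dim(\C_{(q,m,\ell)}^\perp) = |T_{(q,m,\ell)}|$, with $|T_{(q,m,\ell)}|$ given by Lemma~\ref{lem::5}. So the real work is the two minimum-distance bounds, and both come from the constacyclic BCH bound (Lemma~\ref{lem::1} for $\C_{(q,m,\ell)}$, Lemma~\ref{lem::3} for $\C_{(q,m,\ell)}^\perp$) applied to explicit arithmetic progressions sitting inside, respectively, the defining set $T_{(q,m,\ell)}$ and its complement $\Omega_{(r,n)}^{(1)} \setminus T_{(q,m,\ell)}$.

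For the lower bound on $d(\C_{(q,m,\ell)})$, I would split into the cases according to the parity/residue of $m$ and the range of $\ell$, exactly matching the case split in the statement of $d_{(q,m,\ell)}$. Each case is handled by one of Lemmas~\ref{lem::7}, \ref{lem::8}, \ref{lem::9}: these lemmas exhibit, for each regime of $(m,\ell)$, a set of the form $\{b + a i \bmod N : h \le i \le h+\delta-2\} \subseteq T_{(q,m,\ell)}$ where $a$ is one of $q^h-1$ or $q^h+1$ (so $\gcd(a,N) = q-1 = r$ by the well-known gcd lemma, after checking $m/\gcd$ is odd), $b \equiv 1 \pmod r$, and the length of the progression is precisely $\delta - 1$ with $\delta$ equal to the claimed value of $d_{(q,m,\ell)}$. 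Concretely: Lemma~\ref{lem::7}(1) gives $\delta = 2q^\ell+1$ for $m$ odd, $0 \le \ell \le (m-3)/2$; part (2) gives $\delta = q^{(m-1)/2}+1$ for $\ell = (m-1)/2$; part (3) gives $\delta = (q-1)q^{m-1-\ell}+1$ for $(m+1)/2 \le \ell \le m-1$; and Lemmas~\ref{lem::8}, \ref{lem::9} do the analogous job when $m \equiv 0$ and $m \equiv 2 \pmod 4$. One must also verify that $2 \le \delta \le n$ so the BCH bound applies, which is routine. The "otherwise" branch collects the remaining $\ell$-ranges, all covered by the "part (3)/(4)/(6)" type statements of the three lemmas.

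For the lower bound $d^\perp_{(q,m,\ell)} = q^{m-1-\ell} + 2(q^\ell-1)/(q-1)$ on $d(\C_{(q,m,\ell)}^\perp)$, I would apply Lemma~\ref{lem::3} directly to the progression produced by Lemma~\ref{lem::10}: that lemma shows
$\{[1+(q-1)i]\bmod N : -(q^{m-1-\ell}-1) \le i \le 2(q^\ell-1)/(q-1)-1\} \subseteq \Omega_{(r,n)}^{(1)} \setminus T_{(q,m,\ell)}$,
a progression with common difference $a = q-1$ (so $\gcd(a,N) = q-1 = r$), offset $b = 1 \equiv 1 \pmod r$, and total length $(q^{m-1-\ell}-1) + 2(q^\ell-1)/(q-1) - 1 + 1 + 1 = q^{m-1-\ell} + 2(q^\ell-1)/(q-1) - 1 = \delta - 1$, so the BCH bound yields $d(\C_{(q,m,\ell)}^\perp) \ge q^{m-1-\ell} + 2(q^\ell-1)/(q-1)$ as claimed, after checking $2 \le \delta \le n$.

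The step I expect to be the main obstacle is not any single deduction here but the bookkeeping: making sure the case split in the displayed formula for $d_{(q,m,\ell)}$ exhausts all pairs $(m,\ell)$ with $0 \le \ell \le m-1$ and that each case is matched to the correct part of the correct lemma among \ref{lem::7}--\ref{lem::9}, with the boundary values of $\ell$ (the $(m-1)/2$, $m/2$, $(m\pm2)/2$, $(m\pm4)/2$ cases) assigned consistently and no overlap producing a contradictory bound. Since Lemmas~\ref{lem::8} and \ref{lem::9} are asserted ("similar to Lemma~\ref{lem::7}") rather than proved in the excerpt, the honest version of this proof would either invoke them as given or, if full rigour is wanted, carry out the same $q$-adic digit bookkeeping as in the proof of Lemma~\ref{lem::7} — writing $i = i_1 q^{\lfloor\cdot\rfloor} + i_0$, expanding $b + ai$ in base $q$, and checking the digit sum equals $1 + (q-1)\ell$ (or its complement $q-2+(q-1)(m-1-\ell)$, then using $\wt_q(N-j) = (q-1)m - \wt_q(j)$). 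Given the lemmas, the theorem itself is essentially a table-assembly argument.
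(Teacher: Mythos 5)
Your proposal matches the paper's proof essentially verbatim: the dimensions come from Lemma \ref{lem::6} (via Lemma \ref{lem::5}), the lower bound on $d(\C_{(q,m,\ell)})$ comes from applying the constacyclic BCH bound to the arithmetic progressions exhibited in Lemmas \ref{lem::7}, \ref{lem::8} and \ref{lem::9} in the corresponding regimes of $(m,\ell)$, and the bound on $d(\C_{(q,m,\ell)}^\perp)$ comes from Lemma \ref{lem::3} applied to the progression of Lemma \ref{lem::10}, whose length you have counted correctly. The paper's own proof is exactly this table-assembly argument, so no further comment is needed.
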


\begin{proof}
The dimensions of two codes $\C_{(q, m,\ell)}$ and $\C_{(q, m,\ell)}^\perp$ follow from Lemma \ref{lem::6}. The lower bounds on the minimum distances of $\C_{(q, m, \ell)}$ follow from the constacyclic BCH bound and Lemmas \ref{lem::7}, \ref{lem::8} and \ref{lem::9}. The lower bound on the minimum distance of $\C_{(q, m, \ell)}^\perp$ follows from the constacyclic BCH bound and Lemma \ref{lem::10}.	 This completes the proof.
\end{proof}

The experimental data in Table \ref{Tab-code1} shows that the constacyclic codes $\C_{(q, m, \ell)}$ and $\C_{(q, m, \ell)}^\perp$ have very good parameters.

\begin{table*}
\begin{center}
\caption{Examples of the codes $\C_{(q, m, \ell)}$ and $\C_{(q, m, \ell)}^\perp$}\label{Tab-code1}
\begin{tabular}{|c|c|c|c|c|c|c|} \hline
$q$ & $m$ & $\ell$ & $\C_{(q, m, \ell)}$ & Optimality & $\C_{(q, m, \ell)}^\perp$ & Optimality  \\ \hline
$3$ & $2$ & $0,1$ & $[4,2,3]$ & optimal linear code & $[4,2,3]$ & optimal linear code  \\ \hline
$3$ & $3$ & $0,2$ & $[13,10,3]$ & optimal linear code & $[13,3,9]$ & optimal linear code  \\ \hline
$3$ & $3$ & $1$ & $[13,6,6]$ & optimal linear code & $[13,7,5]$ & optimal linear code  \\ \hline
$3$ & $4$ & $0,3$ & $[40,36,3]$ & optimal linear code & $[40,4,27]$ & optimal linear code  \\ \hline
$3$ & $4$ & $1,2$ & $[40,24,9]$ & best linear code known & $[40,16,15]$ & best linear code known  \\ \hline
$3$ & $5$ & $0,4$ & $[121,116,3]$ & optimal linear code & $[121,5,81]$ & optimal linear code  \\ \hline
$3$ & $5$ & $1$ & $[121,91,9]$ & $d_{\rm best}=11$ & $[121,30,45]$ & $d_{\rm best}=46$  \\ \hline
$4$ & $2$ & $0$ & $[5,3,3]$ & optimal linear code & $[5,2,4]$ & optimal linear code  \\ \hline
$4$ & $2$ & $1$ & $[5,2,4]$ & optimal linear code & $[5,3,3]$ & optimal linear code  \\ \hline
$4$ & $3$ & $0$ & $[21,18,3]$ & optimal linear code & $[21,3,16]$ & optimal linear code  \\ \hline
$4$ & $3$ & $1$ & $[21,9,9]$ & optimal linear code & $[21,12,7]$ & optimal linear code  \\ \hline
$4$ & $3$ & $2$ & $[21,15,4]$ & $d_{\rm best}=5$ & $[21,6,12]$ & optimal linear code  \\ \hline
$5$ & $2$ & $0$ & $[6,4,3]$ & optimal linear code & $[6,2,5]$ & optimal linear code  \\ \hline
$5$ & $2$ & $1$ & $[6,2,5]$ & optimal linear code & $[6,4,3]$ & optimal linear code  \\ \hline
$5$ & $3$ & $0$ & $[31,28,3]$ & optimal linear code & $[31,3,25]$ & optimal linear code  \\ \hline
$5$ & $3$ & $1$ & $[31,13,13]$ & best linear code known & $[31,18,9]$ & best linear code known  \\ \hline
$5$ & $3$ & $2$ & $[31,21,5]$ & $d_{\rm best}=7$ & $[31,10,15]$ & best linear code known  \\ \hline
$7$ & $2$ & $0$ & $[8,6,3]$ & optimal linear code & $[8,2,7]$ & optimal linear code  \\ \hline
$7$ & $2$ & $1$ & $[8,2,7]$ & optimal linear code & $[8,6,3]$ & optimal linear code  \\ \hline
$7$ & $3$ & $0$ & $[57,54,3]$ & optimal linear code & $[57,3,49]$ & optimal linear code  \\ \hline
$7$ & $3$ & $1$ & $[57,24,21]$ & best linear code known & $[57,33,13]$ & $d_{\rm best}=15$   \\ \hline
$7$ & $3$ & $2$ & $[57,36,7]$ & $d_{\rm best}=13$ & $[57,21,21]$ & $d_{\rm best}=24$   \\ \hline
\end{tabular}
\end{center}
\end{table*}

\section{The subcodes of the constacyclic codes $\C_{(q, m, \ell)}$} \label{sec5}

In this section, we study the parameters of several classes of subcodes of the constacyclic codes $\C_{(q, m, \ell)}$. For any positive integer $i$, let $v_q(i)$ be a nonnegative integer $j$ such that $q^j \mid i$ and $q^{j+1}\nmid i$. We have the following lemmas.

\begin{lemma}\label{lem26}
Let $m\geq 5$ and $1\leq i\leq q^m-2$. Then $\wt_q(i-1)=\wt_q(i)-1+(q-1)v_q(i)$.	
\end{lemma}

\begin{proof}
Suppose $v_q(i)=s$. Then $i=[i_0+i_1q+\cdots+i_t q^t] q^s$, where $1\leq i_0\leq q-1$, $0\leq t\leq m-1-s$, $0\leq i_j\leq q-1$ for $1\leq j\leq t$. It follows that
$$i-1=[i_0-1+i_1q+\cdots+i_t q^t] q^s+q^s-1.$$
Consequently,
$$\wt_q(i-1)=(i_0-1)+i_1+\cdots+i_t+(q-1)s=\wt_q(i)-1+(q-1)s.$$
This completes the proof.
\end{proof}

\begin{lemma}\label{lem::12}
Let $q\geq 3$ and $m\geq 5$ be odd. Then
\begin{align*}
&\wt_q( q^{m-1}+(q^{(m-1)/2}-1)i )\\
=&\begin{cases}
1+(q-1)[(m-1)/2-v_q(-i)]~&{\rm if}~-q^{(m-1)/2}+1\leq i\leq -1,\\
1~&{\rm if}~i=0,\\	
1+(q-1)[(m-1)/2]~&{\rm if}~1\leq i\leq q^{(m-1)/2},\\
1+(q-1)(m-1)~&{\rm if}~i= q^{(m-1)/2}+1,\\
1+(q-1)[(m-1)/2+v_q(i-q^{(m-1)/2}-1 )]~&{\rm if}~q^{(m-1)/2}+2\leq i\leq  2q^{(m-1)/2}.
\end{cases}
\end{align*}
\end{lemma}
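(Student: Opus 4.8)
The plan is to compute $\wt_q(q^{m-1}+(q^{(m-1)/2}-1)i)$ directly by writing out the base-$q$ digits of the relevant integer in each of the five regimes, exactly as was done in the proofs of Lemma~\ref{lem::7} and Lemma~\ref{lem::12}'s ancestors. Write $s=(m-1)/2$, so $N=q^m-1$ and $q^m-1=(q^s-1)(q^s+1)$. The key algebraic identity underpinning every case is
\begin{equation*}
q^{m-1}+(q^s-1)i = i\cdot q^s + q^{m-1} - i,
\end{equation*}
valid whenever the right-hand side stays in $[0,q^m-1]$; when $i<0$ one first reduces modulo $N$ by adding $q^m-1$, and when $i>q^s$ the term $i\,q^s$ overflows past $q^m$ and one must carry. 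So the first step is to fix, for each of the five ranges of $i$, the correct representative in $\{0,1,\dots,q^m-1\}$.

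Next I would handle the "generic" middle ranges. For $1\le i\le q^s$ we have (as in part 2 of Lemma~\ref{lem::7}, after the substitution $i\mapsto i-1$) the clean splitting $q^{m-1}+(q^s-1)i = q^{m-1}+(i-1)q^s+(q^s-1-(i-1))$, whose top digit contributes $1$, the block $(i-1)q^s$ contributes $\wt_q(i-1)$, and the low block $q^s-1-(i-1)$ contributes $(q-1)s-\wt_q(i-1)$; the cross terms cancel and the total is $1+(q-1)s$, matching the claim. For $i=0$ the value is simply $q^{m-1}$, weight $1$. For $i=q^s+1$ the value is $q^{m-1}+(q^s-1)(q^s+1)=q^{m-1}+q^m-1$; reducing mod $N$ gives $q^m-1+q^{m-1}-\,?$ — more carefully, $q^{m-1}+q^m-1 > N$, so it reduces to $q^{m-1}-1$, which has $q$-weight $(q-1)(m-1)$, matching. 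Then for the two "shifted" ranges $-q^s+1\le i\le -1$ and $q^s+2\le i\le 2q^s$, the representatives are $N+q^{m-1}+(q^s-1)i$ and $q^{m-1}+(q^s-1)i-N$ respectively; in each case I would expand in base $q$, identify the block that equals $q^s-1-|{\cdot}|$ or $|{\cdot}|-1$ for an appropriate shift $|{\cdot}|$, and invoke Lemma~\ref{lem26} (the formula $\wt_q(j-1)=\wt_q(j)-1+(q-1)v_q(j)$) to account for the borrow/carry digits, which is precisely where the $v_q(-i)$ and $v_q(i-q^s-1)$ corrections enter. Because the hypotheses only assert containment up to a term $v_q(\cdot)$, one should double-check that $v_q(-i)\le s$ for $1\le -i\le q^s-1$ and similarly for the last range, so that $q^{m-1}$ is not itself swallowed by a carry — this is automatic since $|{\cdot}|<q^s$.

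The main obstacle will be the bookkeeping of carries in the two shifted ranges: when $i<0$ the quantity $(q^s-1)|i|$ produces a low block $q^s-1-|i|$ together with a borrow into the $q^s$-th digit, and one must verify that this borrow propagates exactly $v_q(|i|)$ positions (turning that many $0$'s into $(q-1)$'s) before being absorbed, so that the weight changes by $-1+(q-1)v_q(|i|)$ relative to the $i=0$ baseline — this is the content of Lemma~\ref{lem26} applied to $j=|i|$, but one has to be careful that the borrow does not interact with the fixed $q^{m-1}$ digit, which holds because $|i|\le q^s-1<q^{m-1}$. The range $q^s+2\le i\le 2q^s$ is symmetric: write $i=q^s+1+t$ with $1\le t\le q^s-1$, so $(q^s-1)i-N+q^{m-1}=q^{m-1}-1+(q^s-1)t$ after reduction, and again Lemma~\ref{lem26} converts $\wt_q$ of this into $1+(q-1)[s+v_q(t)]$. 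Once all five cases are verified, the lemma follows, and no appeal to the BCH bound is needed here — this lemma is purely a digit-counting statement that will be fed into a later minimum-distance argument.
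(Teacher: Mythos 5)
Your overall strategy is the paper's: split into the five ranges, expand $q^{m-1}+(q^{(m-1)/2}-1)i$ in base $q$, and use Lemma \ref{lem26} to produce the $v_q$ corrections in the two end ranges. The middle three cases ($i=0$, $1\le i\le q^{(m-1)/2}$, $i=q^{(m-1)/2}+1$) are handled correctly. But your setup for the two end ranges contains concrete algebraic errors: you assert that for $-q^{(m-1)/2}+1\le i\le -1$ the relevant representative is $N+q^{m-1}+(q^{(m-1)/2}-1)i$, and that for $q^{(m-1)/2}+2\le i\le 2q^{(m-1)/2}$ one must subtract $N$. Neither reduction is needed, and both are wrong. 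Writing $s=(m-1)/2$: for $|i|\le q^s-1$ one has $q^{2s}-(q^s-1)|i|\ge 2q^s-1>0$, and for $i\le 2q^s$ one has $q^{2s}+(q^s-1)i\le 3q^{2s}-2q^s<q^{2s+1}-1=N$ since $q\ge 3$; so in all five ranges the integer $q^{m-1}+(q^s-1)i$ already lies in $[0,N]$ and its $q$-weight is to be computed directly. Your claimed identity $q^{m-1}+(q^s-1)(q^s+1+t)-N=q^{m-1}-1+(q^s-1)t$ is false (the left-hand side is negative), and the representative you name has the wrong weight: for $q=3$, $m=5$, $t=1$ the true value is $169$ with $\wt_3(169)=5$, matching the lemma, whereas $\wt_3(88)=4$. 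Following your plan literally would therefore give wrong answers in exactly the two cases where the $v_q$ term appears.

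The repair is what the paper does. For the negative range set $j=-i$ and write $q^{2s}-(q^s-1)j=q^s\bigl[q^s-1-(j-1)\bigr]+j$, so the weight is $(q-1)s-\wt_q(j-1)+\wt_q(j)$; for the top range set $j=i-q^s-1$ and write $q^{2s}+(q^s-1)(q^s+1+j)=2q^{2s}+(j-1)q^s+(q^s-1-j)$, so the weight is $2+\wt_q(j-1)+(q-1)s-\wt_q(j)$. Lemma \ref{lem26} then converts $\wt_q(j)-\wt_q(j-1)$ into $1-(q-1)v_q(j)$, and both formulas of the statement drop out. Your instinct that Lemma \ref{lem26} is the engine behind the $v_q$ corrections is correct, but it must be applied to the actual integer, not to a spuriously reduced one; also note that in your decomposition of the negative case the low block is $j$ itself, not $q^s-1-j$.
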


\begin{proof}
By Lemma \ref{lem::7}, $$\wt_q(q^{m-1}+(q^{(m-1)/2}-1)i)=1+(q-1)[(m-1)/2]$$ for each $1\leq i\leq q^{(m-1)/2}$. It is easily checked that $\wt_q(q^{m-1})=1$ and $$\wt_q(q^{m-1}+(q^{(m-1)/2}+1)(q^{(m-1)/2}-1) )=1+(q-1)(m-1).$$
Notice that
$$q^{m-1}-(q^{(m-1)/2}-1)i=q^{(m-1)/2}[q^{(m-1)/2}-1-(i-1)]+i $$
and
$$q^{m-1}+(q^{(m-1)/2}-1)(q^{(m-1)/2}+1+i)=2q^{m-1}+(i-1)q^{(m-1)/2}+q^{(m-1)/2}-1-i.$$
Then, if $1\leq i\leq q^{(m-1)/2}-1$,
$$\wt_q(q^{m-1}-(q^{(m-1)/2}-1)i)=(q-1)[(m-1)/2]-\wt_q(i-1)+\wt_q(i)$$
and
$$\wt_q(q^{m-1}+(q^{(m-1)/2}-1)(q^{(m-1)/2}+1+i))=(q-1)[(m-1)/2]-\wt_q(i)+\wt_q(i-1)+2.$$
By Lemma \ref{lem26},
$$\wt_q(q^{m-1}-(q^{(m-1)/2}-1)i)=1+(q-1)[(m-1)/2-v_q(i)]$$
and
$$\wt_q(q^{m-1}+(q^{(m-1)/2}-1)(q^{(m-1)/2}+1+i))=1+(q-1)[(m-1)/2+v_q(i)].$$
This completes the proof.
\end{proof}

The parameters of the first class of subcodes of $\C_{(q, m, (m-1)/2)}$ are documented in the following theorem.

\begin{theorem}
	Let $m\geq 5$ be odd and ${\mathcal S}_1(\C)$ be the $q$-ary $\lambda$-constacyclic code of  length $n$ with defining set $T_{(q, m, (m-1)/2)} \cup C_1^{(q, N)}$ with respect to the primitive $N$-th root of unity $\beta$. Then ${\mathcal S}_1(\C)$ has parameters $\left [(q^m-1)/(q-1), \dim(\C_{(q, m, (m-1)/2)})-m, d\geq q^{(m-1)/2}+q+1 \right ]$, and the dual code $\mathcal{S}_1(\C)^\perp$ has parameters $$\left [(q^m-1)/(q-1), \dim(\C_{(q, m, (m-1)/2)}^\perp)+m, d^\perp \geq q^{(m-1)/2} \right ].$$
\end{theorem}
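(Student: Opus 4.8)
The plan is to mirror the structure of the proof of Theorem~\ref{Thm::11}: the dimension count is essentially free, and the two minimum-distance bounds come from the constacyclic BCH bound (Lemma~\ref{lem::1} for $\mathcal S_1(\C)$, Lemma~\ref{lem::3} for $\mathcal S_1(\C)^\perp$) once we exhibit a long enough arithmetic progression (with common difference $a$ coprime-up-to-$r$ to $N$) inside the defining set, respectively inside its complement in $\Omega_{(r,n)}^{(1)}$.

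First I would settle the dimensions. Since $\wt_q$ is constant on $q$-cyclotomic cosets, $T_{(q,m,(m-1)/2)}$ is a union of cosets, and $C_1^{(q,N)}$ is disjoint from it because $\wt_q(1)=1<1+(q-1)(m-1)/2$ when $m\ge 3$. The coset $C_1^{(q,N)}$ has size $m$ (as $\ord_N(q)=m$ here, $N=(q^m-1)/(q-1)$ and $\gcd(1,\cdot)$ is trivial; this needs the standard fact that $C_1$ has full size $m$, which holds since $q$ has order $m$ modulo $q^m-1$ and a fortiori modulo any divisor whose quotient is $q-1$). Hence $\dim(\mathcal S_1(\C)) = n - |T_{(q,m,(m-1)/2)}| - m = \dim(\C_{(q,m,(m-1)/2)}) - m$, and dually $\dim(\mathcal S_1(\C)^\perp) = \dim(\C_{(q,m,(m-1)/2)}^\perp) + m$, using $\dim(\C)+\dim(\C^\perp)=n$.

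For $d(\mathcal S_1(\C)) \ge q^{(m-1)/2}+q+1$: the defining set now contains $T_{(q,m,(m-1)/2)}\cup C_1^{(q,N)}$. By Result~2 of Lemma~\ref{lem::7} it already contains $\{q^{m-1}+(q^{(m-1)/2}-1)i:\ 1\le i\le q^{(m-1)/2}\}$, an arithmetic progression with common difference $a=q^{(m-1)/2}-1$; note $\gcd(a,N)=1$ when $r=q-1$ (here one checks $\gcd(q^{(m-1)/2}-1, (q^m-1)/(q-1))$; since $(m-1)/2$ and $m$ are coprime when... — more carefully, $\gcd(q^{(m-1)/2}-1,q^m-1)=q^{\gcd((m-1)/2,m)}-1=q-1$, so the gcd with $n$ is $1$, matching the hypothesis of Lemma~\ref{lem::1} with $r=q-1$). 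The key new input is Lemma~\ref{lem::12}: it tells us that $\wt_q(q^{m-1}+(q^{(m-1)/2}-1)i)$ also equals $1+(q-1)(m-1)/2$ for $i=0$ only partially, but crucially identifies the value $1+(q-1)(m-1)$ at $i=q^{(m-1)/2}+1$ and values $1+(q-1)[(m-1)/2+v_q(\cdot)]$ for $q^{(m-1)/2}+2\le i\le 2q^{(m-1)/2}$. So I would try to extend the progression past $i=q^{(m-1)/2}$; the obstruction is exactly $i=q^{(m-1)/2}+1$, where $\wt_q=1+(q-1)(m-1)\ne 1+(q-1)(m-1)/2$, so that element lies outside $T_{(q,m,(m-1)/2)}$. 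This is where $C_1^{(q,N)}$ earns its keep: I expect that, after reindexing the progression or shifting $b$, the offending few positions get absorbed because the extra coset $C_1^{(q,N)}$ (equivalently $\beta,\beta^q,\dots,\beta^{q^{m-1}}$) plugs the gap, yielding a consecutive run of length $q^{(m-1)/2}+q$ (hence $d\ge q^{(m-1)/2}+q+1$). Pinning down precisely which $q$ consecutive indices straddle the gap and verifying they all lie in $T_{(q,m,(m-1)/2)}\cup C_1^{(q,N)}$ — using $\wt_q$-values from Lemma~\ref{lem::12} on one side and membership in $C_1^{(q,N)}$ (i.e. being a power of $q$ mod $N$) on the other — is the main obstacle.

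For $d(\mathcal S_1(\C)^\perp)\ge q^{(m-1)/2}$: I would invoke Lemma~\ref{lem::3}, so I need a progression of length $q^{(m-1)/2}-1$ inside $\Omega_{(r,n)}^{(1)}\setminus Z(\mathcal S_1(\C)) = \Omega_{(r,n)}^{(1)}\setminus\big(T_{(q,m,(m-1)/2)}\cup C_1^{(q,N)}\big)$. Lemma~\ref{lem::10} with $\ell=(m-1)/2$ already furnishes $\{[1+(q-1)i]\bmod N:\ -(q^{(m-1)/2}-1)\le i\le 2(q^{(m-1)/2}-1)/(q-1)-1\}\subseteq \Omega_{(r,n)}^{(1)}\setminus T_{(q,m,(m-1)/2)}$, a run of length $q^{(m-1)/2}-1 + 2(q^{(m-1)/2}-1)/(q-1)$ with common difference $q-1$. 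The only issue is that a handful of these elements might lie in the newly-removed coset $C_1^{(q,N)}=\{q^j \bmod N\}$; since $1=1+(q-1)\cdot 0$ is in the list and $1\in C_1^{(q,N)}$, the positive-$i$ tail is contaminated near $i=0$. So I would instead restrict attention to the negative-$i$ portion, $-(q^{(m-1)/2}-1)\le i\le -1$, which gives a clean run of length $q^{(m-1)/2}-1$ provided none of those elements is a $q$-power mod $N$; a short size/estimate argument (the $\wt_q$ of those elements is well above $1$ while each $q^j\bmod N$ has $\wt_q$ bounded, or a direct check that $[1+(q-1)i]\bmod N$ with $i<0$ cannot be a single power of $q$) closes the gap, and Lemma~\ref{lem::3} then delivers $d^\perp\ge q^{(m-1)/2}$.
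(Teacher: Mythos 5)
Your dimension count and your treatment of the dual bound are essentially what the paper does: for $d(\mathcal{S}_1(\C)^\perp)$ one takes exactly the negative-$i$ portion of Lemma~\ref{lem::10} with $\ell=(m-1)/2$, i.e.\ $\{q^m-(q-1)i:\ 1\le i\le q^{(m-1)/2}-1\}$, and checks these elements are not in $C_1^{(q,N)}$ (their $q$-weights are much larger than $1$), so that part of your plan closes. The genuine gap is in the primal bound, where you extend the arithmetic progression in the wrong direction. You propose to push past $i=q^{(m-1)/2}$ and hope that $C_1^{(q,N)}$ absorbs the obstruction at $i=q^{(m-1)/2}+1$. It cannot: that element is $q^{m-1}+(q^{(m-1)/2}-1)(q^{(m-1)/2}+1)=2q^{m-1}-1$, whose $q$-weight is $1+(q-1)(m-1)$ by Lemma~\ref{lem::12}, whereas every element of $C_1^{(q,N)}=\{1,q,\dots,q^{m-1}\}$ has $q$-weight $1$. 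So the hole just above $i=q^{(m-1)/2}$ is not plugged by the added coset, and no reindexing or shift of $b$ changes this.

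The element that $C_1^{(q,N)}$ actually plugs is the one at $i=0$, namely $q^{m-1}$ itself, which has $q$-weight $1$ and hence lies in $C_1^{(q,N)}$ but not in $T_{(q,m,(m-1)/2)}$. The correct move is therefore to extend the progression downward: by the first case of Lemma~\ref{lem::12}, for $-(q-1)\le i\le -1$ one has $v_q(-i)=0$, hence $\wt_q(q^{m-1}+(q^{(m-1)/2}-1)i)=1+(q-1)(m-1)/2$ and these $q-1$ elements lie in $T_{(q,m,(m-1)/2)}$. Together with $i=0$ (in $C_1^{(q,N)}$) and $1\le i\le q^{(m-1)/2}$ (in $T_{(q,m,(m-1)/2)}$ by Lemma~\ref{lem::7}), this yields the run $\{q^{m-1}+(q^{(m-1)/2}-1)i:\ -(q-1)\le i\le q^{(m-1)/2}\}\subseteq T_{(q,m,(m-1)/2)}\cup C_1^{(q,N)}$ of length $q^{(m-1)/2}+q$, and your gcd computation $\gcd(q^{(m-1)/2}-1,q^m-1)=q-1=r$ then lets the constacyclic BCH bound deliver $d\ge q^{(m-1)/2}+q+1$, which is how the paper argues.
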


\begin{proof}
Notice that $\wt_q(i)=1\neq 1+(q-1)[(m-1)/2]$ for $i\in C_1^{(q, N)}$. Then $$C_1^{(q, N)} \cap T_{(q, m, (m-1)/2)}=\emptyset.$$
Clearly, $|C_1^{(q, N)}|=m$. Then $$\dim(\mathcal{S}_1(\C))=N-|T_{(q, m, (m-1)/2)}|-m= \dim(\C_{(q, m, (m-1)/2)})-m$$
and
$$\dim(\mathcal{S}_1(\C)^\perp)=|T_{(q, m, (m-1)/2)}|+m= \dim(\C_{(q, m, (m-1)/2)}^\perp)+m.$$
	By Lemma \ref{lem::12},
	$$\left \{ q^{m-1}+(q^{(m-1)/2}-1)i: -(q-1)\leq i\leq q^{(m-1)/2} \right \} \subseteq T_{(q, m, (m-1)/2)} \cup C_1^{(q, N)}.$$
	By the constacyclic BCH bound, $d(\mathcal{S}_1(\C))\geq q^{(m-1)/2}+q+1$. By Lemma \ref{lem::10},
	$$\{q^m-(q-1)i: ~1\leq i\leq q^{(m-1)/2}-1  \} \subseteq \Omega_{(r, n)}^{(1)} \backslash T_{(q, m, (m-1)/2)}.  $$
	It is easily verified that $q^m-(q-1)i \notin C_1^{(q, N)}$ for each $1\leq i\leq q^{(m-1)/2}-1$. Therefore,
		$$\{q^m-(q-1)i: ~1\leq i\leq q^{(m-1)/2}-1  \} \subseteq \Omega_{(r, n)}^{(1)} \backslash (T_{(q, m, (m-1)/2)} \cup C_1^{(q, N)}). $$
	By Lemma \ref{lem::3}, $d(\mathcal{S}_1(\C)^\perp )\geq q^{(m-1)/2}$. This completes the proof.
\end{proof}

The parameters of the second class of subcodes of $\C_{(q, m, (m-1)/2)}$ are documented in the following theorem.

\begin{theorem}
	Let $m\geq 5$ be odd. Let ${\mathcal S}_2(\C)$ be the $q$-ary $\lambda$-constacyclic code of length $n$ with defining set $T_{(q, m, (m-1)/2)} \cup C_1^{(q, N)} \cup C_{2q^{m-1}-1}^{(q, N)}$ with respect to the primitive $N$-th root of unity $\beta$. Then ${\mathcal S}_2(\C)$ has parameters $$\left [(q^m-1)/(q-1), \dim(\C_{(q, m, (m-1)/2)})-2m, d\geq q^{(m-1)/2}+2q+1 \right ],$$
	and $\mathcal{S}_2(\C)^\perp$ has parameters
	$$\left [(q^m-1)/(q-1), \dim(\C_{(q, m, (m-1)/2)}^\perp)+2m, d^\perp \geq (q-1)q^{(m-3)/2}+1 \right ].$$
\end{theorem}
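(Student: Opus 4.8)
The plan is to mimic the proof of the previous theorem (the one for $\mathcal{S}_1(\C)$) almost verbatim, adjusting the bookkeeping of dimensions and the BCH-type intervals to account for the extra cyclotomic coset $C_{2q^{m-1}-1}^{(q,N)}$.

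\textbf{Dimension count.} First I would verify that the three sets $T_{(q,m,(m-1)/2)}$, $C_1^{(q,N)}$, and $C_{2q^{m-1}-1}^{(q,N)}$ are pairwise disjoint. Disjointness of $T_{(q,m,(m-1)/2)}$ from each coset follows from the $q$-weight invariant: every element of $T_{(q,m,(m-1)/2)}$ has $q$-weight $1+(q-1)(m-1)/2$, while $\wt_q(1)=1$ and $\wt_q(2q^{m-1}-1)=q-2+(q-1)=2q-3$, and for $m\geq 5$ none of these equals $1+(q-1)(m-1)/2$ (a small inequality check). Disjointness of $C_1^{(q,N)}$ from $C_{2q^{m-1}-1}^{(q,N)}$ again follows from $\wt_q$. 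Then, since $2q^{m-1}-1$ is not fixed by multiplication by $q$ modulo $N$ (so the coset has full size $m$), we get $|C_{2q^{m-1}-1}^{(q,N)}|=m$, hence $\dim(\mathcal S_2(\C))=\dim(\C_{(q,m,(m-1)/2)})-2m$ and $\dim(\mathcal S_2(\C)^\perp)=\dim(\C_{(q,m,(m-1)/2)}^\perp)+2m$ as claimed.

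\textbf{Lower bound on $d(\mathcal S_2(\C))$.} The key is to extend the consecutive-string argument. From Lemma \ref{lem::12}, the set $\{q^{m-1}+(q^{(m-1)/2}-1)i:\ -(q-1)\leq i\leq q^{(m-1)/2}\}$ lies in $T_{(q,m,(m-1)/2)}\cup C_1^{(q,N)}$ (this was already used for $\mathcal S_1$). To pick up two more consecutive values I would push $i$ down to $-(q+1)$ or up past $q^{(m-1)/2}+1$; using the values of $\wt_q(q^{m-1}+(q^{(m-1)/2}-1)i)$ given by Lemma \ref{lem::12} for $i=q^{(m-1)/2}+1$ (weight $1+(q-1)(m-1)$) and $i=q^{(m-1)/2}+2$ (weight $1+(q-1)[(m-1)/2+v_q(i-q^{(m-1)/2}-1)]$), and checking that $2q^{m-1}-1$ appears in the relevant coset, I expect to obtain a string of length $q^{(m-1)/2}+2q$ with arithmetic step $a=q^{(m-1)/2}-1$ contained in $T_{(q,m,(m-1)/2)}\cup C_1^{(q,N)}\cup C_{2q^{m-1}-1}^{(q,N)}$, whence the constacyclic BCH bound of Lemma \ref{lem::1} gives $d(\mathcal S_2(\C))\geq q^{(m-1)/2}+2q+1$. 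This identification of which multiples of $q^{(m-1)/2}-1$ reduce into the coset of $2q^{m-1}-1$ is the main obstacle: it requires tracking the $q$-adic digits carefully and is the only genuinely new computation.

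\textbf{Lower bound on $d(\mathcal S_2(\C)^\perp)$.} Here I would combine Lemma \ref{lem::10} (which gives a long string in $\Omega_{(r,n)}^{(1)}\backslash T_{(q,m,(m-1)/2)}$ built from multiples of $q-1$) with the observation that removing the two cosets $C_1^{(q,N)}$ and $C_{2q^{m-1}-1}^{(q,N)}$ shortens the usable string only slightly. Concretely I expect $\{q^m-(q-1)i:\ 1\leq i\leq (q-1)q^{(m-3)/2}-1\}$ — or an appropriate sub-interval avoiding the elements of the two removed cosets — to lie in $\Omega_{(r,n)}^{(1)}\backslash(T_{(q,m,(m-1)/2)}\cup C_1^{(q,N)}\cup C_{2q^{m-1}-1}^{(q,N)})$; one checks that $q^m-(q-1)i$ never lands in $C_1^{(q,N)}$ and lands in $C_{2q^{m-1}-1}^{(q,N)}$ for at most $O(1)$ values of $i$ that can be excluded from the left end of the range. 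Then Lemma \ref{lem::3} yields $d(\mathcal S_2(\C)^\perp)\geq (q-1)q^{(m-3)/2}+1$. The final write-up would just state these three strings explicitly, verify membership via the $q$-weight formulas of Lemmas \ref{lem::10} and \ref{lem::12}, and invoke Lemmas \ref{lem::1} and \ref{lem::3}.
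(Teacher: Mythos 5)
Your dimension count and your bound on $d(\mathcal{S}_2(\C))$ follow the paper's route: one extends the progression of Lemma \ref{lem::12} from $i=-(q-1)$ up to $i=q^{(m-1)/2}+q$, noting that the term at $i=q^{(m-1)/2}+1$ is exactly $q^{m-1}+(q^{(m-1)/2}-1)(q^{(m-1)/2}+1)=2q^{m-1}-1$ and the terms at $i=q^{(m-1)/2}+2,\dots,q^{(m-1)/2}+q$ have $v_q(i-q^{(m-1)/2}-1)=0$ and hence lie back in $T_{(q,m,(m-1)/2)}$; this gives a string of length $q^{(m-1)/2}+2q$ (so "pushing up" is the correct option, not "pushing down to $-(q+1)$", since $v_q(q)=1$ throws $i=-q$ out of the defining set). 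One small error: $\wt_q(2q^{m-1}-1)=1+(q-1)(m-1)$, not $2q-3$, since its $q$-adic digits are $(q-1,\dots,q-1,1)$; the disjointness you need still holds.

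The genuine gap is in your argument for $d(\mathcal{S}_2(\C)^\perp)$. The coset you must avoid is $C_{2q^{m-1}-1}^{(q,N)}=\{q^m-1-(q-2)q^j:\ 0\leq j\leq m-1\}$, and $q^m-(q-1)i$ lies in it precisely when $(q-1)i=1+(q-2)q^j$, i.e. $i=(1+(q-2)q^j)/(q-1)$, which is an integer for \emph{every} $j$ (since $q\equiv 1 \pmod{q-1}$) and lies in the range $1\leq i\leq q^{(m-1)/2}-1$ for all $j\leq (m-1)/2$. So there are about $(m+1)/2$ forbidden values of $i$, not $O(1)$, and they are geometrically spaced: the gap between consecutive forbidden values $i_j$ and $i_{j+1}$ is $(q-2)q^j\leq (q-2)q^{(m-3)/2}$, and the tail after the last one has length about $q^{(m-1)/2}/(q-1)$. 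Both are strictly smaller than the required run length $(q-1)q^{(m-3)/2}$ (the latter because $q<(q-1)^2$ for $q\geq 3$), and the region around $i=0$ is blocked by $C_1^{(q,N)}$ and by $q^m-q+1\in C_{2q^{m-1}-1}^{(q,N)}$, so no sub-interval of your progression of the needed length exists. The paper sidesteps this entirely by switching to a different arithmetic progression, namely $\{q^m-(q-1)q^{(m-3)/2}-(q^{(m+1)/2}-1)i:\ 0\leq i\leq (q-1)q^{(m-3)/2}-1\}$, which by Lemma \ref{lem::7} lies inside $T_{(q,m,(m+1)/2)}$; that whole set is disjoint from the defining set by the $q$-weight criterion, so no avoidance argument is needed and Lemma \ref{lem::3} applies directly. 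You would need to replace your dual-bound step by this (or some other) progression that misses the two added cosets wholesale.
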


\begin{proof}
Notice that $\wt_q(i)=1+(q-1)(m-1)$ for $i\in C_{2q^{m-1}-1}^{(q, N)}$. Then $$C_{2q^{m-1}-1}^{(q, N)} \cap (T_{(q, m, (m-1)/2)}\cup C_1^{(q, N)})=\emptyset.$$
It is easily verified that $|C_{2q^{m-1}-1}^{(q, N)}|=|C_{q-2}^{(q, N)}|=m$. Then $$\dim(\mathcal{S}_2(\C))=N-|T_{(q, m, (m-1)/2)}|-2m= \dim(\C_{(q, m, (m-1)/2)})-2m$$
and
$$\dim(\mathcal{S}_2(\C)^\perp)=|T_{(q, m, (m-1)/2)}|+2m= \dim(\C_{(q, m, (m-1)/2)}^\perp)+2m.$$
	By Lemma \ref{lem::12},
	$$\left \{ q^{m-1}+(q^{(m-1)/2}-1)i: -(q-1)\leq i\leq q^{(m-1)/2}+q \right \} \subseteq T_{(q, m, (m-1)/2)} \cup C_1^{(q, N)}\cup C_{2q^{m-1}-1}^{(q, N)}.$$
	By the constacyclic BCH bound, $d(\mathcal{S}_2(\C))\geq q^{(m-1)/2}+2q+1$. By Lemma \ref{lem::7},
	$$\{q^m-(q-1)i: ~1\leq i\leq q^{(m-1)/2}-1  \} \subseteq \Omega_{(r, n)}^{(1)} \backslash T_{(q, m, (m-1)/2)}.  $$
	It is easily verified that $q^m-(q-1)i \notin C_1^{(q, N)}$ for each $1\leq i\leq q^{(m-1)/2}-1$. Therefore,
		$$\{q^m-(q-1)q^{(m-3)/2}-(q^{(m+1)/2}-1)i: 0\leq i\leq (q-1)q^{(m-3)/2}-1  \} \subseteq  T_{(q, m, (m+1)/2)}. $$
	By Lemma \ref{lem::3}, $d(\mathcal{S}_2(\C)^\perp )\geq (q-1)q^{(m-3)/2}+1$. This completes the proof.
\end{proof}

The parameters of the third class of subcodes of $\C_{(q, m, \ell )}$ are documented in the following theorem.

\begin{theorem}
	Let $m\geq 4$ and $0\leq \ell \leq \lfloor(m-2)/2\rfloor$. Let ${\mathcal S}_{3,\ell}(\C)$ be the $q$-ary $\lambda$-constacyclic code of length $n$ with defining set $T_{(q, m, \ell)} \cup T_{(q, m, m-1-\ell)}$ with respect to the primitive $N$-th root of unity $\beta$. Then ${\mathcal S}_{3,\ell}(\C)$ has parameters
	$$\left [(q^m-1)/(q-1), (q^m-1)/(q-1)-k, \geq d \right ],$$
	and the dual code $\mathcal{S}_{3, \ell}(\C)^\perp$ has parameters $\left [(q^m-1)/(q-1), k, \geq d^\perp  \right ]$, where
	$$k=\sum_{h=0}^{m}(-1)^h \binom{m}{h}\left[ \binom{(q-1)\ell-h q+m}{1+(q-1)\ell-h q} +\binom{(q-1)(m-1-\ell)-h q+m}{1+(q-1)(m-1-\ell)-h q}\right],$$
	$$d=\begin{cases}
(q-1)q^\ell+1~&{\rm if}~0\leq \ell \leq (m-5)/2,	\\
(q-1)q^\ell+1~&{\rm if}~m~{\rm is ~odd~ and~}\ell = (m-3)/2,	\\
(q-1)q^\ell+1~&{\rm if}~m\equiv 0~({\rm mod}~4)~{\rm and}~\ell = (m-4)/2,	\\
q^\ell+1~&{\rm if}~m\equiv 2~({\rm mod}~4)~{\rm and}~\ell = (m-4)/2,\\
q^\ell+1~&{\rm if}~m\equiv 0~({\rm mod}~4)~{\rm and}~\ell = (m-2)/2,\\	
q^{\ell-1}+1~&{\rm if}~m\equiv 2~({\rm mod}~4)~{\rm and}~\ell = (m-2)/2,\\
\end{cases}$$
and
$$d^\perp=\begin{cases}
q^{(m-1)/2}+1~&{\rm if}~m~{\rm is~odd},	\\
q^{(m-2)/2}+1~&{\rm if}~m\equiv 0~({\rm mod}~4)~{\rm and}~0\leq \ell \leq (m-4)/2,	\\
(q-1)q^{(m-4)/2}+1~&{\rm otherwise}.
\end{cases}$$
\end{theorem}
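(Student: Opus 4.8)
The plan is to prove the dimension formula first, then the lower bound on $d(\mathcal{S}_{3,\ell}(\C))$, and finally the more delicate lower bound on $d(\mathcal{S}_{3,\ell}(\C)^\perp)$. For the dimension, note that $\wt_q$ is constant on each $q$-cyclotomic coset, equal to $1+(q-1)\ell$ on $T_{(q,m,\ell)}$ and to $1+(q-1)(m-1-\ell)$ on $T_{(q,m,m-1-\ell)}$, and that $\ell\le\lfloor(m-2)/2\rfloor<(m-1)/2$ forces $\ell\ne m-1-\ell$; hence these two sets are disjoint, so the defining set of $\mathcal{S}_{3,\ell}(\C)$ has cardinality $|T_{(q,m,\ell)}|+|T_{(q,m,m-1-\ell)}|$. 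By the definition of the defining set, $\dim(\mathcal{S}_{3,\ell}(\C)^\perp)$ equals this cardinality and $\dim(\mathcal{S}_{3,\ell}(\C))=n$ minus it; substituting $\ell$ and $m-1-\ell$ into the formula of Lemma \ref{lem::5} and adding yields the stated $k$.

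For $d(\mathcal{S}_{3,\ell}(\C))$, observe that $T_{(q,m,m-1-\ell)}$ is contained in the defining set of $\mathcal{S}_{3,\ell}(\C)$, so Lemma \ref{LL::6}(3) gives $\mathcal{S}_{3,\ell}(\C)\subseteq\C_{(q,m,m-1-\ell)}$ and hence $d(\mathcal{S}_{3,\ell}(\C))\ge d(\C_{(q,m,m-1-\ell)})\ge d_{(q,m,m-1-\ell)}$ by Theorem \ref{Thm::11}. It then remains only a bookkeeping check---splitting on the residue of $m$ modulo $4$ and on where $m-1-\ell$ falls among the ranges appearing in Theorem \ref{Thm::11}---to verify that $d_{(q,m,m-1-\ell)}$ is precisely the piecewise value of $d$ in the statement (in each case it is the larger of $d_{(q,m,\ell)}$ and $d_{(q,m,m-1-\ell)}$). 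Equivalently, one can re-derive this directly from the constacyclic BCH bound using the inclusions of Lemmas \ref{lem::7}, \ref{lem::8} and \ref{lem::9} applied with the parameter $m-1-\ell$ in place of $\ell$, since those arithmetic progressions all lie inside $T_{(q,m,m-1-\ell)}$.

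For $d(\mathcal{S}_{3,\ell}(\C)^\perp)$ we use Lemma \ref{lem::3}: it suffices to exhibit an arithmetic progression of length $d^\perp-1$, with common difference $a$ satisfying $\gcd(a,N)=q-1$ and base $\equiv 1\pmod{q-1}$, contained in $\Omega_{(r,n)}^{(1)}\setminus(T_{(q,m,\ell)}\cup T_{(q,m,m-1-\ell)})=\bigcup_{j\notin\{\ell,\,m-1-\ell\}}T_{(q,m,j)}$. When $m$ is odd, the progression $\{q^{m-1}+(q^{(m-1)/2}-1)i:1\le i\le q^{(m-1)/2}\}$ lies in $T_{(q,m,(m-1)/2)}$ by Lemma \ref{lem::7}(2), and $(m-1)/2\notin\{\ell,m-1-\ell\}$ because $\ell\le(m-3)/2$; since $\gcd((m-1)/2,m)=1$ this gives $d^\perp\ge q^{(m-1)/2}+1$. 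When $m\equiv 0\pmod 4$ and $\ell\le(m-4)/2$, the progression $\{q^{m-1}+(q^{(m-2)/2}-1)i:1\le i\le q^{(m-2)/2}\}\subseteq T_{(q,m,(m-2)/2)}$ from Lemma \ref{lem::8}(2) works, since $(m-2)/2\notin\{\ell,m-1-\ell\}$. When $m\equiv 0\pmod 4$ and $\ell=(m-2)/2$, the progression $\{q^m-(q-1)q^{(m-4)/2}-(q^{(m+2)/2}-1)i:0\le i\le(q-1)q^{(m-4)/2}-1\}\subseteq T_{(q,m,(m+2)/2)}$ furnished by Lemma \ref{lem::8}(4) has length $(q-1)q^{(m-4)/2}$ and $(m+2)/2\notin\{(m-2)/2,m/2\}$, giving $d^\perp\ge(q-1)q^{(m-4)/2}+1$. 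In each case one checks $\gcd(a,N)=q-1$ for the relevant step $a=q^h-1$ (with $h\in\{(m-1)/2,(m-2)/2,(m+2)/2\}$, each coprime to $m$) and that the base is $\equiv 1\pmod{q-1}$, so Lemma \ref{lem::3} applies.

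The hard part is the remaining case $m\equiv 2\pmod 4$, where the target is a progression of length $(q-1)q^{(m-4)/2}$ avoiding both weight classes $1+(q-1)\ell$ and $1+(q-1)(m-1-\ell)$. The progressions that Lemma \ref{lem::9} places inside a single admissible class $T_{(q,m,j)}$ (with $j\notin\{\ell,m-1-\ell\}$) have length only $q^{(m-4)/2}$ or $(q-1)q^{(m-6)/2}$, roughly a factor $q$ short, and they do not visibly join into one longer progression. I expect this to require a careful case analysis on $\ell$, pasting together several arithmetic segments supported on distinct admissible classes $T_{(q,m,j)}$ (in the spirit of the multi-case proofs of Lemmas \ref{lem::8}, \ref{lem::9} and of Theorem \ref{thm:8}), or else a fresh Lemma-\ref{lem::10}-type estimate of $\wt_q$ along a progression $\{[b+(q-1)i]\bmod N\}$ with a suitably chosen base $b$ of large $q$-weight; certifying that such a progression never meets either forbidden weight over its whole length is where the technical difficulty lies.
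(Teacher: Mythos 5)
Your treatment of the dimension and of $d(\mathcal{S}_{3,\ell}(\C))$ matches the paper's proof in substance: the paper writes $\mathcal{S}_{3,\ell}(\C)=\C_{(q,m,\ell)}\cap\C_{(q,m,m-1-\ell)}$ and bounds $d$ below by $\max\{d(\C_{(q,m,\ell)}),d(\C_{(q,m,m-1-\ell)})\}$ via Theorem \ref{Thm::11}, which after the bookkeeping you describe is exactly $d_{(q,m,m-1-\ell)}$, so your containment argument is equivalent. Your three completed cases for the dual bound ($m$ odd; $m\equiv 0\pmod 4$ with $\ell\le(m-4)/2$; $m\equiv 0\pmod 4$ with $\ell=(m-2)/2$) also coincide with the paper's: the same arithmetic progressions from Lemmas \ref{lem::7} and \ref{lem::8}, placed inside a single weight class $T_{(q,m,j)}$ with $j\notin\{\ell,m-1-\ell\}$, followed by Lemma \ref{lem::3}; your checks that $\gcd(q^h-1,N)=q-1$ for $h\in\{(m-1)/2,(m-2)/2,(m+2)/2\}$ and that the relevant index $j$ avoids $\{\ell,m-1-\ell\}$ are correct.

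The genuine gap is the case $m\equiv 2\pmod 4$ of the dual bound, which you explicitly leave unresolved; as written, your argument does not prove the theorem. For the record, the paper disposes of this case in two lines: it notes that at least one of $T_{(q,m,(m-2)/2)}$ and $T_{(q,m,(m-4)/2)}$ lies in $\Omega_{(r,n)}^{(1)}\setminus(T_{(q,m,\ell)}\cup T_{(q,m,m-1-\ell)})$ (true, since $\ell\le(m-2)/2$ while $m-1-\ell\ge m/2$) and then cites Lemmas \ref{lem::3} and \ref{lem::9} to conclude $d^\perp\ge(q-1)q^{(m-4)/2}+1$. Your diagnosis of the difficulty is accurate: the progressions that Lemma \ref{lem::9} supplies inside those two classes have length only $q^{(m-4)/2}$ (and the other items give at most $2q^{(m-6)/2}$ or $(q-1)q^{(m-6)/2}$), so a direct application yields only $d^\perp\ge q^{(m-4)/2}+1$, a factor of $q-1$ short of the stated constant; the paper does not explain how the longer progression is assembled. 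So while the paper's own justification of this subcase is itself opaque, your proposal still needs either a genuinely longer progression avoiding both forbidden weights or some other argument to close this case, and without it the proof is incomplete.
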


\begin{proof}
It is clear that $\dim({\mathcal S}_{3,\ell}(\C))=n-|T_{(q, m, \ell)}|-|T_{(q, m, m-1-\ell)}|$ and
$$\dim(\mathcal{S}_{3,\ell}(\C))=|T_{(q, m, \ell)}|+|T_{(q, m, m-1-\ell)}|.$$
The desired dimensions follow from Lemma \ref{lem::5}. By Lemma \ref{LL::6}, $$\mathcal{S}_{3, \ell}(\C)=\C_{(q, m, \ell)} \cap \C_{(q, m, m-1-\ell)}.$$
It follows that $d(\mathcal{S}_{3, \ell}(\C))\geq \max\{d(\C_{(q, m, \ell)}), d(\C_{(q, m, m-1-\ell)})  \}$. The desired lower bounds on minimum distances of $\mathcal{S}_{3, \ell}(\C)$ follow from Theorem \ref{Thm::11}. We now lower bound the minimum distance of $\mathcal{S}_{3, \ell}(\C)^\perp$.

{\it Case 1}: $m$ is odd. Notice that $T_{(q, m, (m-1)/2)}\subseteq \Omega_{(r, n)}^{(1)}\backslash (T_{(q, m, \ell)} \cup T_{(q, m, m-1-\ell)})$. By Lemma \ref{lem::7}, we get that
$\{q^{m-1}+(q^{(m-1)/2}-1)i: ~1\leq i\leq q^{(m-1)/2} \}\subseteq \Omega_{(r, n)}^{(1)}\backslash (T_{(q, m, \ell)} \cup T_{(q, m, m-1-\ell)}) $. By Lemma \ref{lem::3}, we obtain that $d(\mathcal{S}_{3,\ell}(\C)^\perp)\geq q^{(m-1)/2}+1$.

{\it  Case 2}: $m\equiv 0~({\rm mod}~4)$. If $0\leq \ell \leq (m-4)/2$, by Lemma \ref{lem::8},
$$\{q^{m-1}+(q^{(m-2)/2}-1)i: ~1\leq i\leq q^{(m-2)/2} \}\subseteq \Omega_{(r, n)}^{(1)}\backslash (T_{(q, m, \ell)} \cup T_{(q, m, m-1-\ell)}) .$$ By Lemma \ref{lem::3}, we get that $d(\mathcal{S}_{3,\ell}(\C)^\perp)\geq q^{(m-2)/2}+1$. If $\ell=(m-2)/2$, $$T_{(q, m, (m+2)/2)}\subseteq \Omega_{(r, n)}^{(1)}\backslash (T_{(q, m, \ell)} \cup T_{(q, m, m-1-\ell)}).$$
By Lemmas \ref{lem::3} and \ref{lem::8}, we get that $d(\mathcal{S}_{3,\ell}(\C)^\perp)\geq (q-1)q^{(m-4)/2}+1$.

{\it Case 3}: $m\equiv 2~({\rm mod}~4)$. It is easily verified that
 $$T_{(q, m, (m-2)/2)}\subseteq \Omega_{(r, n)}^{(1)}\backslash (T_{(q, m, \ell)} \cup T_{(q, m, m-1-\ell)})$$
 or  $T_{(q, m, (m-4)/2)}\subseteq \Omega_{(r, n)}^{(1)}\backslash (T_{(q, m, \ell)} \cup T_{(q, m, m-1-\ell)})$. By Lemmas \ref{lem::3} and \ref{lem::9}, we deduce that $$d(\mathcal{S}_{3,\ell}(\C)^\perp)\geq (q-1)q^{(m-4)/2}+1.$$
This completes the proof.
\end{proof}

\begin{example}
	Let $q=3$ and $m=4$. Then $n=(q^m-1)/(q-1)=40$. Let $\beta$ be a primitive element of $\gf(q^4)$ with $\beta^4+2\beta^3+2=0$. Then $\mathcal{S}_{3,0}(\C)$ is a ternary negacyclic code of length $n$ and generator polynomial $g(x)=x^8 + 2x^7 + x^5 + x^3 + 2x + 1$. By Magma, $\mathcal{S}_{3,0}(\C)$ is a ternary $[40,32,5]$ linear code, which is a distance-optimal code \cite{Grassl}. The dual code $\mathcal{S}_{3,0}(\C)^\perp$ is a ternary $[40,8,21]$ linear code, which has the best parameters known \cite{Grassl}.
\end{example}

\begin{example}
	Let $q=4$ and $m=4$. Then $n=(q^m-1)/(q-1)=85$. Let $\beta$ be a primitive element of $\gf(q^4)$ with $\beta^4 + \beta^3 + \omega \beta^2 + \omega \beta +\omega=0$, where $\omega$ is a primitive element of $\gf(4)$ with $\omega^2+\omega+1=0$.
	Then $\mathcal{S}_{3,0}(\C)$ is a quaternary $\omega$-constacyclic code of length $n$ and generator polynomial $g(x)=x^{14} + x^{13} + \omega x^{12} + x^{10} + \omega^2 x^9 + x^8 + \omega^2 x^7 + x^6 + x^5 + x^3 + \omega x^2 + x + \omega^2$. By Magma, $\mathcal{S}_{3,0}(\C)$ is a quaternary $[85,71,7]$ linear code, which has the best parameters known \cite{Grassl}. The dual code $\mathcal{S}_{3,0}(\C)^\perp$ is a quaternary $[85,14,46]$ linear code. The best quaternary linear code of length $85$ and dimension $14$ has minimum distance $48$ \cite{Grassl}.
\end{example}

When $q=3$ and $m$ is even, the fourth class of subcodes of $\C_{(q, m, \ell)}$ are studied in the following theorem.

\begin{theorem}
Let $m\geq 4$ be even. Let $\Xi_i=\{i, m-1-i\}$ and $j_i \in \Xi_i$ for each $0\leq i \leq (m-2)/2$. Let $\mathcal{S}_{4}(\C)$ be the ternary negacyclic code of length $n$ and defining set $$Z:=T_{(q, m, j_0)}\cup T_{(q, m, j_1)}\cup \cdots \cup T_{(q, m, j_{(m-2)/2})}.$$
Then $\mathcal{S}_4(\C)$ is a ternary $[(3^m-1)/2,(3^m-1)/4, \geq d ]$ self-dual code, where
$$d=\begin{cases}
3^{(m-2)/2}+3~&{\rm if}~m\equiv 0~({\rm mod}~4),\\
3^{(m-4)/2}+3~&{\rm if}~m\equiv 2~({\rm mod}~4),\\		
\end{cases}
$$
\end{theorem}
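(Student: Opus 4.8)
The plan is to establish three things in order: (i) that $\mathcal{S}_4(\C)$ is self-dual, (ii) its dimension, and (iii) the lower bound on its minimum distance. For self-duality, note that the defining set $Z$ picks, for each pair $\Xi_i = \{i, m-1-i\}$ with $0 \le i \le (m-2)/2$, exactly one of the two values $j_i$. By Theorem \ref{thm6} applied with $q=3$, the map $i \mapsto q^m - 1 - i$ (equivalently, taking reciprocal polynomials) sends $T_{(3,m,\ell)}$ to $T_{(3,m,m-1-\ell)}$; since $Z$ contains precisely one of $T_{(3,m,\ell)}$, $T_{(3,m,m-1-\ell)}$ for each pair, the complement $\Omega_{(r,n)}^{(1)} \setminus Z$ is exactly the image of $Z$ under $i \mapsto 3^m - 1 - i$. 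Combined with Lemma \ref{lem::2} (the dual of a $\lambda$-constacyclic code with check polynomial $h$ has generator polynomial $\widehat h$, and for ternary negacyclic codes $\lambda = \lambda^{-1} = -1$), this forces $\mathcal{S}_4(\C)^\perp = \mathcal{S}_4(\C)$. The dimension $(3^m-1)/4$ is then immediate: $|Z| = \sum_{i=0}^{(m-2)/2}|T_{(3,m,j_i)}|$, and since $\{T_{(3,m,\ell)}\}_{\ell=0}^{m-1}$ partitions $\Omega_{(r,n)}^{(1)}$ of size $n = (3^m-1)/2$, and each pair contributes $|T_{(3,m,i)}| + |T_{(3,m,m-1-i)}|$ summing to $n$ over all pairs, we get $|Z| = n/2 = (3^m-1)/4$, so $\dim(\mathcal{S}_4(\C)) = n - |Z| = (3^m-1)/4$.

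For the minimum distance, the idea is to show that regardless of the choices $j_i \in \Xi_i$, the set $Z$ always contains a ``middle'' block of cyclotomic cosets guaranteeing a long run of consecutive exponents in an arithmetic progression, then apply the constacyclic BCH bound (Lemma \ref{lem::1}). The key observation: for each pair $\Xi_i$, whichever of $T_{(3,m,i)}$ or $T_{(3,m,m-1-i)}$ is chosen, $Z$ must contain one of them; in particular, for the pair containing the central indices, $Z$ contains either $T_{(3,m,(m-2)/2)}$ or $T_{(3,m,m/2)}$. When $m \equiv 0 \pmod 4$, I would invoke Lemma \ref{lem::8}, whose parts 2 and 3 give that both $T_{(3,m,(m-2)/2)}$ and $T_{(3,m,m/2)}$ contain sets of the shape $\{q^{m-1} + (q^{(m-2)/2}-1)i : 1 \le i \le q^\ell\}$ respectively $\{2q^{m-1}-1-(q^{(m-2)/2}-1)i: 1 \le i \le q^{(m-2)/2}\}$; combining one of these with the adjacent $T_{(3,m,\ell)}$ blocks (which are also forced into $Z$ or its complement appropriately) should yield a consecutive run of length $\geq 3^{(m-2)/2}+2$ in a suitable arithmetic progression with common difference $q^{(m-2)/2}-1$, giving $d \geq 3^{(m-2)/2}+3$. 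When $m \equiv 2 \pmod 4$, I would use Lemma \ref{lem::9} in the same way, where the relevant central blocks $T_{(3,m,\ell)}$ for $\ell \in \{(m-4)/2, (m-2)/2, m/2, (m+2)/2\}$ each contain progressions with common difference $q^{(m-4)/2}-1$, yielding $d \geq 3^{(m-4)/2}+3$.

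The main obstacle I anticipate is the case analysis for the minimum distance bound: the set $Z$ depends on $2^{m/2}$ possible choices of the $j_i$, so one must argue uniformly. The cleanest route is probably to observe that \emph{no matter the choices}, the union $Z$ together with its ``reflection'' covers everything, so for any fixed target progression of length $\delta$ lying entirely inside some $T_{(3,m,\ell_0)}$ with $\ell_0 < (m-2)/2$ (say), either that progression is in $Z$ or its reflection is in $Z$ — and by the symmetry established in part (i), $d(\mathcal{S}_4(\C))$ equals $d$ of its reflected code, so we may assume the favorable case. Thus it suffices to exhibit, for \emph{one} choice of the $j_i$ (e.g., $j_i = i$ for all $i$, giving $Z = \bigcup_{i=0}^{(m-2)/2} T_{(3,m,i)} = D_{(3,m,(m-2)/2)}$ by Equation (\ref{EEQQ0})), a progression of the required length — and here $\mathcal{S}_4(\C) = \CPRM(3, m, m-2-(m-2)/2) = \CPRM(3,m,(m-2)/2)$ is a constacyclic projective Reed–Muller code, so by Lemma \ref{SDW} its distance is exactly $3 \cdot 3^{(m-2)/2} = 3^{m/2}$, which comfortably exceeds both claimed bounds. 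Since all choices of $j_i$ give codes related by the distance-preserving reflection or by taking complements within a fixed pair, a short argument reduces the general case to these, and one only needs to check that swapping $j_i \leftrightarrow m-1-j_i$ for a single $i < (m-2)/2$ does not decrease the distance below the stated $d$; this is where I would do the explicit progression-chasing with Lemmas \ref{lem::8} and \ref{lem::9} for the innermost pair, which remains the technical heart of the argument.
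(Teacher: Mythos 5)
Your self-duality and dimension arguments are correct and match the paper's: the reflection $i\mapsto N-i$ sends $T_{(3,m,\ell)}$ to $T_{(3,m,m-1-\ell)}$, so $Z$ and $-Z$ partition $\Omega_{(r,n)}^{(1)}$, which gives self-duality and $|Z|=n/2$.

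The minimum-distance part has a genuine gap, and you have identified its location yourself but not the idea that closes it. Your two proposed routes both fail. First, you cannot extend the BCH progression from the central block into ``adjacent'' blocks: the only block forced into $Z$ independently of the choices is one of $T_{(3,m,(m-2)/2)}$, $T_{(3,m,m/2)}$ (since $\Xi_{(m-2)/2}=\{(m-2)/2,\,m/2\}$); every other $T_{(3,m,j_i)}$ can be on either side depending on the choice of $j_i$, so no longer run is available uniformly. Second, the reduction to the case $j_i=i$ (the projective Reed--Muller code) does not go through: the $2^{m/2}$ choices of $(j_0,\dots,j_{(m-2)/2})$ produce genuinely different codes that are \emph{not} related by the reflection (which only swaps \emph{all} $j_i$ simultaneously with $m-1-j_i$) nor by any distance-preserving map you have established; your ``short argument'' reducing one choice to another is exactly the step that does not exist. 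The paper's proof instead uses only the forced central block: Lemma \ref{lem::8} (resp.\ \ref{lem::9}) exhibits inside $T_{(3,m,(m-2)/2)}$ and inside $T_{(3,m,m/2)}$ an arithmetic progression of $3^{(m-2)/2}$ (resp.\ $3^{(m-4)/2}$) consecutive terms with common difference $3^{(m-2)/2}-1$ (resp.\ $3^{(m-4)/2}-1$), so the constacyclic BCH bound already gives $d\geq 3^{(m-2)/2}+1$ (resp.\ $3^{(m-4)/2}+1$) for \emph{every} admissible $Z$. The missing ingredient is then the divisibility theorem for ternary self-dual codes (\cite[Theorem 1.4.5]{HP2003}): all weights, hence $d$, are divisible by $3$, and the smallest multiple of $3$ exceeding $3^{(m-2)/2}$ (resp.\ $3^{(m-4)/2}$) is $3^{(m-2)/2}+3$ (resp.\ $3^{(m-4)/2}+3$). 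Without this divisibility step your BCH estimate falls short of the stated bound by $2$.
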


\begin{proof}
Recall $N=3^m-1$. Notice that $\wt_3(N-i)=1+2(m-1-\ell)$ for any $i\in T_{(3, m, \ell)}$. For $S\subset \Z_{N}$, define $-S=\{N-i: ~i\in S\}$. It is easily verified that $-T_{(3, m, \ell)}=T_{(3, m, m-1-\ell)}$. Consequently, $$-Z=T_{(q, m, m-1-j_0)}\cup T_{(q, m, m-1-j_1)}\cup \cdots \cup  T_{(q, m, m-1-j_{(m-2)/2})}.$$
By definition, $-Z \cup Z=\emptyset$ and $Z \cup -Z =\{1+2i:~0\leq i\leq n-1\}$. It follows that $\mathcal{S}_4(\C)$ is a self-dual code. Notice that $T_{(3, m, (m-2)/2)} \subseteq Z$ or $T_{(3, m, m/2)} \subseteq Z$. By Lemmas \ref{lem::8}, \ref{lem::9} and the constacyclic BCH bound,
$d(\mathcal{S}_4(\C))\geq d-1$. According to \cite[Theorem 1.4.5]{HP2003}, $d(\mathcal{S}_4(\C))\equiv 0~({\rm mod}~3)$. The desired lower bounds follow. This completes the proof.
\end{proof}

When $j_i=i$ for any $0\leq i\leq (m-2)/2$, the ternary negacyclic code $$\mathcal{S}_4(\C)=\CPRM(3, m, (m-2)/2).$$ By Lemma \ref{SDW}, $\mathcal{S}_4(\C)$ is a ternary $[(3^m-1)/2, (3^m-1)/4, 3^{m/2}]$ self-dual code. The following example illustrates that  $\mathcal{S}_4(\C)$ and $\C_{(1,n)}$, as well as the ternary self-dual codes constructed in \cite{XCDS} are different.

\begin{example}
Let $q=3$ and $m=4$. Then $n=(q^m-1)/(q-1)=40$. Let $\beta$ be a primitive element of $\gf(q^4)$ with $\beta^4 + 2 \beta^3 + 2=0$. Let $j_0=0$ and $j_1=2$. Then $\mathcal{S}_4(\C)$ has the defining set
$$\{ 1, 3, 9, 17, 23, 25, 27, 35, 41, 43, 47, 49, 51, 59, 61, 65, 67, 69, 73, 75 \}$$
with respect to the $\beta$. The defining set of ternary self-dual codes in \cite{XCDS} has a long continuous sequence of arithmetic progression. By Magma, $\mathcal{S}_4(\C)$ is a ternary $[40,20,9]$ self-dual code. Although $\mathcal{S}_4(\C)$ and $\C_{(1,n)}$ have the same parameters, the order of the automorphism group of $\mathcal{S}_4(\C)$ is $11520$, while the order of the automorphism group of $\C_{(1,n)}$ is $24261120$. Hence,  $\mathcal{S}_4(\C)$ and $\C_{(1,n)}$ are different. 	
\end{example}

\section{Concluding remarks}\label{sec6}

Two infinite classes of constacyclic codes over finite fields were constructed in this paper. These classes of constacyclic codes are interesting as they contain distance-optimal codes and linear codes with best known parameters (see Tables \ref{DS:1} and \ref{Tab-code1}). A summary of the main contributions of this paper goes as follows.

\begin{enumerate}
\item An infinite class of $q$-ary negacyclic codes $\C$ of length $(q^m-1)/2$ such that $d(\C)$ and $d(\C^\perp)$ both have a square-root-like lower bound was constructed. 	
\item An infinite class of $q$-ary constacyclic codes $\C$ of length $(q^m-1)/(q-1)$ such that $d(\C)$ and $d(\C^\perp)$ both have a square-root-like lower bound was constructed.

\item Two infinite classes of ternary negacyclic self-dual codes of length $(3^m-1)/2$ with a square-root-like lower bound were constructed.
\end{enumerate}
The lower bounds on the minimum distances of two classes of constacyclic codes and their duals developed in this paper are good. The reader is cordially  invited to improve the lower bounds or settle the minimum distances of these two classes of constacyclic codes.


\begin{thebibliography}{99}

\bibitem{DDR11}
D. Danev, S. Dodunekov, D. Radkova, A family of constacyclic ternary quasi-perfect codes with covering radius $3$, \emph{Des. Codes Cryptogr.} 59 (2011), 111-118.

\bibitem{DGM70}
P. Delsarte, J. M. Goethals, F. J. MacWilliams, On generalized Reed-Muller codes and their relatives, \emph{Inf. Control.} 16(5) (1970), 403-442.

\bibitem{self-dual}
P. Gaborit, Table of self-dual codes over $\gf(3)$, Online available at https://www.unilim.fr/pages\_perso/philippe.gaborit/SD/GF3/GF3.htm.

\bibitem{Geor82}
J. Georgiades, Cyclic $(q+1, k)$-codes of odd order $q$ and even dimension $k$ are not optimal, \emph{Atti Sent. Mat. Fis. Univ. Modena} 30 (1982), 284-285.

\bibitem{Grassl}
M. Grassl, Bounds on the minimum distance of linear codes and quantum codes, Online available at http://www.codetables.de.

\bibitem{CW}

H. Chen, Y. Wu, Cyclic and negacyclic codes with optimal and best known minimum distances, arXiv:2401.06184v3, 2024.


\bibitem{HP2003}
W. C. Huffman, V. Pless, {\it Fundamentals of error correcting codes}. Cambridge, U.K.: Cambridge Univ. Press, 2003.

\bibitem{jlx}
Y. Jia, S. Ling, C. Xing, On self-dual cyclic codes over finite fields, \emph{IEEE Trans. Inf. Theory}, 57(4) (2011), 2243-2251.

\bibitem{KLP}
T. Kasami, S. Lin, W. W. Peterson, New generalization of the Reed-Muller codes--Part I: Primitive Codes, \emph{IEEE Trans. Inf. Theory} 14(2) (1968), 189-199.


\bibitem{KS90}
A. Krishna, D. V. Sarwate, Pseudocyclic maximum-distance-separable codes, \emph{IEEE Trans. Inf. Theory} 36(4) (1990), 880-884.


\bibitem{Lachaud}
G. Lachaud, Projective Reed-Muller codes. In: Coding Theory and Applications. Lecture Notes in Computer Science, vol. 311, pp. 125-129. Springer, Berlin (1988).


\bibitem{Muller54}
D. E. Muller, Application of boolean algebra to switching circuit design and to error detection, \emph{IEEE Trans. Computers} 3 (1954), 6-12.

\bibitem{Reed54}
I. S. Reed, A class of multiple-error-correcting codes and the decoding scheme, \emph{IRE Trans. Inf. Theory} 4 (1954), 38-49.


\bibitem{WDLZ}

X. Wang, C. Ding, H. Liu, D. Zheng, MDS constacyclic codes of length $q+1$ over $\gf(q)$, {\it Cryptogr. Commun.} 16 (2024), 21-48.


\bibitem{Sorensen}
A. S\o rensen, Projective Reed-Muller codes, \emph{IEEE Trans. Inf. Theory} 37(6) (1991), 1567-1576.

\bibitem{SD22c}
Z. Sun, C. Ding, Several families of ternary irreducible negacyclic codes and their duals, \textit{IEEE Trans. Inf. Theory} 70(5) (2024),  3226--3241.

\bibitem{SDW} Z. Sun, C. Ding, X. Wang, Two classes of constacyclic codes with variable parameters $[(q^m-1)/r, k, d]$, \textit{IEEE Trans. Inf. Theory} 70(1), (2024), 93-114.

\bibitem{SHZ23}
Z. Sun, S. Huang, S. Zhu, Optimal quaternary Hermitian LCD codes and their related codes, \emph{Des. Codes Cryptogr.} 91, (2023), 1527-1558.


\bibitem{WW}
E. Weldon, New generalizations of the Reed-Muller codes--II: Nonprimitive codes, \textit{IEEE Trans. Inf. Theory} 14(2) (1968), 199-205.


\bibitem{WSD}
Y. Wu, Z. Sun, C. Ding, Another infinite family of binary cyclic codes with best parameters known, \textit{IEEE Trans. Inf. Theory} DOI: 10.1109/TIT.2023.3310500.

\bibitem{XCDS}
C. Xie, H. Chen, C. Ding, Z. Sun, Self-dual negacyclic codes with variable lengths and square-root-like lower bounds on the minimum distances, \textit{IEEE Trans. Inf. Theory} DOI: 10.1109/TIT.2024.3380613.

\bibitem{ZKZL}
Y. Zhou, X. Kai,  S. Zhu, J. Li, On the minimum distance of negacyclic codes with two zeros, \emph{Finite Fields Appl.} 55 (2019), 134-150.

\end{thebibliography}
\end{document}